\documentclass[pdflatex,sn-mathphys-num]{sn-jnl}

\usepackage{bm}
\usepackage{graphicx}
\usepackage{amssymb}
\usepackage{amsmath, amscd}  
\usepackage{amsthm}
\newtheorem{theorem}{Theorem}

\newcommand{\R}{\mathbb{R}}

\newcommand{\ket}[1]{|#1\rangle}

\makeatletter
\@ifundefined{orcidlogo}
  {\newcommand{\orcidlogo}{}}
  {\renewcommand{\orcidlogo}{}}
\makeatother

\begin{document}

\title{Random-matrix reduction in projective quantum mechanics}

\author{\fnm{Alexey A.} \sur{Kryukov}}\email{kryukov@uwm.edu}

\affil{\orgdiv{Department of Mathematics \& Natural Sciences}, \orgname{University of Wisconsin-Milwaukee}, \orgaddress{\street{2200 E. Kenwood Blvd.}, \city{Milwaukee}, \postcode{WI 53211}, \country{USA}}}

\abstract{%
We develop a state-space geometric framework for measurement, classicality, and quantum paradoxes, based on one dynamical conjecture. Classical configuration space and classical phase space for a mechanical system arise as distinguished submanifolds of projective quantum state space. On these submanifolds, the Fubini--Study geometry induces Euclidean classical geometry, and the tangent component of Schr\"odinger evolution reproduces Newtonian dynamics. Within this framework, interactions with measuring devices and environments are described by random-matrix dynamics on projective state space, generated by matrices drawn from the Gaussian Unitary Ensemble. We show that this random-matrix dynamics yields isotropic diffusion, giving Born-rule transition probabilities in microscopic measurements and stabilizing classical behavior in macroscopic systems. We further argue that the random-matrix conjecture is not an independent ad hoc assumption: under natural translation-invariance assumptions on the distribution of state-space steps originating on the classical submanifold, the unitary lift of homogeneous and isotropic Brownian motion on that submanifold is uniquely given by the Gaussian Unitary Ensemble, up to scale and an irrelevant scalar part. The resulting framework provides a unitary account of measurement and the quantum-to-classical transition and, if accepted, offers a dynamical resolution of standard quantum paradoxes.%
}

\keywords{State reduction, Measurement problem, Random matrices}

\maketitle

\section{Introduction}

Quantum mechanics contains several closely related foundational problems. The Schr\"odinger equation is linear and unitary, yet measurements produce individual outcomes rather than their superpositions. The probabilities of these outcomes are given by the Born rule, but this rule does not follow directly from Schr\"odinger dynamics. Macroscopic bodies follow Newtonian trajectories, although their states must also be governed by quantum dynamics. Classical space appears as the arena of classical dynamics and of all observed events, but its relation to quantum state space remains unclear. These problems are usually discussed under separate names: the measurement problem, the origin of the Born rule, the preferred-basis problem, the quantum-to-classical transition, and the emergence of classical space.

The purpose of this paper is to show that these problems have a common origin and admit a common experimentally testable resolution based on a geometric framework and a single dynamical conjecture. The geometric framework follows from the Schr\"odinger dynamics and from the fact that classical space and classical phase space can be represented by distinguished submanifolds of projective quantum state space. These submanifolds have Euclidean induced geometry and support Newtonian dynamics as the tangent component of Schr\"odinger evolution. In this sense, classical mechanics is not external to quantum mechanics; it is contained in quantum mechanics as tangent dynamics on a submanifold of projective state space.

The resulting geometry provides the setting for the dynamical conjecture. We assume that, under coarse-graining, the effect of many short and complicated interactions of a particle, or a system of particles, with a measuring device or environment is described by a random-matrix Hamiltonian. More specifically, the effective interaction Hamiltonian is modeled by independent draws from the Gaussian Unitary Ensemble. Under this random-matrix conjecture, denoted below by {\bf (RM)}, the state evolution becomes an isotropic random walk on projective state space.
%
%The conjecture and its consequences are analogous to Einstein's derivation of Brownian motion, a process that may be used to describe measurement errors in classical physics. There, too, one does not begin with a complete microscopic description of all individual interactions, but with effective assumptions of small independent increments, homogeneity, and isotropy. Additional motivation for {\bf (RM)} comes from the well-known ubiquity of random matrices in the description of fluctuations in quantum systems.

The conjecture and its consequences are closely analogous to Einstein's derivation of Brownian motion, a process that may be used to describe measurement errors in classical physics. There, too, one does not begin with a complete microscopic description of all individual interactions, but with effective assumptions of small independent increments, homogeneity, and isotropy. In the present framework, Brownian motion on the classical-space submanifold is the restriction of the corresponding state-space random walk. Conversely, under the assumption that the distribution of state-space steps originating on this submanifold is invariant under spatial translations, the unitary lift of a homogeneous and isotropic Gaussian random walk on the classical-space submanifold is generated by random Hermitian matrices from the Gaussian Unitary Ensemble, up to scale and an irrelevant scalar part. Thus {\bf (RM)} is not merely analogous to Brownian motion; under these conditions, it is the unique unitary lift of Brownian measurement dynamics. Additional motivation for {\bf (RM)} comes from the well-known ubiquity of random matrices in the description of fluctuations in quantum systems.

The geometric framework and the conjecture {\bf (RM)} have complementary roles. 
The geometry explains where classical variables come from and why Schr\"odinger dynamics has a Newtonian component. 
The random-matrix conjecture explains why states of macroscopic systems remain close to the classical sector and why microscopic measurements yield individual outcomes with Born probabilities. 
When the {\bf (RM)}-induced diffusion is restricted to the classical space submanifold, it becomes ordinary Brownian motion and gives the normal distribution of classical measurement errors. 
In the full projective state space, the same diffusion gives the Born rule. 
Thus the normal distribution in classical measurement and the Born rule in quantum measurement are two manifestations of a single state-space diffusion process.

Mathematically, the classical space and phase-space submanifolds of state space are formed from equivalence classes of states whose position standard deviation is bounded above by a parameter \(\sigma\). Physically, \(\sigma\) is related to the resolution of position-measuring devices. A measuring device distinguishes only equivalence classes of states that differ at the given resolution. Thus a classical outcome of a position measurement is not a single wave function, but an equivalence class of sufficiently localized states. This point is essential for explaining why a state undergoing diffusion in projective state space has a nonzero probability of reaching the finite-resolution classical sector, despite the finite dimensionality of the corresponding classical submanifold. It also avoids the usual problem of ``tails.''

The resulting picture is as follows. Before measurement, a microscopic system evolves according to the usual Schr\"odinger dynamics. During the measurement interaction, the state may evolve through the full projective state space, where the alternatives associated with different classes can interfere. The \({\bf (RM)}\)-induced isotropic diffusion then leads to a record in one of the detector-defined equivalence classes. The probability of recording a given class is given by the Born rule, in the same sense that a classical diffusion gives the probability of finding a particle in a specified region.

For a macroscopic body, frequent environmental interactions continually return the state to a narrow neighborhood of the classical phase-space submanifold. Between such interactions, the tangent component of the Schr\"odinger flow gives Newtonian drift. The resulting stochastic process is a random walk in projective state space with intermittent conditioning on returns to detector-defined localized sectors. For macroscopic bodies, this produces a sequence of localized records whose conditional distributions remain narrow around the classical path. The observed motion is therefore a stroboscopic Newtonian trajectory.

It is important that the primary dynamics is not motion of particles in \(\mathbb R^3\), but motion of states in projective state space. The framework therefore should not be interpreted as a hidden trajectory theory in ordinary space, or as an attempt to visualize quantum motion as particle motion in \(\mathbb R^3\). In particular, unlike Bohmian mechanics, it does not postulate definite particle positions guided by the wave function. Rather, the state itself evolves in projective state space, which becomes the primary arena for the physical processes described here. Classical motion in \(\mathbb R^3\) appears only when the state is located near the classical phase-space submanifold of state space.

Assuming that {\bf (RM)} holds as an effective coarse-grained description, and using the geometric representation of classical space and phase space as submanifolds of state space, one obtains strict unitary evolution at the fundamental level, individual outcomes with Born probabilities, and recovery of Newtonian motion within a single framework. These consequences are mathematically precise and physically nontrivial. They also place the standard quantum paradoxes in a new light: many of them arise from assigning classical properties to states that do not lie on the appropriate classical submanifold, or from identifying measurement outcomes with exact rays rather than finite-resolution equivalence classes. The aim of the paper is to make this framework explicit and to show that the parameter regimes required for macroscopic classicality are physically plausible.

The paper is organized as follows. The first part develops the geometric framework and reviews the embedding of classical configuration space and phase space into projective quantum state space. In particular, it shows how the classical action and Newtonian equations arise from the Schr\"odinger action by restriction to the classical phase-space submanifold. The next part introduces the random-matrix conjecture and explains its relation to Brownian motion, isotropic diffusion, equivalence classes, and the Born rule. This is followed by a formulation of macroscopic classicality as a conditioned stochastic process combining Newtonian tangent drift and {\bf (RM)} diffusion. The subsequent estimates show that physically reasonable parameter choices yield Newtonian behavior for macroscopic bodies and Born-rule reduction for microscopic measurements. The paper then discusses the double-slit experiment, cloud-chamber tracks, Stern--Gerlach measurement, and macroscopic superpositions in this framework. It concludes with a comparison to decoherence, continuous measurement, collapse models, and other approaches to the quantum-to-classical transition.

Some of the material presented here appeared previously in
\cite{KryukovPhysicsA,KryukovPHLA} and in earlier publications by the author. The present work,
however, gives a more complete and unified treatment. It includes new results,
theorems, and physically motivated estimates, as well as a detailed discussion
of standard quantum-mechanical paradoxes within this framework and a comparison
with existing approaches. A separate companion paper \cite{KryukovComp} provides
extensive numerical simulations supporting the main theoretical results.

\section{Classical submanifolds of projective state space}
\label{sec:classical-submanifolds}

Let the one-particle Hilbert space be \(L_2(\mathbb R^3)\), and let \(\mathbb{CP}^{L_2}\) denote the corresponding projective state space with the Fubini--Study metric. For a small localization parameter \(\sigma>0\), consider states represented by functions of the form
\begin{equation}
\label{localized-packet}
\varphi_{{\bf a},{\bf p}}({\bf x})
=
r_{{\bf a},\sigma}({\bf x})\,e^{i{\bf p}\cdot{\bf x}/\hbar},
\end{equation}
where
\begin{equation}
\label{localized-profile}
r_{{\bf a},\sigma}({\bf x})
=
\sigma^{-3/2}
r\!\left(\frac{{\bf x}-{\bf a}}{\sigma}\right).
\end{equation}
Here \(r\in L_2(\mathbb R^3)\) is a fixed normalized real-valued sufficiently regular function, centered at the origin, with finite variance normalized to \(1\). As \(\sigma\to0\), the densities \(r_{{\bf a},\sigma}^2({\bf x})\) converge in the sense of distributions to \(\delta^3({\bf x}-{\bf a})\) \cite{delta}.

Let \(M_{3,3}^{\sigma}\) be the image of the map
\begin{equation}
\label{Omega-map}
\Omega:\mathbb R^3\times\mathbb R^3\longrightarrow \mathbb{CP}^{L_2},
\qquad
({\bf a},{\bf p})\longmapsto \varphi_{{\bf a},{\bf p}}.
\end{equation}
Similarly, let \(M_{3}^{\sigma}\) be the image of the corresponding restricted map
\begin{equation}
\label{omega-map}
\omega:\mathbb R^3\longrightarrow \mathbb{CP}^{L_2},
\qquad
{\bf a}\longmapsto r_{{\bf a},\sigma}.
\end{equation}

\begin{theorem}%[Classical phase space as a state-space submanifold]
\label{thm:phase-submanifold}
The images \(M_{3,3}^{\sigma}\) and \(M_{3}^{\sigma}\) of the maps \(\Omega\) and \(\omega\) are embedded submanifolds of \(\mathbb{CP}^{L_2}\). The metrics induced on them by the ambient Fubini--Study metric, equivalently the pull-backs of the Fubini--Study metric under \(\Omega\) and \(\omega\), are Euclidean after an appropriate rescaling of units. Thus \(M_{3,3}^{\sigma}\) and \(M_{3}^{\sigma}\) are isometric to \(\mathbb R^3\times\mathbb R^3\) and \(\mathbb R^3\), respectively.
\end{theorem}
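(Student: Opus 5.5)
The plan is to compute the pull-back of the Fubini--Study metric directly in the coordinates \(({\bf a},{\bf p})\), and then to check separately that \(\Omega\) and \(\omega\) are injective immersions that are homeomorphisms onto their images. I will use the standard expression for the Fubini--Study metric at a unit vector \(\psi\):
\[
g(X,Y)=\mathrm{Re}\bigl(\langle X,Y\rangle-\langle X,\psi\rangle\langle\psi,Y\rangle\bigr),
\]
for tangent vectors \(X,Y\) represented in \(L_2(\mathbb R^3)\). The regularity hypothesis on \(r\) will be taken to mean that \(\partial_k r\) and \(x_k r\) lie in \(L_2\). Then \(\Omega\) is differentiable as a map into the unit sphere, and composing with the projection gives a differentiable map into \(\mathbb{CP}^{L_2}\).

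\emph{Metric computation.} Write \(\partial_{a_k}\varphi=-\sigma^{-1}\sigma^{-3/2}(\partial_k r)\bigl(({\bf x}-{\bf a})/\sigma\bigr)e^{i{\bf p}\cdot{\bf x}/\hbar}\) and \(\partial_{p_k}\varphi=(i x_k/\hbar)\varphi\).

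\begin{enumerate}
\item Since \(r\) is real, \(\langle\partial_{a_k}\varphi,\varphi\rangle=\int r\,\partial_k r=\tfrac12\int\partial_k(r^2)=0\). Hence
\[
g_{a_ka_l}=\sigma^{-2}\int\partial_k r\,\partial_l r\,d^3y,
\]
a constant positive-definite matrix. It is positive definite because \(\nabla r\neq0\) for a nonzero \(L_2\) function.
\item For the momentum directions,
\[
g_{p_kp_l}=\hbar^{-2}\Bigl(\int x_kx_l r_{{\bf a},\sigma}^2-a_ka_l\Bigr)=\hbar^{-2}\sigma^2\,\mathrm{Cov}_{kl}(r^2),
\]
using that \(r\) is centered. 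With the variance normalization this is \(\sigma^2\hbar^{-2}\delta_{kl}\), after a linear change of coordinates if the covariance of \(r^2\) is not already isotropic.
\item The mixed terms vanish. The product \(\overline{\partial_{a_k}\varphi}\,\partial_{p_l}\varphi\) equals \(i\times\)(a real function), so its real part integrates to zero. The projection term also vanishes, because \(\langle\partial_{a_k}\varphi,\varphi\rangle=0\).
\end{enumerate}

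The pulled-back metric is therefore constant in \(({\bf a},{\bf p})\): \(C\sigma^{-2}\,d{\bf a}^2+\sigma^2\hbar^{-2}\,d{\bf p}^2\) when \(r\) is isotropic (for instance a Gaussian), and a constant quadratic form in general. Rescaling the units of \({\bf a}\) and \({\bf p}\) makes it the Euclidean metric on \(\mathbb R^6\), and on \(\mathbb R^3\) in the case of \(\omega\). Translation invariance of the construction already explains why the metric cannot depend on \({\bf a}\) or \({\bf p}\).

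\emph{Embedding.} Nondegeneracy of the pulled-back metric shows that \(d\Omega\) and \(d\omega\) are injective, so both maps are immersions.

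Injectivity: suppose \(\varphi_{{\bf a},{\bf p}}=e^{i\theta}\varphi_{{\bf a}',{\bf p}'}\). Taking moduli gives \(r_{{\bf a},\sigma}=r_{{\bf a}',\sigma}\). A nonzero \(L_2\) function cannot be invariant under a nontrivial translation, so \({\bf a}={\bf a}'\). The remaining phase \(e^{i({\bf p}-{\bf p}')\cdot{\bf x}/\hbar}\) must then be constant on the support of \(r\), which forces \({\bf p}={\bf p}'\).

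Homeomorphism onto the image: I would show that the map is proper in the following sense. If \(|{\bf a}_n|+|{\bf p}_n|\to\infty\), then
\[
|\langle\varphi_{{\bf a}_n,{\bf p}_n},\varphi_{{\bf a},{\bf p}}\rangle|
=\Bigl|\int r_{{\bf a}_n,\sigma}\,r_{{\bf a},\sigma}\,e^{i({\bf p}-{\bf p}_n)\cdot{\bf x}/\hbar}\Bigr|\to 0 .
\]
Along subsequences where \({\bf a}_n\to\infty\), this follows from decay of the overlap of translates. Along subsequences with \({\bf a}_n\) bounded and \({\bf p}_n\to\infty\), it follows from the Riemann--Lebesgue lemma applied to the integrable function \(r_{{\bf a}_n,\sigma}r_{{\bf a},\sigma}\), uniformly for \({\bf a}_n\) in compacts by continuity of translation in \(L_1\).

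Consequently, if \(\varphi_{{\bf a}_n,{\bf p}_n}\) converges in \(\mathbb{CP}^{L_2}\) to a point of the image, the parameters stay bounded. By injectivity and continuity, every convergent subsequence of the parameters has the correct limit. Hence the inverse is continuous and the images are embedded submanifolds. Being isometric to flat \(\mathbb R^6\) and \(\mathbb R^3\) then follows from the metric computation.

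The step I expect to be the main obstacle is the topological embedding, specifically the uniform Riemann--Lebesgue/properness argument in the infinite-dimensional projective space. A second, smaller point is the isotropy of \(\int\partial_k r\,\partial_l r\). Without assuming \(r\) rotationally symmetric, this quadratic form is only a constant one, so "Euclidean after rescaling" must be read as holding after a linear change of coordinates. I would state this explicitly, or specialize to a Gaussian \(r\), for which both blocks are scalar multiples of the identity.
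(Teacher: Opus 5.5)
Your proposal is correct and follows the same route as the paper's sketch. You pull back the Fubini--Study metric directly, show that the \({\bf a}\)- and \({\bf p}\)-blocks are constant and mutually orthogonal, and then rescale; beyond the paper, you supply the injectivity and properness argument for the embedding, which the paper only asserts, and you rightly note that for non-isotropic \(r\) the rescaling must be a general linear change of coordinates (one small slip: taking moduli gives \(|r_{{\bf a},\sigma}|=|r_{{\bf a}',\sigma}|\) rather than equality of the \(r\)'s, but your translation argument handles this equally well).
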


\begin{proof}[Sketch of proof]
The maps \(\Omega\) and \(\omega\) are differentiable embeddings, and their images are the submanifolds \(M_{3,3}^{\sigma}\) and \(M_{3}^{\sigma}\) of \(\mathbb{CP}^{L_2}\). The restrictions of these maps provide coordinate parametrizations of the corresponding submanifolds. The tangent space to \(M_{3,3}^{\sigma}\) is spanned by the derivatives
\[
\frac{\partial\varphi_{{\bf a},{\bf p}}}{\partial a^i},
\qquad
\frac{\partial\varphi_{{\bf a},{\bf p}}}{\partial p^j}.
\]
The Fubini--Study metric is obtained from the Hilbert inner product after quotienting by the vertical phase direction. Pulling this metric back by \(\Omega\), direct computation shows that the \({\bf a}\)-directions and \({\bf p}\)-directions are orthogonal and have constant norms. Hence, after rescaling the coordinates, the induced metric on \(M_{3,3}^{\sigma}\) is Euclidean. Accordingly, \(M_{3,3}^{\sigma}\), equipped with the metric induced from the ambient Fubini--Study metric, is isometric to \(\mathbb R^3\times\mathbb R^3\) in these coordinates. The same pull-back computation for \(\omega\) gives the corresponding Euclidean metric on \(M_3^\sigma\), making \(M_3^\sigma\) isometric to \(\mathbb R^3\).
\end{proof}

We now show that the submanifold \(M_{3,3}^\sigma\) also carries the classical dynamics. Consider the action functional
\begin{equation}
\label{Schrodinger-action}
S[\varphi]
=
\int
\overline{\varphi}({\bf x},t)
\left[
i\hbar\frac{\partial}{\partial t}
-
\widehat h
\right]
\varphi({\bf x},t)\,d^3{\bf x}\,dt,
\end{equation}
where
\begin{equation}
\label{Hamiltonian}
\widehat h
=
-\frac{\hbar^2}{2m}\Delta
+
\widehat V({\bf x},t).
\end{equation}
Variation of \eqref{Schrodinger-action} with respect to \(\varphi\) gives the Schr\"odinger equation.

\begin{theorem}%[Classical action from the restricted Schr\"odinger action]
\label{thm:restricted-action}
The restriction of the Schr\"odinger action \eqref{Schrodinger-action} to \(M_{3,3}^{\sigma}\) gives the classical action, up to a total derivative, an additive \(\sigma\)-dependent constant independent of \(({\bf a},{\bf p})\), and a potential error that vanishes as \(\sigma\to0\), assuming continuity of \(V\).
\end{theorem}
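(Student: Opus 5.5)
The plan is to substitute a time-dependent trajectory \(\varphi({\bf x},t)=\varphi_{{\bf a}(t),{\bf p}(t)}({\bf x})=r_{{\bf a}(t),\sigma}({\bf x})e^{i{\bf p}(t)\cdot{\bf x}/\hbar}\) into \eqref{Schrodinger-action}. We then evaluate the integrand term by term, using only that \(r\) is real, normalized, centered at the origin, and has variance normalized to \(1\).

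\emph{Time-derivative term.} Differentiating gives
\[
\partial_t\varphi=\Bigl(-\dot{\bf a}\cdot\nabla r_{{\bf a},\sigma}+\tfrac{i}{\hbar}\dot{\bf p}\cdot{\bf x}\,r_{{\bf a},\sigma}\Bigr)e^{i{\bf p}\cdot{\bf x}/\hbar}.
\]
The first piece contributes \(-\dot{\bf a}\cdot\int r\nabla r\,d^3{\bf x}=-\tfrac12\dot{\bf a}\cdot\int\nabla(r^2)\,d^3{\bf x}=0\), by decay of \(r\). The second piece contributes \(i\hbar\cdot\tfrac{i}{\hbar}\dot{\bf p}\cdot\int{\bf x}\,r_{{\bf a},\sigma}^2\,d^3{\bf x}=-\dot{\bf p}\cdot{\bf a}\), since \(r\) is centered. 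Thus \(\int\overline\varphi\, i\hbar\,\partial_t\varphi\,d^3{\bf x}=-{\bf a}\cdot\dot{\bf p}\). Integrating in time, this equals \({\bf p}\cdot\dot{\bf a}-\tfrac{d}{dt}({\bf a}\cdot{\bf p})\), which is the symplectic term \({\bf p}\,d{\bf a}\) up to a total derivative.

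\emph{Kinetic term.} Write \(\nabla\varphi=(\nabla r+\tfrac{i}{\hbar}{\bf p}\,r)e^{i{\bf p}\cdot{\bf x}/\hbar}\). Then
\[
\frac{\hbar^2}{2m}\int|\nabla\varphi|^2\,d^3{\bf x}=\frac{{\bf p}^2}{2m}+\frac{\hbar^2}{2m}\int|\nabla r_{{\bf a},\sigma}|^2\,d^3{\bf x},
\]
because the cross term \(\tfrac{i}{\hbar}{\bf p}\cdot\int r\nabla r\,d^3{\bf x}\) and its conjugate vanish as above. The Laplacian term is first rewritten via integration by parts, which is valid since \(r\) is regular and decaying. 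The last integral equals \(\sigma^{-2}\int|\nabla r|^2\,d^3{\bf x}\) by scaling. It is independent of \(({\bf a},{\bf p})\), so it is the additive \(\sigma\)-dependent constant allowed by the statement; after time integration it contributes \(-c_\sigma\int dt\).

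\emph{Potential term.} By the change of variables \({\bf x}={\bf a}+\sigma{\bf y}\),
\[
\int V({\bf x},t)\,r_{{\bf a},\sigma}^2({\bf x})\,d^3{\bf x}=\int V({\bf a}+\sigma{\bf y},t)\,r^2({\bf y})\,d^3{\bf y}.
\]
This tends to \(V({\bf a},t)\) as \(\sigma\to0\), because \(r^2\) is an approximate identity, as already noted for \(r_{{\bf a},\sigma}^2\to\delta^3\). Collecting the three terms gives
\[
S=\int\Bigl({\bf p}\cdot\dot{\bf a}-\frac{{\bf p}^2}{2m}-V({\bf a},t)\Bigr)dt-\bigl[{\bf a}\cdot{\bf p}\bigr]-c_\sigma T+\varepsilon_\sigma,
\]
which is the phase-space form of the classical action. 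Its variations give Hamilton's equations, and hence Newton's equations.

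The main obstacle is controlling the error \(\varepsilon_\sigma\). Pointwise continuity of \(V\) gives convergence at each \(({\bf a},t)\), but convergence of the time integral needs more. I would first assume \(V\) is continuous and locally bounded, with at most polynomial growth if \(r\) has matching moments, and work on compact time intervals along trajectories that stay in bounded sets. Dominated convergence then shows \(\varepsilon_\sigma\to0\), uniformly on compact sets of paths. Under Lipschitz or \(C^2\) hypotheses I would sharpen this to \(O(\sigma)\) or \(O(\sigma^2)\): the linear term in \(\sigma\) vanishes because \(r\) is centered, and the variance normalization gives the \(\tfrac{\sigma^2}{2}\Delta V\) correction. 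I would also check the regularity needed on \(r\), namely \(r\in H^1\) with \(\int|{\bf x}|^2r^2\,d^3{\bf x}<\infty\), so that all integrations by parts are justified.
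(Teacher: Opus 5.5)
Your proposal is correct and follows essentially the same route as the paper's sketch. You substitute the ansatz and show that the amplitude part of the time-derivative term integrates to zero, so that \(i\hbar\langle\varphi,\partial_t\varphi\rangle=-{\bf a}\cdot\dot{\bf p}={\bf p}\cdot\dot{\bf a}-\frac{d}{dt}({\bf a}\cdot{\bf p})\). You then get \({\bf p}^2/2m+C_\sigma\) from the kinetic term and recover \(V({\bf a},t)\) from the potential term via \(r^2_{{\bf a},\sigma}\to\delta^3\). On that last step you are more careful than the paper: continuity of \(V\) alone controls neither \(\int V({\bf a}+\sigma{\bf y},t)\,r^2({\bf y})\,d^3{\bf y}\) nor its time integral, and what is actually needed is your growth condition on \(V\), matched to the tails of \(r^2\), for dominated convergence.
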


\begin{proof}[Sketch of proof]
We restrict \(\varphi\) in \eqref{Schrodinger-action} to \(M_{3,3}^{\sigma}\) by taking
\[
\varphi({\bf x},t)
=
r_{{\bf a}(t),\sigma}({\bf x})
e^{i{\bf p}(t)\cdot{\bf x}/\hbar}.
\]
For this restricted ansatz, the time-derivative term satisfies
\[
i\hbar\langle \varphi,\partial_t\varphi\rangle
=
{\bf p}\cdot\dot{\bf a}
+
\frac{d}{dt}(\text{total derivative term}),
\]
where \(\langle \cdot, \cdot \rangle\) denotes the inner product and the contribution from the real amplitude vanishes by normalization. The kinetic term satisfies
\[
\left\langle
\varphi,
-\frac{\hbar^2}{2m}\Delta\varphi
\right\rangle
=
\frac{{\bf p}^2}{2m}
+
C_\sigma,
\]
where
\[
C_\sigma
=
\frac{\hbar^2}{2m\sigma^2}
\int_{\mathbb R^3}|\nabla r({\bf y})|^2\,d^3{\bf y}
\]
is independent of \({\bf a}\) and \({\bf p}\). Thus the non-potential terms already have the classical form, modulo inessential total-derivative and additive constant terms.

The only genuine approximation comes from the potential term
\[
\int_{\mathbb R^3}
V({\bf x},t)r_{{\bf a},\sigma}^2({\bf x})\,d^3{\bf x}.
\]
Since \(r_{{\bf a},\sigma}^2\rightharpoonup \delta^3({\bf x}-{\bf a})\) as \(\sigma\to0\), this term converges to \(V({\bf a},t)\) whenever \(V\) is continuous at \({\bf a}\). Therefore, assuming continuity of the potential and taking \(\sigma\) sufficiently small, the restricted action takes the classical form
\begin{equation}
\label{classical-action}
S
=
\int
\left[
{\bf p}\cdot\frac{d{\bf a}}{dt}
-
h({\bf p},{\bf a},t)
\right]dt,
\end{equation}
where
\begin{equation}
\label{classical-Hamiltonian}
h({\bf p},{\bf a},t)
=
\frac{{\bf p}^2}{2m}
+
V({\bf a},t).
\end{equation}
\end{proof}

If \(V\) is differentiable, variation of \eqref{classical-action} gives Hamilton's equations
\begin{equation}
\label{Hamilton-equations}
\frac{d{\bf a}}{dt}
=
\frac{{\bf p}}{m},
\qquad
\frac{d{\bf p}}{dt}
=
-\nabla V({\bf a},t),
\end{equation}
which are the Newton equations. In this sense, differentiability of \(V\) is needed not for the convergence of the action itself, but for deriving the Newtonian equations of motion from the resulting classical action.

The same result can be formulated directly in terms of the Schr\"odinger vector field on projective state space. Let
\[
X_{\widehat h}(\varphi)
=
-\frac{i}{\hbar}\widehat h\varphi
\]
denote the Schr\"odinger velocity. For \(\varphi\in M_{3,3}^{\sigma}\), decompose \(X_{\widehat h}\) into a component tangent to \(M_{3,3}^{\sigma}\) and an orthogonal component:
\[
X_{\widehat h}
=
X_T+X_\perp.
\]

\begin{theorem}%[Newtonian tangent component]
\label{thm:tangent-newton}
The tangent component \(X_T\) of the Schr\"odinger vector field on \(M_{3,3}^{\sigma}\) is the Newtonian vector field on classical phase space. In coordinates \(({\bf a},{\bf p})\), it is given by
\[
X_T
=
\dot{\bf a}\cdot\partial_{\bf a}
+
\dot{\bf p}\cdot\partial_{\bf p},
\]
where \(\dot{\bf a}=d{\bf a}/dt\) and \(\dot{\bf p}=d{\bf p}/dt\) satisfy \eqref{Hamilton-equations}.
\end{theorem}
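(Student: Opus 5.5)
The plan is to compute the orthogonal projection of \(X_{\widehat h}(\varphi)\) onto the tangent space \(T_\varphi M_{3,3}^\sigma\), working in the horizontal space that represents the Fubini--Study tangent space. This horizontal space consists of vectors orthogonal to \(\varphi\) with respect to \(\mathrm{Re}\langle\cdot,\cdot\rangle\), with the phase direction \(i\varphi\) removed.

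By the computation behind Theorem~\ref{thm:phase-submanifold}, a basis of \(T_\varphi M_{3,3}^\sigma\) is
\[
\partial_{a^k}\varphi=-(\partial_k r_{{\bf a},\sigma})e^{i{\bf p}\cdot{\bf x}/\hbar}+\frac{i p_k}{\hbar}\varphi\ \ (\text{with }{\bf p}\text{-dependence absorbed}),
\qquad
\partial_{p^k}\varphi=\frac{i}{\hbar}(x^k-a^k)\varphi+\frac{i a^k}{\hbar}\varphi .
\]
After subtracting their vertical parts (the multiples of \(i\varphi\)), the horizontal representatives are
\[
A_k=-(\partial_k r_{{\bf a},\sigma})e^{i{\bf p}\cdot{\bf x}/\hbar},
\qquad
P_k=\frac{i}{\hbar}(x^k-a^k)\varphi .
\]
Theorem~\ref{thm:phase-submanifold} gives that these are mutually orthogonal, with \(\mathrm{Re}\langle A_k,A_l\rangle=\delta_{kl}c_A/\sigma^2\) and \(\mathrm{Re}\langle P_k,P_l\rangle=\delta_{kl}\sigma^2/\hbar^2\) (for suitably symmetric \(r\); otherwise I would work with the Gram matrix directly). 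The tangent component then has coordinates
\[
\dot a^k=\frac{\mathrm{Re}\langle A_k,X_{\widehat h}\rangle}{\|A_k\|^2},
\qquad
\dot p^k=\frac{\mathrm{Re}\langle P_k,X_{\widehat h}\rangle}{\|P_k\|^2}.
\]
These coordinates are unchanged if the vertical parts are kept, because the vertical component of \(X_{\widehat h}\) is discarded anyway.

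The key step is to recognise the inner products as derivatives of the expectation \(E({\bf a},{\bf p})=\langle\varphi,\widehat h\varphi\rangle\). Since \(\widehat h\) is Hermitian, \(\partial_\mu E=2\,\mathrm{Re}\langle\partial_\mu\varphi,\widehat h\varphi\rangle\). Together with \(X_{\widehat h}=-\frac{i}{\hbar}\widehat h\varphi\), this gives
\[
\mathrm{Re}\langle\partial_\mu\varphi,X_{\widehat h}\rangle=\frac1\hbar\,\mathrm{Im}\langle\partial_\mu\varphi,\widehat h\varphi\rangle ,
\]
where the sign depends on the convention for antilinearity. So the tangent component is the Hamiltonian vector field of \(E\) with respect to the symplectic form \(\Omega_{\mu\nu}=2\hbar\,\mathrm{Im}\langle\partial_\mu\varphi,\partial_\nu\varphi\rangle\), raised by the induced metric. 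Equivalently, and more concretely, I would evaluate \(\mathrm{Re}\langle A_k,-\tfrac{i}{\hbar}\widehat h\varphi\rangle\) and \(\mathrm{Re}\langle P_k,-\tfrac{i}{\hbar}\widehat h\varphi\rangle\) directly.

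For the kinetic part, \(\widehat h\varphi\) produces the terms \(\frac{p^2}{2m}\varphi\), \(-\frac{i\hbar}{m}{\bf p}\cdot\nabla r\,e^{i{\bf p}\cdot{\bf x}/\hbar}\) and \(-\frac{\hbar^2}{2m}\Delta r\,e^{i{\bf p}\cdot{\bf x}/\hbar}\). Pairing with \(A_k\) kills all but the middle term. That term gives \(\frac{p_l}{m\hbar}\int\partial_k r\,\partial_l r=\frac{p_k}{m}\|A_k\|^2/\hbar\cdot\hbar\), hence \(\dot a^k=p^k/m\). Pairing with \(P_k\) gives zero for the kinetic terms by parity and the reality of \(r\). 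The potential term then yields
\[
\dot p^k=-\frac{\hbar^2}{\sigma^2}\cdot\frac{1}{\hbar^2}\int V\,(x^k-a^k)\,r_{{\bf a},\sigma}^2\,d^3{\bf x}\cdot(\text{normalisation})
=-\partial_{a^k}\!\int V\,r_{{\bf a},\sigma}^2 ,
\]
using \(\partial_{a^k}r^2_{{\bf a},\sigma}=-\partial_k r^2_{{\bf a},\sigma}\) and integrating by parts. Pairing \(V\varphi\) with \(A_k\) gives \(\dot a\) contributions of the form \(\mathrm{Im}\) of a real quantity, which vanish.

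The main obstacle is the momentum equation. There one must show that \(\int V\,\partial_k(r^2_{{\bf a},\sigma})\) and \(\sigma^{-2}\int V(x^k-a^k)r^2_{{\bf a},\sigma}\) reproduce \(-\partial_kV({\bf a})\). This holds exactly for Gaussian \(r\), using the identity \(\partial_k r^2=-(x^k-a^k)r^2/\sigma^2\) up to the variance normalisation. For general \(r\) it holds only in the limit \(\sigma\to0\), with \(V\) differentiable, by the same delta-convergence used in Theorem~\ref{thm:restricted-action}. I would state the result in that sense: exact for Gaussian profiles, and up to \(o(1)\) as \(\sigma\to0\) otherwise. The resulting equations are \(\dot{\bf a}={\bf p}/m\) and \(\dot{\bf p}=-\nabla V\), which are \eqref{Hamilton-equations}.
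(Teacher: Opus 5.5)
\paragraph{Comparison with the paper.} Your route differs from the paper's. The paper never evaluates the projection. It asserts that the projection coefficients coincide with the Euler--Lagrange equations of the restricted action in Theorem~\ref{thm:restricted-action}, where \(\langle\varphi,-\frac{\hbar^2}{2m}\Delta\varphi\rangle={\bf p}^2/2m+C_\sigma\) holds for every real profile, and it then reads off Hamilton's equations. You instead compute the Fubini--Study orthogonal projection directly in the horizontal frame \(A_k,P_k\). Several parts of that computation are fine:
\begin{itemize}
\item the \(\dot{\bf a}={\bf p}/m\) computation;
\item the Gram matrix, which is block-diagonal for every real \(r\) since \(\langle A_k,P_l\rangle=\frac{i}{2\hbar}\delta_{kl}\);
\item the potential terms.
\end{itemize}

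\paragraph{The gap.} The problem is the claim that pairing the kinetic terms with \(P_k\) gives zero ``by parity''. That requires \(r\) to be even, \(r(-{\bf y})=r({\bf y})\), which being centered does not imply. For a real centered \(r\), integration by parts gives
\[
\mathrm{Re}\Big\langle P_k,-\frac{i}{\hbar}\Big(-\frac{\hbar^2}{2m}\Delta\varphi\Big)\Big\rangle
=-\frac{1}{2m}\int (x^k-a^k)\,|\nabla r_{{\bf a},\sigma}|^2\,d^3{\bf x}
=-\frac{\kappa_k}{2m\sigma},
\qquad
\kappa_k=\int y^k\,|\nabla r({\bf y})|^2\,d^3{\bf y}.
\]
So, for unit covariance, \(\dot p^k\) acquires a constant force \(-\hbar^2\kappa_k/(2m\sigma^3)\), independent of \(({\bf a},{\bf p})\). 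Since \(\kappa\) is the first moment of \(|\nabla r|^2\) rather than of \(r^2\), it generically does not vanish, and it grows as \(\sigma\to0\). Your fallback ``up to \(o(1)\) as \(\sigma\to0\) otherwise'' is therefore false for non-even profiles, and no Gram-matrix bookkeeping repairs it.

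\paragraph{Why this happens.} The same issue is hidden in your ``key step''. The quantity \(\frac1\hbar\mathrm{Im}\langle\partial_\mu\varphi,\widehat h\varphi\rangle\) is half the derivative of \(\psi\mapsto\langle\psi,\widehat h\psi\rangle/\hbar\) in the direction \(i\partial_\mu\varphi\). That direction is tangent to \(M_{3,3}^\sigma\) only if \(iT_\varphi M_{3,3}^\sigma=T_\varphi M_{3,3}^\sigma\), i.e.\ for Gaussian \(r\), where \(iA_k\propto P_k\). Only in that case is the metric projection the Hamiltonian flow of \(E\) restricted to the submanifold. Only then does it coincide exactly with the restricted-action equations the paper invokes.

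\paragraph{What you can actually claim.}
\begin{itemize}
\item For Gaussian \(r\), the projection is exactly \(\dot{\bf a}={\bf p}/m\) and \(\dot{\bf p}=-\nabla_{\bf a}\int V r_{{\bf a},\sigma}^2\,d^3{\bf x}\). This equals \(-\nabla V({\bf a})\) only as \(\sigma\to0\) or for at most quadratic \(V\), so ``exact for Gaussian profiles'' is itself slightly overstated.
\item For even \(r\), the projection is Newtonian up to \(o(1)\) as \(\sigma\to0\).
\item For non-even profiles, Newton's equations come only from the restricted action, i.e.\ the symplectic rather than the metric projection. That is what the paper's sketch tacitly uses. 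With \(X_T\) defined as the orthogonal component, the statement actually fails for such profiles, so the symmetry restriction is necessary, not a convenience.
\end{itemize}
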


\begin{proof}[Sketch of proof]
The tangent vectors \(\partial_{\bf a}\varphi\) and \(\partial_{\bf p}\varphi\) span \(T_\varphi M_{3,3}^{\sigma}\). Projecting the Schr\"odinger velocity onto this tangent space gives precisely the coefficients obtained from the constrained variation of the restricted action. Therefore the tangent component of the Schr\"odinger flow is the Hamiltonian flow of \eqref{classical-Hamiltonian}, i.e. Newtonian motion.
\end{proof}

The orthogonal component \(X_\perp\) measures the tendency of the state to leave the classical submanifold. For Gaussian representatives with variance \(\sigma^2\), the norm of the orthogonal component is
\[
\|X_\perp\|
=
\frac{\hbar}{4\sqrt{2}\,m\sigma^2},
\]
which is the usual spreading scale of a localized packet. For sufficiently massive bodies, this normal component is small, although its effect can accumulate over time. Thus the geometric construction explains the form of the classical equations, but by itself it does not explain why macroscopic states remain close to the classical submanifold. 

The preceding results justify the identification of \({\bf a}\) and \({\bf p}\) with the classical position and momentum variables. The parameter \({\bf a}\) labels the center of localization of the packet, while \({\bf p}\) enters through the linear phase factor \(e^{i{\bf p}\cdot{\bf x}/\hbar}\). The localized state is not a momentum eigenstate; rather, \({\bf p}\) determines the local wave vector and hence the group velocity of the packet. More generally, for a sufficiently narrow packet with smooth phase \(\Theta({\bf x})\), expansion near the center \({\bf a}\) gives
\[
\Theta({\bf x})
=
\Theta({\bf a})
+
\nabla \Theta({\bf a})\cdot({\bf x}-{\bf a})
+
O(|{\bf x}-{\bf a}|^2).
\]
The constant term is projectively irrelevant, while the linear term is represented by \(e^{i{\bf p}\cdot{\bf x}/\hbar}\). Thus a general localized state has a linear phase to leading order, with corrections controlled by the localization scale \(\sigma\). This justifies the choice of \(\varphi_{{\bf a},{\bf p}}({\bf x})\) in (\ref{localized-packet}) in the regime of sufficiently small \(\sigma\).

It follows that  \(M_{3,3}^{\sigma}\), equipped with the induced Fubini--Study metric and the tangent Schr\"odinger dynamics, is identified with the classical phase-space submanifold, while \(M_3^\sigma\) is its configuration-space part. The normal component of the Schr\"odinger vector field measures the tendency of the state to leave this classical sector.
The free parameter \(\sigma\) controls the localization scale of the submanifolds \(M_3^\sigma\) and \(M_{3,3}^\sigma\). As discussed in the Introduction, we interpret \(\sigma\) physically as the resolution scale of position-measuring devices. This motivates an equivalence-class formulation, in which localized states indistinguishable at resolution \(\sigma\) are identified. The equivalence-class formulation introduced next gives the classical sector its operational meaning in terms of finite-resolution measurements.
%%%

\subsection{Equivalence-class version of the classical submanifolds}
\label{equiv}

For simplicity, consider first one spatial dimension. Let
\[
g_{c,\sigma}(z)
=
\left(\frac{1}{2\pi\sigma^2}\right)^{1/4}
\exp\!\left[-\frac{(z-c)^2}{4\sigma^2}\right]
\]
denote the normalized Gaussian of width \(\sigma\) centered at \(c\). 
In one dimension, the map
\[
\omega:\mathbb R\to \mathbb{CP}^{L_2},
\qquad
\omega(c)=g_{c,\sigma},
\]
parametrizes a one-dimensional submanifold of \(\mathbb{CP}^{L_2}\), denoted here by \(M_1^\sigma\).

For Gaussian representatives in one dimension, one has
\begin{equation}
\label{M1-distance}
\cos^2\rho(g_{c,\sigma},g_{d,\sigma})
=
\exp\!\left[-\frac{(c-d)^2}{4\sigma^2}\right].
\end{equation}
It follows that
\[
\rho(g_{c,\sigma},g_{d,\sigma})
=
\frac{|c-d|}{2\sigma}
+
o\!\left(\frac{|c-d|}{2\sigma}\right),
\]
and hence \(M_1^\sigma\), equipped with the induced Fubini--Study metric, is isometric to \(\mathbb R\) with its Euclidean distance rescaled by \(1/(2\sigma)\).

Let \(\mu_z(\psi)\) and \(\delta_z(\psi)\) denote the position expectation value and position standard deviation of a normalized state \(\psi\in L_2(\mathbb R)\). 
For a fixed resolution \(\sigma\), define the equivalence class associated with the classical position \(c\) by
\begin{equation}
\label{position-equivalence-class}
\{g_c\}
=
\left\{
\psi\in L_2(\mathbb R):
\mu_z(\psi)=c,\quad \delta_z(\psi)\le \sigma
\right\}.
\end{equation}
The Gaussian state \(g_{c,\sigma}\) is a convenient representative of the equivalence class. Each class contains infinitely many mutually orthogonal states. All states within a given class are experimentally indistinguishable by a position-measuring device whose resolution is no finer than \(\sigma\).

The Fubini--Study distance from a state \(\psi\) to an equivalence class is defined by
\begin{equation}
\label{distance-to-class}
\rho(\psi,\{g_c\})
=
\inf_{\chi\in\{g_c\}}\rho(\psi,\chi),
\end{equation}
and the distance between two equivalence classes is defined by
\begin{equation}
\label{distance-between-classes}
\rho(\{g_c\},\{g_d\})
=
\inf_{\psi\in\{g_c\}}\rho(\psi,\{g_d\}).
\end{equation}
For nearby classes, the infimum in (\ref{distance-between-classes}) has the same leading-order asymptotics as the distance between the Gaussian representatives \(g_{c,\sigma}\) and \(g_{d,\sigma}\). Thus
\begin{equation}
\label{class-distance-gaussian}
\rho(\{g_c\},\{g_d\})
=
\frac{|c-d|}{2\sigma}
+
o\!\left(\frac{|c-d|}{2\sigma}\right),
\end{equation}
which is the same relation obtained for the representative manifold \(M_1^\sigma\). Therefore the set of equivalence classes
\[
\widetilde M_1^\sigma
=
\{\{g_c\}:c\in\mathbb R\}
\]
inherits the same Euclidean metric as \(M_1^\sigma\) and is isometric to \(\mathbb R\).

The corresponding phase-space representative manifold is obtained by adjoining
the linear phase factor \(e^{ipz/\hbar}\) to the Gaussian representatives. Namely,
define
\[
M_{1,1}^\sigma
=
\left\{
g_{c,\sigma}e^{ipz/\hbar}: c,p\in\mathbb R
\right\}
\subset \mathbb{CP}^{L_2}.
\]
For two representatives in \(M_{1,1}^\sigma\), direct computation gives
\begin{equation}
\label{M11-distance}
\cos^2\rho
\left(
g_{a,\sigma}e^{ipz/\hbar},
g_{b,\sigma}e^{iqz/\hbar}
\right)
=
\exp\!\left[
-\frac{(a-b)^2}{4\sigma^2}
-
\frac{\sigma^2(p-q)^2}{\hbar^2}
\right].
\end{equation}
Consequently, for nearby points,
\[
\rho^2
\left(
g_{a,\sigma}e^{ipz/\hbar},
g_{b,\sigma}e^{iqz/\hbar}
\right)
=
\frac{(a-b)^2}{4\sigma^2}
+
\frac{\sigma^2(p-q)^2}{\hbar^2}
+
o\!\left(
\frac{(a-b)^2}{4\sigma^2}
+
\frac{\sigma^2(p-q)^2}{\hbar^2}
\right).
\]
Thus the induced Fubini--Study metric on \(M_{1,1}^\sigma\) is
\[
ds^2
=
\frac{dc^2}{4\sigma^2}
+
\frac{\sigma^2}{\hbar^2}\,dp^2,
\]
where \(c\) and \(p\) are the position and momentum coordinates, respectively.
After rescaling the position and momentum coordinates, \(M_{1,1}^\sigma\) is
isometric to the Euclidean phase plane \(\mathbb R\times\mathbb R\).

The equivalence-class phase-space version is obtained in the same way, by
adjoining the phase factor \(e^{ipz/\hbar}\) to the equivalence classes of
real-valued localized states. Namely, define
\[
\widetilde M_{1,1}^\sigma
=
\left\{
\{g_c\}e^{ipz/\hbar}: c,p\in\mathbb R
\right\},
\]
where \(\{g_c\}\) consists of real-valued states with position expectation \(c\)
and position standard deviation not exceeding \(\sigma\). The distance from a
state to an equivalence class is defined as above, and the metric on
\(\widetilde M_{1,1}^\sigma\) is defined using the Gaussian representatives
\(g_{c,\sigma}e^{ipz/\hbar}\). Therefore the same formula
\eqref{M11-distance} determines the induced metric on
\(\widetilde M_{1,1}^\sigma\). Hence \(\widetilde M_{1,1}^\sigma\) inherits,
after the same rescaling of the position and momentum coordinates, the Euclidean
metric on the classical phase plane \(\mathbb R\times\mathbb R\).

To avoid confusion, we note that \(\widetilde M_1^\sigma\) and
\(\widetilde M_{1,1}^\sigma\) differ by the additional phase, or momentum,
label. The set of localized states underlying \(\widetilde M_1^\sigma\)
contains the set underlying \(\widetilde M_{1,1}^\sigma\), since multiplying a
localized state by \(e^{ipz/\hbar}\) does not change its position expectation
or position spread. Conversely, for a sufficiently localized state with smooth
phase, the phase is linear to leading order near the center of localization, so
it is represented, up to corrections controlled by \(\sigma\), by such a factor.
Thus, for the purposes of localization and position recording, the two
underlying localized sectors may be regarded as the same, while the
equivalence-class manifold \(\widetilde M_{1,1}^\sigma\) retains the additional
momentum information.

The construction extends componentwise to three dimensions. The equivalence-class manifolds
\[
\widetilde M_3^\sigma,
\qquad
\widetilde M_{3,3}^\sigma
\]
are obtained by grouping states with the same position expectation value \({\bf a}\) and position spread bounded by \(\sigma\), with the phase factor \(e^{i{\bf p}\cdot{\bf x}/\hbar}\) added in the phase-space case. They inherit the same induced Euclidean geometry as the representative manifolds \(M_3^\sigma\) and \(M_{3,3}^\sigma\).

The reduction of Schr\"odinger dynamics to Newtonian dynamics also passes to these equivalence-class manifolds. Indeed, the representative calculation above shows that the tangent component of the Schr\"odinger vector field on \(M_{3,3}^\sigma\) is Newtonian. Since all states in a class are indistinguishable at resolution \(\sigma\), and since the classical variables \(({\bf a},{\bf p})\) label the classes, the induced tangent dynamics on \(\widetilde M_{3,3}^\sigma\) is the same Newtonian dynamics in the variables \(({\bf a},{\bf p})\), up to errors below the chosen resolution. Thus the passage from representatives to equivalence classes does not change the classical equations; it gives them their operational meaning.

\subsection{Many-particle systems}
\label{many-p}

The same construction applies to systems of many particles. For \(N\) distinguishable particles with Hilbert space
\[
\mathcal H_N
=
L_2(\mathbb R^3)^{\otimes N}
\simeq
L_2(\mathbb R^{3N}),
\]
the classical configuration-space sector is represented by tensor products of localized position classes,
\begin{equation}
\label{N-particle-position-sector}
\widetilde M_{3,N}^{\sigma}
=
\widetilde M_3^{\sigma_1}\otimes\cdots\otimes
\widetilde M_3^{\sigma_N}.
\end{equation}
A point of this manifold is labeled by
\[
({\bf a}_1,\ldots,{\bf a}_N),
\]
where \({\bf a}_k\in\mathbb R^3\) is the position coordinate of the \(k\)-th particle. Similarly, the classical phase-space sector is
\begin{equation}
\label{N-particle-phase-sector}
\widetilde M_{3,3,N}^{\sigma}
=
\widetilde M_{3,3}^{\sigma_1}\otimes\cdots\otimes
\widetilde M_{3,3}^{\sigma_N},
\end{equation}
with coordinates
\[
({\bf a}_1,\ldots,{\bf a}_N;
{\bf p}_1,\ldots,{\bf p}_N).
\]

For tensor-product representatives, the Hilbert-space overlap factorizes into the product of the one-particle overlaps. Since the Fubini--Study distance satisfies
\[
\cos^2\rho(\psi,\phi)=|\langle \psi,\phi\rangle|^2,
\]
the distance $\rho$ between representatives is determined by this product. In particular, for localized Gaussian representatives one obtains
\begin{equation}
\label{N-particle-distance}
\cos^2\rho
=
\exp\!\left[
-\sum_{k=1}^{N}
\frac{({\bf a}_k-{\bf b}_k)^2}{4\sigma_k^2}
-
\sum_{k=1}^{N}
\frac{\sigma_k^2({\bf p}_k-{\bf q}_k)^2}{\hbar^2}
\right].
\end{equation}
Thus, after introducing the corresponding dimensionless coordinates, the induced metric on \(\widetilde M_{3,3,N}^{\sigma}\) is Euclidean. Consequently,
\[
\widetilde M_{3,N}^{\sigma}
\cong
\mathbb R^{3N},
\qquad
\widetilde M_{3,3,N}^{\sigma}
\cong
\mathbb R^{3N}\times\mathbb R^{3N},
\]
as Riemannian manifolds with the induced metric.

The usual classical picture of \(N\) particles in a single copy of physical space \(\mathbb R^3\) is recovered through the canonical identification
\[
({\bf a}_1,\ldots,{\bf a}_N)\in\mathbb R^{3N}
\quad
\longleftrightarrow
\quad
\text{\(N\) points }{\bf a}_1,\ldots,{\bf a}_N
\text{ in one space } \mathbb R^3 .
\]
Thus the tensor product of \(N\) localized one-particle classical sectors is the usual \(N\)-particle configuration space, equivalently the space of \(N\)-point configurations in one classical space.

The dynamical statement also extends. If the \(N\)-particle Schr\"odinger Hamiltonian has the form
\begin{equation}
\label{N-particle-Hamiltonian}
\widehat h
=
-\sum_{k=1}^{N}
\frac{\hbar^2}{2m_k}\Delta_k
+
\widehat V({\bf x}_1,\ldots,{\bf x}_N,t),
\end{equation}
then restriction of the Schr\"odinger action to
\(\widetilde M_{3,3,N}^{\sigma}\) gives the classical action
\begin{equation}
\label{N-particle-classical-action}
S
=
\int
\left[
\sum_{k=1}^{N}
{\bf p}_k\cdot \dot{\bf a}_k
-
h({\bf a}_1,\ldots,{\bf a}_N;
{\bf p}_1,\ldots,{\bf p}_N;t)
\right]dt,
\end{equation}
where
\begin{equation}
\label{N-particle-classical-Hamiltonian}
h
=
\sum_{k=1}^{N}
\frac{{\bf p}_k^2}{2m_k}
+
V({\bf a}_1,\ldots,{\bf a}_N,t).
\end{equation}
If \(V\) is differentiable, constrained variation yields the Newton equations
\begin{equation}
\label{N-particle-Newton}
\dot{\bf a}_k=\frac{{\bf p}_k}{m_k},
\qquad
\dot{\bf p}_k
=
-\nabla_{{\bf a}_k}V({\bf a}_1,\ldots,{\bf a}_N,t),
\qquad
k=1,\ldots,N.
\end{equation}
Accordingly, the many-particle classical world appears as the tangent Schr\"odinger dynamics on the corresponding many-particle localized sector of projective state space.

The results of this section are mathematical statements about localized submanifolds of projective quantum state space. 
They show that classical configuration space and classical phase space, for one particle and for many-particle systems, can be represented inside projective Hilbert space with the correct Euclidean geometry and Newtonian tangent dynamics.
These results do not by themselves force an ontological identification. One could regard \(M_3^\sigma\), \(M_{3,3}^\sigma\), and their equivalence-class versions merely as embedded copies of classical spaces inside \(\mathbb{CP}^{L_2}\). However, these submanifolds have precisely the structures expected of classical space and phase space: Euclidean induced geometry, classical variables \(({\bf a},{\bf p})\), and Newtonian tangent dynamics. One can, therefore, assume that the classical world is not an additional arena external to quantum mechanics. It is the distinguished localized sector of projective state space.

This assumption is stronger than what is needed merely to derive Newtonian equations from the restricted Schr\"odinger action or to obtain the later results of the paper. Its role is conceptual and structural. It allows classical records, measurement outcomes, and macroscopic configurations to be described within the same state space in which microscopic quantum superpositions evolve. This identification will be used below when the random-matrix dynamics is added: microscopic measurement, macroscopic classicality, and the standard quantum paradoxes will then be treated as different regimes of one state-space dynamics. However, all of these results can be derived without the identification and we leave it up to the reader to either use the unified state space framework, or to keep the classical picture separate.

%%%

\section{Randomness, equivalence classes, and the random-matrix conjecture}
\label{sec:rm}

The preceding section identifies the classical sector of quantum theory: classical configuration space and phase space appear as localized submanifolds of projective state space, and the tangent component of the Schr\"odinger flow on these submanifolds reproduces Newtonian dynamics. To address why a microscopic measurement yields a single outcome and why the state of a macroscopic body remains close to the classical submanifold rather than wandering into the full projective space, one must add a dynamical mechanism describing the effect of measuring devices and environments.

Brownian motion provides the guiding analogy for introducing such a dynamical mechanism. In classical physics, a measurement of position does not produce an exact point; it produces a recorded value with finite resolution and random error. When the measured particle or the pointer of a measuring device is subject to many small uncontrolled interactions with its environment, these errors may be modeled by Brownian motion. The observed result is then distributed around the classical value, typically according to a normal distribution determined by the diffusion process and the observation period. Thus Brownian motion gives a dynamical model of classical measurement uncertainty.

In Einstein's theory of Brownian motion \cite{Ein}, one does not begin with a detailed account of molecular collisions. Instead, one assumes that the particle undergoes many small, approximately independent random displacements, with a distribution of steps that is homogeneous in space and time and isotropic in the absence of drift. These assumptions are sufficient to derive the diffusion equation and, for a localized initial condition, the normal distribution. The conjecture introduced below is the corresponding state-space analogue: during measurement or environmental interaction, the quantum state undergoes many small random unitary changes, which after coarse-graining produce an isotropic random walk in projective Hilbert space.

\subsection{The random-matrix conjecture}

Let \(\varphi\in \mathbb{CP}^{L_2}\) be the state of the system. During a short interaction interval \(dt\), we write the effective interaction Hamiltonian as a random Hermitian operator \(\widehat h_{\rm RM}\). The corresponding infinitesimal evolution is
\begin{equation}
\label{RM-step}
\varphi \longmapsto
\exp\!\left(-\frac{i}{\hbar}\widehat h_{\rm RM}dt\right)\varphi .
\end{equation}
The conjecture is the following.

\medskip

\noindent
\textbf{(RM)}
\emph{During measurement or environmental monitoring, the effective interaction Hamiltonians acting on the relevant finite-dimensional coarse-grained subspace are modeled, over successive short time intervals, by independent random matrices drawn from the Gaussian Unitary Ensemble.}

\medskip

The finite-dimensional restriction in this statement reflects the finite resolution of any actual measurement. A detector distinguishes only a finite number of alternatives at a given resolution and within a finite observation time. The infinite-dimensional projective space is therefore approached through finite-dimensional coarse-grained sectors, with the continuum limit taken only after the induced geometric behavior is identified.

The appearance of the Gaussian Unitary Ensemble in \({\bf (RM)}\) is not arbitrary. It can be motivated from the classical Brownian limit itself. The translated Gaussian states forming \(M_3^\sigma\) constitute a complete family in the Hilbert state space $L_2(\R^3)$: if a state is orthogonal to all translated Gaussians, then it is zero in $L_2(\R^3)$. Equivalently, this follows from the standard Fourier-transform argument, since the Fourier transform of a Gaussian is nowhere zero. The corresponding family of tangent vectors, obtained by differentiating the translated Gaussians, is also complete: a state orthogonal to all such tangent vectors would have Fourier transform supported only at \(k=0\), hence would be constant, and therefore zero in \(L_2(\mathbb R^3)\). Thus, although each tangent space \(T_\psi M_3^\sigma\) is finite-dimensional, the family of tangent spaces obtained by translating \(\psi\) through \(M_3^\sigma\) probes the full state space.

The following result expresses the sense in which \({\bf (RM)}\) is the unitary lift of classical Brownian measurement dynamics. The reason for looking for a unitary lift is the relation between Newtonian and Schr\"odinger dynamics established above. Newtonian dynamics on the classical phase-space submanifold is obtained by restricting, or projecting, the Schr\"odinger dynamics to that submanifold. Since Schr\"odinger evolution is unitary, any state-space process whose restriction gives a stochastic Newtonian process should likewise be represented by a stochastic unitary evolution. Brownian motion, understood as a coarse-grained stochastic process arising from Newtonian dynamics, should therefore appear as the restriction of a unitary stochastic process on projective state space.

\begin{theorem}%[Gaussian unitary lift of Brownian motion]
\label{thm:Brownian-lift-RM}
Suppose that Brownian motion on \(M_3^\sigma\) is represented by a homogeneous and isotropic Gaussian random walk. Suppose also that its lift to projective state space is required to be a Schr\"odinger evolution, and hence unitary, and that the distribution of state-space steps originating at points of \(M_3^\sigma\) is invariant under spatial translations \({\bf a}\rightarrow{\bf a}+\Delta{\bf a}\). Then the Hamiltonian generating the lifted walk is distributed according to the Gaussian Unitary Ensemble, with scale fixed by the Brownian diffusion coefficient on \(M_3^\sigma\). Scalar multiples of the identity contribute only projectively irrelevant phases.
\end{theorem}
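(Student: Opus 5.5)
The plan is to work in a finite-dimensional coarse-grained subspace \(\mathcal H_n\subset L_2(\mathbb R^3)\), as in \({\bf (RM)}\), and identify the law of the one-step Hamiltonian \(\widehat h_{\rm RM}\) from the law of the projected steps. For a short interval \(dt\), the step (\ref{RM-step}) at a state \(\psi\) is, to first order, the tangent vector \(d\psi=-\frac{i}{\hbar}\widehat h_{\rm RM}\psi\,dt\), taken modulo the vertical direction \(i\psi\). Its horizontal part is the Fubini--Study displacement, and its projection onto \(T_\psi M_3^\sigma\) is the Brownian step on \(M_3^\sigma\).

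\emph{Step 1.} I first show that the state-space step at each \(\psi\in M_3^\sigma\) must be a centered Gaussian vector in the horizontal space \(T_\psi\mathbb{CP}^{n-1}\). By Theorem~\ref{thm:phase-submanifold}, \(M_3^\sigma\) is Euclidean, so a homogeneous, isotropic Gaussian walk on it has steps \(N(0,2D\,dt\,I_3)\) in the rescaled coordinates. Isotropy, applied to every direction that can be carried into \(T_\psi M_3^\sigma\) by translation, forces the tangent-direction variances to be equal. Translation invariance of the step distribution, together with the completeness of the translated Gaussians and of their derivatives noted above, then shows that the family of tangent spaces \(T_{\psi_{\bf a}}M_3^\sigma\) probes all of \(\mathcal H_n\). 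Since the step law is the same at every point up to the unitary translation \(T_{\Delta{\bf a}}\), the covariance of \(d\psi\) is fixed on a spanning set of directions. I will argue that it equals \(2D\,dt\) times the identity on the whole horizontal space. This gives the state-space analogue of Einstein's homogeneous isotropic step.

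\emph{Step 2.} Next I translate this into a statement about \(\widehat h_{\rm RM}\). The map \(\widehat h\mapsto -\frac{i}{\hbar}\widehat h\psi\) (mod \(i\psi\)) is linear from Hermitian matrices to \(T_\psi\mathbb{CP}^{n-1}\). If every such linear image, for every \(\psi\) in a spanning family, is a centered Gaussian, then \(\widehat h\) is itself jointly Gaussian, by Cramér--Wold applied to the real functionals \(\mathrm{Re}\langle\chi,\widehat h\psi\rangle\). Requiring the covariance of \(\widehat h\psi\) to be isotropic for all unit \(\psi\) means that \(\mathbb E|\langle\chi,\widehat h\psi\rangle|^2\) depends only on orthogonality data. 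Through polarization, this makes the covariance form \(\mathbb E[\mathrm{tr}(A\widehat h)\mathrm{tr}(B\widehat h)]\) invariant under conjugation by all of \(U(n)\).

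\emph{Step 3.} I then classify the \(U(n)\)-invariant centered Gaussian measures on Hermitian matrices. The space of Hermitian matrices decomposes into the scalar part \(\mathbb R I\) and the traceless part \(\mathfrak{su}(n)\), and these are inequivalent irreducible representations. By Schur's lemma, the covariance is therefore \(\alpha\,\mathrm{tr}(AB)+\beta\,\mathrm{tr}A\,\mathrm{tr}B\). The \(\alpha\)-term is exactly the GUE density \(\propto e^{-\mathrm{tr}\,\widehat h^2/2s^2}\). The \(\beta\)-term only adds a random multiple of \(I\), which contributes the global phase and is projectively irrelevant, as stated. Finally, I fix the scale \(s\) by matching the projected variance on \(T_\psi M_3^\sigma\) to \(2D\,dt\), using \(\rho\approx|c-d|/(2\sigma)\) from (\ref{M1-distance}) and the fact that second-order drift terms vanish as \(dt\to0\).

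The main obstacle is Step 1: promoting isotropy on the three-dimensional tangent spaces of \(M_3^\sigma\), plus translation invariance, to full isotropy in all normal directions. Completeness only shows that the translated tangent spaces span \(\mathcal H_n\); it does not by itself give equal variances in non-tangent directions. My first attempt will be to impose translation invariance as equivariance, \(\widehat h_{\rm RM}\stackrel{d}{=}T_{\Delta{\bf a}}\widehat h_{\rm RM}T_{\Delta{\bf a}}^{-1}\), and combine it with rotational isotropy. If this still leaves extra invariants, I will make explicit the natural assumption that the step law depends only on the Fubini--Study geometry, meaning invariance under the isometry group of \(\mathbb{CP}^{n-1}\). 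That assumption immediately yields \(U(n)\)-invariance and reduces the argument to Step 3.
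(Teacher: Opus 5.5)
\textbf{There is a genuine gap in Step 1.} Your architecture matches the paper's. Translation invariance plus completeness of the translated tangent vectors is supposed to upgrade isotropy on \(T_\psi M_3^\sigma\) to isotropy of the whole step. After that, a centered, conjugation-invariant Gaussian law on Hermitian matrices is GUE up to a scalar part. Your Step 3 is a sharper version of the paper's final paragraph: the form \(\alpha\,\mathrm{tr}(AB)+\beta\,\mathrm{tr}A\,\mathrm{tr}B\) is exactly why only a projectively irrelevant scalar survives. Since you work over \(\mathbb R\), note that the adjoint representation on traceless Hermitian matrices is absolutely irreducible, so its commutant is \(\mathbb R\). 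The gap is Step 1, as you suspected, and it is not merely unproved: it cannot be derived from the stated hypotheses. Following the paper does not help, because its sketch asserts exactly the inference you doubt (``by completeness, the covariance of the lifted generator is likewise fixed as a scalar multiple of the identity'').

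\textbf{Why it fails, with a counterexample.} Translations (and, for the Gaussian profile, rotations) are Fubini--Study isometries that map \(M_3^\sigma\) to itself. They therefore preserve the splitting of \(T_\psi\mathbb{CP}^{L_2}\), \(\psi\in M_3^\sigma\), into tangent and normal directions, so no normal direction is ever carried to a tangent one. Completeness concerns spans of vectors attached to different base points. The hypotheses, however, constrain the law of \(\widehat h\) only through the functionals \(\mathrm{Im}\langle\partial_{a_j}\psi_{\bf a},\widehat h\psi_{\bf a}\rangle\), each built from a single pair \((\psi_{\bf a},\partial_{a_j}\psi_{\bf a})\).
\begin{itemize}
\item Take \(\widehat h=\boldsymbol{\xi}\cdot\widehat{\bf p}\) with \(\boldsymbol{\xi}\sim N(0,s^2I_3)\). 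Since \(e^{-i\boldsymbol{\xi}\cdot\widehat{\bf p}\,dt/\hbar}\psi_{\bf a}=\psi_{{\bf a}+\boldsymbol{\xi}dt}\), this unitary walk stays on \(M_3^\sigma\) and is exactly the isotropic Gaussian walk \(d{\bf a}=\boldsymbol{\xi}\,dt\), with \(s\) chosen to match the diffusion coefficient.
\item Because \(\widehat{\bf p}\) commutes with translations and \(\boldsymbol{\xi}\) is isotropic, the law of \(\widehat h\) is invariant under conjugation by translations and rotations. So your equivariant fallback (a) holds as well. Yet this law lives on a three-dimensional space of operators, and its compression to any finite-dimensional sector is not GUE.
\item You can also add \(\eta\,\widehat{\bf p}^2\) with any independent centered \(\eta\), Gaussian or not. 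Since \(\widehat{\bf p}^2\) and \(\psi_{\bf a}\) are real, \(-i\eta\widehat{\bf p}^2\psi_{\bf a}\) has zero real inner product with the real tangent vectors \(\partial_{a_j}\psi_{\bf a}\). Hence the normal part of the step is completely unconstrained.
\item The paper's helix analogy fails in the same way. With unit tangent \(t(s)\propto(-R\sin s,R\cos s,b)\), every ambient covariance \(\mathrm{diag}(c_1,c_1,c_3)\) gives the constant tangential variance \((R^2c_1+b^2c_3)/(R^2+b^2)\).
\end{itemize}

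\textbf{The Cram\'er--Wold step.} The \(\eta\) example also breaks this step. Gaussianity of \(\widehat h\psi\) at single points controls only the functionals \(\mathrm{tr}(\widehat hB)\) with \(B=\tfrac12(|\psi\rangle\langle\chi|+|\chi\rangle\langle\psi|)\). Moreover, the hypotheses give Gaussianity only of the tangential part. So Gaussianity of \(\widehat h\) has to be assumed, not derived.

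\textbf{What a correct version looks like.} Fallback (a) cannot succeed. The only workable route is (b): assume the step law is invariant under the full unitary group of the coarse-grained sector (the ``simplified'' version the paper mentions right after its proof), and assume \(\widehat h\) is Gaussian. Then Step 3 is the entire proof, and the Brownian data on \(M_3^\sigma\) fix only the scale. Under the hypotheses as stated, the conclusion does not follow, since the example above satisfies all of them.
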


\begin{proof}[Sketch of proof]
By the assumed unitary lift, an infinitesimal step has the form
\[
\psi\mapsto e^{-i\widehat h\,dt/\hbar}\psi ,
\]
where \(\widehat h=\widehat h^\ast\). 
For \(\psi\in M_3^\sigma\), the tangential component of the infinitesimal step is
\[
\operatorname{proj}_{T_\psi M_3^\sigma}
\left(
-\frac{i}{\hbar}\widehat h\psi\,dt
\right).
\]
By assumption, these tangential increments reproduce the homogeneous and isotropic Gaussian random walk on \(M_3^\sigma\). Translation invariance of the state-space step distribution means that the same law is sampled at all translated localized states, corresponding to spatial translations
\[
{\bf a}\mapsto {\bf a}+\Delta{\bf a}.
\]
Since the translated tangent directions of \(M_3^\sigma\) form a complete family in the Hilbert state space, the covariance of the lifted generator is fixed by its action on a complete family of directions. Isotropy of the Brownian motion on \(M_3^\sigma\) forces the covariance of the induced tangent increments to be a scalar multiple of the identity on each \(T_\psi M_3^\sigma\). By completeness, the covariance of the lifted generator is likewise fixed as a scalar multiple of the identity on the corresponding state-space directions.

Thus \(\widehat h\) is a centered Gaussian Hermitian operator with invariant covariance. In a finite-dimensional coarse-grained sector, the corresponding law is the centered Gaussian law on Hermitian matrices invariant under unitary conjugation,
\[
\widehat h\mapsto U\widehat h U^{-1}.
\]
This is precisely the Gaussian Unitary Ensemble. Its overall scale is fixed by matching the tangential covariance of the induced walk on \(M_3^\sigma\) with the Brownian diffusion coefficient. The scalar part of \(\widehat h\), although present in the ensemble, contributes only a global phase and therefore does not affect the induced motion in projective state space.
\end{proof}

The preceding argument can be simplified if one assumes from the outset that the distribution of state-space steps is homogeneous and isotropic with respect to the Fubini--Study geometry. In that case, the covariance of the infinitesimal displacement is already the same at every point and in every tangent direction of projective state space. Therefore, once the scale is fixed by matching the induced Brownian covariance in a single tangent direction at one point of \(M_3^\sigma\), the entire Gaussian unitary ensemble is determined. The translation-invariance and completeness argument above is needed only to obtain this state-space isotropy from the weaker and more physically transparent assumptions of spatial translation invariance along \(M_3^\sigma\) and isotropy of the Brownian motion on \(M_3^\sigma\).

A simple finite-dimensional analogy helps clarify the uniqueness of the lift. Consider Brownian motion of a bead constrained to a helix in \(\mathbb R^3\). The helix is one-dimensional, but its tangent directions form a complete set in \(\mathbb R^3\): no nonzero vector in \(\mathbb R^3\) is orthogonal to all of them. If an ambient Gaussian random walk in \(\mathbb R^3\) has a translation-invariant distribution of steps and induces the given homogeneous Brownian motion along the helix, then the tangential data determine the ambient covariance, and hence the ambient Gaussian walk, up to its diffusion scale. In the present setting, \(M_3^\sigma\) plays the role of the helix, projective Hilbert space plays the role of the ambient space, and unitary Schr\"odinger evolution replaces arbitrary Euclidean steps by Hermitian generators.

%The preceding theorem gives the Brownian-to-random-matrix direction. Conversely, once the Gaussian Unitary Ensemble is used in \({\bf (RM)}\), the resulting state-space walk is automatically homogeneous and isotropic. Its restriction to the classical-space submanifold gives the corresponding Brownian motion.

It is useful to distinguish two closely related local descriptions of a random walk on a submanifold. A constrained random walk on a submanifold is defined intrinsically: its small steps lie in the tangent directions of the submanifold, and the resulting path remains on the submanifold. Equivalently, it may be modeled by allowing only those infinitesimal ambient steps that are tangent to the submanifold. A projected random walk begins with steps in the ambient space and then, for each step, retains its tangent component. For homogeneous and isotropic Gaussian small steps, these two descriptions have the same local tangential distribution after the diffusion scale is matched. Thus, at the infinitesimal level relevant for Brownian motion, the constrained walk on the submanifold agrees with the tangent projection of an ambient isotropic Gaussian walk.

In the present setting, Brownian motion on \(M_3^\sigma\) is the constrained classical walk, while \({\bf (RM)}\) gives the ambient unitary walk in projective state space. The preceding theorem showed that, under the stated translation-invariance and completeness assumptions, Brownian motion on \(M_3^\sigma\) has \({\bf (RM)}\) as its Gaussian unitary lift. Conversely, an \({\bf (RM)}\)-induced state-space walk gives Brownian motion on \(M_3^\sigma\) when its infinitesimal displacements are restricted to their tangential components along \(M_3^\sigma\).

\begin{theorem}%[Brownian motion induced on \(M_3^\sigma\)]
\label{thm:RM-Brownian-on-M}
For initial states \(\psi\in M_3^\sigma\), the tangential component of the \({\bf (RM)}\)-generated state-space displacement defines a homogeneous and isotropic Gaussian random walk on \(M_3^\sigma\). In the small-step limit, this walk converges to Brownian motion on \(M_3^\sigma\), with diffusion coefficient fixed by the scale of the GUE distribution.
\end{theorem}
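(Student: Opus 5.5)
The plan is to work in a finite-dimensional coarse-grained sector \(\mathcal H_n\subset L_2(\mathbb R^3)\) containing \(\psi\) and the tangent space \(T_\psi M_3^\sigma\), and to compute the law of the tangential displacement directly from the Gaussian structure of the GUE. Write \(\widehat h=\sum_{j,k}h_{jk}\ket{e_j}\langle e_k|\) in an orthonormal basis with \(e_1=\psi\). Under GUE the entries are independent centered Gaussians, with \(\mathrm{Re}\,h_{jk},\mathrm{Im}\,h_{jk}\) of variance \(s^2/2\) for \(j\neq k\) and real diagonal entries of variance \(s^2\). To first order, the step (\ref{RM-step}) is \(\delta\psi=-\frac{i}{\hbar}\widehat h\psi\,dt\). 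Modulo the vertical phase direction \(i\psi\), which is projectively irrelevant and absorbs \(h_{11}\), this becomes
\[
\delta\psi_\perp=-\frac{i\,dt}{\hbar}\sum_{j\ge2}h_{j1}e_j .
\]
The first step is to observe that this horizontal displacement is a standard complex Gaussian vector in \(\psi^\perp\). Regarded as a real vector in the Fubini--Study tangent space, it has covariance \(\frac{s^2dt^2}{2\hbar^2}\,\mathrm{Id}\). This reflects the unitary invariance of the GUE: the isotropy subgroup \(U(n-1)\) of \(\psi\) acts transitively on horizontal directions. Hence the state-space step is homogeneous and isotropic.

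The second step is to project orthogonally, in the real Fubini--Study inner product \(\mathrm{Re}\langle\cdot,\cdot\rangle\), onto \(T_\psi M_3^\sigma\). This subspace is spanned by \(\partial_{a^i}r_{{\bf a},\sigma}\). By Theorem~\ref{thm:phase-submanifold}, these three vectors are mutually orthogonal with a common norm, so after the rescaling of that theorem they form an orthonormal frame. Moreover they are real, so they are orthogonal to \(i\psi\) and lie in the horizontal space. A linear orthogonal projection of an isotropic Gaussian vector is again a centered isotropic Gaussian on the image. The tangential increment \(\delta{\bf a}\) is therefore Gaussian with covariance \(\frac{s^2dt^2}{2\hbar^2}\) times the identity in the rescaled coordinates, which corresponds to \((2\sigma)^2\) times that in \({\bf a}\)-units by (\ref{M1-distance}). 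This gives isotropy.

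For homogeneity, the next step uses spatial translations \(T_{\Delta{\bf a}}\). They are unitary, map \(M_3^\sigma\) to itself, and carry tangent frames to tangent frames. Since the GUE law is invariant under unitary conjugation (in the coarse-grained sector one takes the sector translation-covariant, or passes to the limit), the increment law at \(\psi_{{\bf a}+\Delta{\bf a}}\) coincides with that at \(\psi_{\bf a}\). Independence of successive draws, as stated in \textbf{(RM)}, makes the increments i.i.d. The process is thus a homogeneous isotropic Gaussian random walk on \(M_3^\sigma\cong\mathbb R^3\).

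The last step is the small-step limit. Since \(\delta{\bf a}\propto dt\) with variance \(\propto dt^2\), I would scale the GUE variance as \(s^2=\kappa/dt\), the usual white-noise scaling. Then \(\mathrm{Var}(\delta{\bf a})=D\,dt\) with \(D\) proportional to \(\kappa\sigma^2/\hbar^2\), and Donsker's invariance principle gives convergence to Brownian motion with diffusion coefficient \(D\). The main obstacle is justifying the tangential-only reading. The full displacement has normal components, which move the state off \(M_3^\sigma\), and the second-order terms of \(e^{-i\widehat h dt/\hbar}\) contribute drift of order \(s^2dt^2\sim dt\). To handle this, I would follow the ``constrained vs.\ projected walk'' discussion: define the walk by re-anchoring at the nearest point of \(M_3^\sigma\) (equivalently, the equivalence class) after each step. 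I would then check that the second-order drift is isotropic, hence has zero tangential mean by symmetry \(h\mapsto -h\) and invariance, and that curvature corrections vanish because \(M_3^\sigma\) is flat in its induced metric.
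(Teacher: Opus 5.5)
Your proposal is correct and follows the same route as the paper's sketch. Both take the first-order increment $-\frac{i}{\hbar}\widehat h_{\rm RM}\psi\,dt$ modulo the vertical phase direction, project it onto $T_\psi M_3^\sigma$, use the unitary invariance of the GUE to get a Gaussian tangential increment whose covariance is the same scalar multiple of the identity at every point, and then pass to the small-step limit. Your explicit covariance $\frac{s^2dt^2}{2\hbar^2}\,\mathrm{Id}$ and the white-noise scaling $s^2=\kappa/dt$ make precise what the paper leaves implicit in ``diffusion coefficient fixed by the scale of the GUE distribution'': with a fixed GUE scale the per-step variance is $O(dt^2)$, and the small-step limit would have zero diffusion.
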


\begin{proof}[Sketch of proof]
Let \(\psi\in M_3^\sigma\). Under an infinitesimal \({\bf (RM)}\) step,
\[
\psi\mapsto e^{-i\widehat h_{\rm RM}dt/\hbar}\psi ,
\]
the first-order displacement in projective state space is represented by
\[
-\frac{i}{\hbar}\widehat h_{\rm RM}\psi\,dt ,
\]
after quotienting by the vertical phase direction. Its tangential component along
\(M_3^\sigma\) is
\[
\operatorname{proj}_{T_\psi M_3^\sigma}
\left(
-\frac{i}{\hbar}\widehat h_{\rm RM}\psi\,dt
\right).
\]

Since \(\widehat h_{\rm RM}\) is drawn from the Gaussian Unitary Ensemble, this
tangential displacement is Gaussian. The unitary invariance of the GUE implies
that the distribution of the ambient infinitesimal displacement is the same at
all points of projective state space and has no preferred projective direction.
Consequently, its tangent projection to \(T_\psi M_3^\sigma\) has the same
covariance at all points \(\psi\in M_3^\sigma\). Moreover, the induced covariance
on each \(T_\psi M_3^\sigma\) is a scalar multiple of the identity with respect
to the induced Fubini--Study metric.

Here \(M_3^\sigma\) is a real, indeed totally real, submanifold of complex projective state space. The induced increments on \(T_\psi M_3^\sigma\) therefore have ordinary real orthogonal isotropy with respect to the induced Fubini--Study metric. At the same time, these increments arise as tangent components of an ambient unitary random walk in projective state space. The ambient walk is generated by Hermitian operators drawn from the GUE, whose unitary invariance gives transitivity on complex projective state space. Thus the real orthogonal isotropy observed on \(M_3^\sigma\) is the restriction of the unitary-invariant state-space dynamics to the real classical submanifold.

Thus the tangent projection of the \({\bf (RM)}\)-induced state-space walk has the local covariance of Brownian motion on \(M_3^\sigma\). Passing to the small-step limit gives Brownian motion on \(M_3^\sigma\), with diffusion coefficient fixed by the scale of the GUE distribution.
\end{proof}

The following property of \({\bf (RM)}\)-induced motion is central to the derivation of the Born rule and to the explanation of classical stability.

\begin{theorem}%[Isotropy of RM-induced motion]
\label{thm:RM-isotropy}
The random unitary steps generated by \({\bf (RM)}\) induce a homogeneous and isotropic random walk on projective Hilbert space with respect to the Fubini--Study metric.
\end{theorem}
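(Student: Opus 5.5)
We work in a finite-dimensional coarse-grained sector \(\mathbb C^n\), with projective space \(\mathbb{CP}^{n-1}\) and the Fubini--Study metric, as {\bf (RM)} prescribes. The argument rests on two facts: the law of a GUE matrix is invariant under unitary conjugation \(\widehat h\mapsto U\widehat hU^{-1}\), and \(U(n)\) acts on \(\mathbb{CP}^{n-1}\) by Fubini--Study isometries, transitively on points and transitively on unit tangent directions at each point. The plan is to show that the law of the one-step displacement at \(\varphi\), namely \(\varphi\mapsto e^{-i\widehat h dt/\hbar}\varphi\), is carried to the law at \(U\varphi\) by the isometry \(U\). We then show that at a fixed point this law is invariant under the isotropy subgroup, which acts transitively on directions. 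The first statement gives homogeneity and the second gives isotropy.

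\emph{Step 1 (homogeneity).} Fix \(U\in U(n)\). Conjugation commutes with exponentiation, so \(e^{-iU\widehat hU^{-1}dt/\hbar}U\varphi = U e^{-i\widehat h dt/\hbar}\varphi\). Since \(U\widehat hU^{-1}\) has the same law as \(\widehat h\), the displacement starting at \(U\varphi\) has the law of the \(U\)-image of the displacement starting at \(\varphi\). Because \(U\) is a Fubini--Study isometry and \(U(n)\) is transitive, every point sees the same step law up to isometry. The same holds for multistep walks, since the GUE draws are independent.

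\emph{Step 2 (isotropy).} Take \(\varphi=e_1\). The first-order horizontal displacement is \(v=-\tfrac{i}{\hbar}dt\,(\widehat h e_1-\langle e_1,\widehat h e_1\rangle e_1)\), which identifies with \(-\tfrac{i}{\hbar}dt\,(h_{21},\dots,h_{n1})\in\mathbb C^{n-1}\cong T_{e_1}\mathbb{CP}^{n-1}\). Under the GUE, the off-diagonal entries \(h_{j1}\), \(j\ge2\), are i.i.d.\ centered complex Gaussians. The vector \(v\) is therefore a standard circularly symmetric Gaussian in \(\mathbb C^{n-1}\), whose real covariance is a scalar multiple of the real part of the Hermitian inner product, which is exactly the Fubini--Study metric at \(e_1\). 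The diagonal entry \(h_{11}\) contributes only a vertical phase and drops out; this matches the remark in Theorem~\ref{thm:Brownian-lift-RM}. The same conclusion follows abstractly: the stabilizer \(U(1)\times U(n-1)\) of \([e_1]\) acts on the tangent space as \(U(n-1)\), which is transitive on the real unit sphere \(S^{2n-3}\), so any invariant centered Gaussian law there is a multiple of the metric.

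\emph{Step 3 (higher orders and the limit).} The second-order terms of \(e^{-i\widehat h dt/\hbar}\) are equivariant under the same conjugation argument, so the full finite step law is isometry-covariant and not merely its linearization. Scaling the GUE variance as \(dt^{-1}\) gives a Brownian scaling limit. Its generator is an isometry-invariant second-order operator, hence a multiple of the Laplace--Beltrami operator of the Fubini--Study metric. This is the homogeneous isotropic diffusion on \(\mathbb{CP}^{n-1}\).

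The main obstacle is making the statement meaningful for the infinite-dimensional \(\mathbb{CP}^{L_2}\), because no unitarily invariant Gaussian measure on Hermitian operators exists there. We would handle this as {\bf (RM)} itself does. We prove the statement in each finite coarse-grained sector and note that the constants are sector-independent after matching the scale to a tangent direction, as in Theorem~\ref{thm:RM-Brownian-on-M}. We then interpret the infinite-dimensional claim as the compatible family of these sector statements.
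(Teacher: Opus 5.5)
Your proposal is correct and follows essentially the same route as the paper. The paper also combines conjugation invariance of the GUE law with the transitive, Fubini--Study-isometric action of the unitary group on \(\mathbb{CP}^{n-1}\), which gives homogeneity, and with transitivity of the stabilizer on unit tangent directions, which gives isotropy, all within a finite-dimensional coarse-grained sector. Your explicit first-order computation at \(e_1\), using the i.i.d.\ off-diagonal entries, and your Laplace--Beltrami scaling limit are useful concrete supplements that the paper's sketch does not include, but they do not change the underlying argument.
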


\begin{proof}[Sketch of proof]
In the finite-dimensional sector in which \({\bf (RM)}\) is applied, the GUE measure on Hermitian matrices is invariant under
\[
\widehat h_{\rm RM}\mapsto U\widehat h_{\rm RM}U^{-1}
\]
for every unitary \(U\). Therefore the infinitesimal unitary transformation generated by \(\widehat h_{\rm RM}\) has a distribution invariant under the induced action of the unitary group on projective space. Since this action preserves the Fubini--Study metric and is transitive, the displacement distribution is the same at every point of projective space. Moreover, the stabilizer of each point acts transitively on directions of equal Fubini--Study norm in the tangent space, so the displacement law is isotropic in each tangent space. This is precisely homogeneity and isotropy of the induced state-space random walk.
\end{proof}

The Gaussian Unitary Ensemble is therefore singled out in three complementary ways. First, using the completeness of the translated Gaussian states and their tangent vectors, a homogeneous and isotropic Brownian walk on \(M_3^\sigma\) has a unique Gaussian unitary lift under spatial translation invariance. This lift is generated by GUE Hamiltonians, with the overall scale fixed by the Brownian diffusion coefficient. Second, once GUE is used as the distribution of random Hamiltonians, the induced tangential walk on \(M_3^\sigma\) is Brownian motion on the classical-space submanifold. Third, GUE gives a homogeneous and isotropic random walk in projective Hilbert space. Note that the same conclusion would not hold for the Gaussian Orthogonal Ensemble, because the orthogonal group does not act transitively on complex projective space.

The conjecture {\bf (RM)} is not a microscopic derivation of the interaction Hamiltonian from first principles. However, the preceding results show that its form is strongly constrained, and in this sense nearly forced, by the geometric framework. Once Brownian measurement dynamics on the classical-space submanifold is required to admit a translation-invariant unitary lift to projective state space, the effective Hamiltonian must be drawn from the Gaussian Unitary Ensemble, with scale fixed by the Brownian diffusion coefficient. Thus {\bf (RM)} is not an arbitrary stochastic postulate, but the natural state-space analogue of the assumptions used in Einstein's derivation of Brownian motion. This conclusion is also consistent with the familiar role of random matrices in modeling universal fluctuation statistics in complex quantum systems \cite{Wigner, BGS}. Its content is that the coarse-grained effect of many short, complicated, and uncontrollable interaction events is an isotropic random walk in projective state space, whose induced tangential motion on the classical-space submanifold reproduces Brownian measurement dynamics. The physical adequacy of this effective description must ultimately be tested by its consequences and by the plausibility of the resulting parameter values. These estimates will be discussed below.

\subsection{Equivalence classes and measurement outcomes}

A measuring device cannot distinguish all points of projective Hilbert space. In the present framework,
a recorded outcome is therefore not identified with an individual state vector,
but with one of the equivalence classes of states indistinguishable by the
device at the given resolution. As in Section \ref{equiv}, for a position measurement in one spatial direction,
let
\[
\mu_z(\varphi)=\langle \varphi,\widehat z\varphi\rangle
\]
denote the expectation value of position, and let
\[
\delta_z(\varphi)
=
\left(
\langle \varphi,(\widehat z-\mu_z)^2\varphi\rangle
\right)^{1/2}
\]
denote the corresponding position standard deviation. A finite-resolution
position outcome is represented by a class of states with the same value of
\(\mu_z\), interpreted as the recorded position at the given resolution, and
with position uncertainty bounded by the resolution scale \(\sigma\), up to an
inessential numerical factor. Thus a classical position outcome is represented
not by a single wave function, but by an equivalence class localized within the
detector resolution.

For the reduction process it is useful to separate the position expectation value and
the degree of localization from the remaining directions in state space. Define
\begin{equation}
\label{F-map}
F(\varphi)=(\mu_z,\delta_z).
\end{equation}
The level sets
\[
\mu_z=\tau,
\qquad
\delta_z=\lambda
\]
form a codimension-two foliation of the relevant regular part of state space.
Translating and scaling a suitable initial state produces a two-dimensional
surface \(M_\varphi\subset\mathbb{CP}^{L_2}\), with coordinates
\[
\tau=\mu_z,
\qquad
s=\ln(\delta_z/\sigma),
\]
where the same resolution scale \(\sigma\) is used as reference. Translation
changes the mean position \(\tau\), while scaling changes the width
\(\delta_z\), equivalently the coordinate \(s\).

To see the geometry of these coordinates explicitly, one may work in a
finite-dimensional approximation of state space generated by a finite
superposition of mutually orthogonal, or approximately orthogonal, localized
functions. For instance, one may use sufficiently narrow Gaussian states with
well-separated centers. In such a representation, the \(\tau\)- and
\(s\)-coordinates on \(M_\varphi\) are orthogonal with respect to the induced
Fubini--Study metric \cite{KryukovPHLA,KryukovPhysicsA}.
The finite-resolution localized sector associated with resolution \(\sigma\)
is described by
\[
s\le 0.
\]
This is the sector in which the state is sufficiently localized to be assigned
to one of the classical position-outcome classes introduced above.

When the walk in \({\bf (RM)}\) is expressed in the \((\tau,s)\)-coordinates on
\(M_\varphi\), the \(\tau\)-component describes motion along the recorded-position
direction, while the \(s\)-component describes motion toward or away from the
localized sector. Since these coordinates are orthogonal and the \({\bf (RM)}\)
step distribution is isotropic, the corresponding infinitesimal increments in
\(\tau\) and \(s\) are independent Gaussian variables. Motion of the state along
the leaves of the foliation does not change \(\mu_z\) or \(\delta_z\), and
therefore does not contribute to state reduction.

The state diffuses in projective state space, and a recorded outcome occurs
when the path reaches the localized sector \(s\le0\), corresponding to one of
the detector-defined equivalence classes. 
Since the \({\bf (RM)}\) distribution
is isotropic, the induced walk in \(s\) is symmetric and has no drift.
Hence, after
sufficiently many steps, the probability that the state is found in the
localized sector at a given observation time is approximately \(1/2\). As we now show,
conditional on reaching the localized sector, the relative probability of
reaching a particular outcome sector satisfies the Born rule.

\subsection{Normal distribution and Born rule}

The same {\bf (RM)}-induced isotropic random walk has two different manifestations, depending on whether it is constrained to a classical submanifold or allowed to evolve in the full projective state space. When constrained to the classical submanifold \(M_3^\sigma\simeq \mathbb R^3\), the Fubini--Study metric reduces to the Euclidean metric. The {\bf (RM)} process then becomes a random walk in \(\mathbb R^3\) with independent, isotropic Gaussian increments. In the small-increment limit, this process converges to Brownian motion, and the resulting transition probabilities are normal distributions centered at the classical position.
In the full projective state space, the same {\bf (RM)} process is homogeneous and isotropic with respect to the Fubini--Study metric. Hence transition probabilities can depend only on projective distance. 

For a single normalized final state \(\psi\), the Born rule may be written as
\[
P(\varphi\to\psi)=|\langle \psi,\varphi\rangle|^2
=
\cos^2\rho(\varphi,\psi),
\]
where \(\rho(\varphi,\psi)\) is the Fubini--Study distance. For an outcome represented by a closed subspace \(\mathcal H_\alpha\), or by the corresponding projective sector, the probability is
\[
P(\alpha)=\|P_\alpha\varphi\|^2,
\]
where \(P_\alpha\) is the orthogonal projection onto \(\mathcal H_\alpha\). In particular, for a position measurement in a region \(\Delta\subset\mathbb R^3\),
\[
P(\Delta)=\|P_\Delta\varphi\|^2
=
\int_\Delta |\varphi({\bf x})|^2\,d^3{\bf x},
\]
where \(P_\Delta\varphi({\bf x})=\chi_\Delta({\bf x})\varphi({\bf x})\).

This projection rule can be expressed geometrically as a distance to the outcome sector. Let \(\mathbb P(\mathcal H_\alpha)\) denote the projectivization of \(\mathcal H_\alpha\), and define
\[
\rho(\varphi,\mathbb P(\mathcal H_\alpha))
=
\inf_{\psi\in\mathcal H_\alpha,\ \|\psi\|=1}
\rho(\varphi,\psi).
\]
Since \(\rho(\varphi,\psi)=\arccos |\langle \psi,\varphi\rangle|\), one obtains
\[
\cos^2\rho(\varphi,\mathbb P(\mathcal H_\alpha))
=
\sup_{\psi\in\mathcal H_\alpha,\ \|\psi\|=1}
|\langle \psi,\varphi\rangle|^2
=
\|P_\alpha\varphi\|^2.
\]
Thus the Born probability of an outcome sector is the squared cosine of the Fubini--Study distance from the initial state to that sector. For a position region \(\Delta\), this gives
\[
P(\Delta)
=
\cos^2\rho(\varphi,\mathbb P(\mathcal H_\Delta))
=
\int_\Delta |\varphi({\bf x})|^2\,d^3{\bf x}.
\]

The use of sectors is essential. A physical measurement does not require the random path to hit an exact ray or an exact lower-dimensional submanifold. The detector-defined outcome is an equivalence class, or a finite-resolution neighborhood of such a class. The probability of an outcome is therefore the probability of entering the corresponding sector of indistinguishable states, which has nonzero operational thickness.

\begin{theorem}%[Normal distribution and the Born rule]
\label{thm:normal-to-born}
Under the identification of classical space $\mathbb R^3$ with the localized-state submanifold $M_3^\sigma$, the Born transition law between states in $M_3^\sigma$ reproduces the normal probability law for the measured position of the corresponding classical particle. Conversely, if transition probabilities in projective state space depend only on Fubini--Study distance, then the normal probability law on $M_3^\sigma$ determines a unique extension to arbitrary pairs of states. This extension is the Born transition law.
\end{theorem}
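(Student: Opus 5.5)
The forward direction will be a direct computation with the Gaussian representatives. For \(g_{{\bf a},\sigma}, g_{{\bf b},\sigma}\in M_3^\sigma\), the three-dimensional analogue of \eqref{M1-distance} (componentwise, as in \eqref{N-particle-distance} with \(N=1\) and \({\bf p}={\bf q}\)) gives
\[
P(g_{\bf a}\to g_{\bf b})=|\langle g_{{\bf b},\sigma},g_{{\bf a},\sigma}\rangle|^2
=\exp\!\left[-\frac{|{\bf a}-{\bf b}|^2}{4\sigma^2}\right],
\]
which is, up to the normalization \((4\pi\sigma^2)^{-3/2}\) coming from the measure \(d^3{\bf b}\) on the classical submanifold, the normal density in \({\bf b}\) centered at \({\bf a}\) with variance \(2\sigma^2\) per coordinate. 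I will make the normalization explicit by noting that the family \(\{g_{\bf b}\}\) gives a resolution of the identity, \(\int |g_{\bf b}\rangle\langle g_{\bf b}|\,d^3{\bf b}/(4\pi\sigma^2)^{3/2}\cdot c=I\) for a suitable constant \(c\). Alternatively, I can use the Born rule for a region \(\Delta\) applied to \(g_{\bf a}\), \(\int_\Delta |g_{{\bf a},\sigma}({\bf x})|^2 d^3{\bf x}\), which is itself a normal law of variance \(\sigma^2\). Either reading shows that Born probabilities restricted to \(M_3^\sigma\) are Gaussian in the Euclidean distance of Theorem~\ref{thm:phase-submanifold}. This direction is routine.

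For the converse, I will assume \(P(\varphi\to\psi)=f(\rho(\varphi,\psi))\) for a function \(f\) on \([0,\pi/2]\). This form is consistent with Theorem~\ref{thm:RM-isotropy}, since a homogeneous isotropic process can only produce distance-dependent transition laws. On \(M_3^\sigma\), the normal law prescribes \(P=\exp[-|{\bf a}-{\bf b}|^2/4\sigma^2]\). Using the exact relation \(\cos^2\rho(g_{\bf a},g_{\bf b})=\exp[-|{\bf a}-{\bf b}|^2/4\sigma^2]\), this reads \(f(\rho)=\cos^2\rho\) for every value \(\rho\) attained by pairs in \(M_3^\sigma\). The key observation is that \(|{\bf a}-{\bf b}|\) ranges over \([0,\infty)\), so \(\exp[-|{\bf a}-{\bf b}|^2/4\sigma^2]\) ranges over \((0,1]\) and \(\rho\) ranges over \([0,\pi/2)\). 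Hence \(f\) is determined on \([0,\pi/2)\). At \(\rho=\pi/2\), I will extend by continuity, or argue from normalization over an orthonormal basis, that \(f(\pi/2)=0\). Since every pair of states in \(\mathbb{CP}^{L_2}\) has some distance in \([0,\pi/2]\), the extension \(P(\varphi\to\psi)=\cos^2\rho(\varphi,\psi)=|\langle\psi,\varphi\rangle|^2\) is unique and equals the Born law. The extension to sectors \(\mathbb P(\mathcal H_\alpha)\) then follows from the distance-to-sector identity already derived, \(\cos^2\rho(\varphi,\mathbb P(\mathcal H_\alpha))=\|P_\alpha\varphi\|^2\).

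The main obstacle is the normalization ambiguity. The normal law on \(M_3^\sigma\) is naturally a probability density, whereas \(\cos^2\rho\) is a dimensionless transition probability, so I must fix in which sense the two are identified. I plan to handle this by stating the normal law as the unnormalized Gaussian kernel \(\exp[-|{\bf a}-{\bf b}|^2/4\sigma^2]\) in the Fubini--Study-rescaled coordinates, fixing its peak value \(1\) at \(\rho=0\) by the requirement \(P(\varphi\to\varphi)=1\). I will treat the density normalization as the change of measure from points to the equivalence-class sectors \(\widetilde M_3^\sigma\) of finite thickness. If this identification proves too loose, I will fall back to stating the converse for the kernel rather than the density, which is what the distance-only hypothesis actually constrains.
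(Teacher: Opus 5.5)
Your argument is correct and close in spirit to the paper's, but it uses a different family of test pairs. The paper compares $g_{{\bf a},\sigma}$ with a detector Gaussian $g_{{\bf b},\delta}$ of independent width, for which $|\langle g_{{\bf a},\sigma},g_{{\bf b},\delta}\rangle|^2=\bigl(2\sigma\delta/(\sigma^2+\delta^2)\bigr)^3\exp[-|{\bf a}-{\bf b}|^2/2(\sigma^2+\delta^2)]$. As $\delta\to0$ this becomes the variance-$\sigma^2$ normal density times a cell volume $(8\pi\delta^2)^{3/2}$. That is how the paper handles exactly the density-versus-dimensionless-probability obstacle you raise, and your two readings are the $\delta=\sigma$ and $\delta\to0$ ends of this family. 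For the converse, the paper lets both ${\bf a}-{\bf b}$ and $\delta$ vary to cover the range of $\rho$, and so leaves $M_3^\sigma$. You stay on $M_3^\sigma$, where $\cos^2\rho=\exp[-|{\bf a}-{\bf b}|^2/4\sigma^2]$ is exact and already covers $[0,\pi/2)$. That is cleaner and more faithful to the wording of the statement. Your separate treatment of $\rho=\pi/2$, which is never attained by Gaussian pairs since their overlaps are positive, fills a point the paper's ``full interval'' glosses over.

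Two corrections. First, drop the resolution-of-identity claim, because it is false. The operator $\int|g_{{\bf b},\sigma}\rangle\langle g_{{\bf b},\sigma}|\,d^3{\bf b}$ has kernel $\exp[-|{\bf x}-{\bf y}|^2/8\sigma^2]$. This is a Gaussian smoothing operator, not a multiple of $I$ for any constant. It is the phase-space family $g_{{\bf b},\sigma}e^{i{\bf p}\cdot{\bf x}/\hbar}$, with measure $d^3{\bf b}\,d^3{\bf p}/(2\pi\hbar)^3$, that resolves the identity. You do not need it: normalizing the kernel in ${\bf b}$ is just a Gaussian integral, and the region reading $\int_\Delta|g_{{\bf a},\sigma}|^2$ is the honest probability.

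Second, in the converse the width of the prescribed kernel matters as much as its peak value. If you feed the variance-$\sigma^2$ law of your second reading into pairs on $M_3^\sigma$ with peak $1$, you get $f(\rho)=\cos^4\rho$. So you must state explicitly that the law imposed on $M_3^\sigma$ is the same-resolution kernel of variance $2\sigma^2$, namely the one already identified with $|\langle g_{{\bf b},\sigma},g_{{\bf a},\sigma}\rangle|^2$ in the forward direction. With that made explicit, the converse reduces, as in the paper, to the range argument plus the endpoint.
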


\begin{proof}[Sketch of proof]
The embedding
\[
\omega:\mathbb R^3\to \mathbb{CP}^{L_2},
\qquad
\omega({\bf a})=g_{{\bf a},\sigma},
\]
identifies the classical position \({\bf a}\) of a particle with the localized state \(g_{{\bf a},\sigma}\in M_3^\sigma\). Therefore, the probability of finding a Brownian particle in a small region \(W\subset\mathbb R^3\) may be equivalently viewed as the probability that the corresponding state lies in the region \(\omega(W)\subset M_3^\sigma\). Thus the normal probability distribution for measured positions in \(\mathbb R^3\) becomes a probability distribution for localized states on \(M_3^\sigma\).

Let \(W \subset\mathbb R^3\) be a small region of diameter \(\delta\) centered at \({\bf b}\). The state associated with this finite-resolution region may be represented by a narrow Gaussian \(g_{{\bf b},\delta}\). Direct computation gives
\[
|\langle g_{{\bf a},\sigma},g_{{\bf b},\delta}\rangle|^2
=
\left(\frac{2\sigma\delta}{\sigma^2+\delta^2}\right)^3
\exp\!\left[
-\frac{({\bf a}-{\bf b})^2}{2(\sigma^2+\delta^2)}
\right].
\]
Equivalently,
\[
|\langle g_{{\bf a},\sigma},g_{{\bf b},\delta}\rangle|^2
=
\cos^2\rho(g_{{\bf a},\sigma},g_{{\bf b},\delta}).
\]
As \(\delta\) becomes small, the right-hand side is the normal probability density centered at \({\bf a}\), multiplied by the corresponding small volume factor. Hence the normal probability of finding the classical particle near \({\bf b}\) agrees with the Born transition probability from \(g_{{\bf a},\sigma}\) to the localized state representing that region.

Finally, the Fubini--Study distances between Gaussian states \(g_{{\bf a},\sigma}\) and \(g_{{\bf b},\delta}\) range over the full interval of possible projective distances. Therefore, if transition probabilities in projective state space depend only on Fubini--Study distance, the rule determined on \(M_3^\sigma\) determines the rule for arbitrary normalized states \(\psi\) and \(\phi\). This gives
\[
P(\psi\to\phi)
=
\cos^2\rho(\psi,\phi)
=
|\langle\psi,\phi\rangle|^2,
\]
which is the Born rule.
\end{proof}

The same conclusion applies to the equivalence-class manifold
\(\widetilde M_3^\sigma\). The Gaussian representatives determine the induced
metric on \(\widetilde M_3^\sigma\), and distances between classes are computed
using these representatives. Hence a classical particle represented by an
equivalence class has the same distance relation to other classes as the
corresponding Gaussian representative has to the other Gaussian representatives.
The normal law for the particle position therefore agrees with the Born
transition law between the corresponding representatives. Since this gives the
same distance-dependent rule as on \(M_3^\sigma\), the extension to the full
projective state space proceeds exactly as before.

Dynamically, the result says that the same state-space diffusion has two complementary descriptions. When the isotropic walk is constrained to the localized classical sector, or equivalently to \(\widetilde M_3^\sigma\), it becomes the Brownian motion of a measured classical particle, producing the normal distribution of measurement errors. When the same isotropic walk is allowed to evolve through the full projective state space, the transition probabilities are governed by the Fubini--Study distance and therefore by the Born rule. Thus the Gaussian distribution of classical measurement errors and the Born rule of quantum measurement arise from the same geometric source: isotropic diffusion on state space. Classical measurements probe the process through its induced motion on the localized equivalence-class manifold, whereas microscopic measurements involve diffusion through the full projective state space before a detector-defined equivalence class is reached.

Having established the geometric relation between the normal law on the classical submanifold and the Born rule in projective state space, we now turn to the macroscopic regime. The goal is to show that the same \({\bf (RM)}\)-induced stochastic dynamics, combined with the tangent Schr\"odinger flow on \(M_{3,3}^{\sigma}\), yields stable Newtonian motion when conditioned on finite-resolution environmental records.

\section{Macroscopic classicality as a conditioned stochastic process}
\label{sec:classicality}

The previous sections described state reduction as an {\bf (RM)}-induced isotropic random walk on projective state space together with equivalence classes determined by finite detector resolution. We now apply the same mechanism to the motion of macroscopic bodies. The goal is to show, at the mathematical level, how Newtonian motion arises as a stroboscopic process: the state undergoes {\bf (RM)} diffusion in a neighborhood of the classical sector, while repeated returns to the localized sector produce recorded positions distributed narrowly around the Newtonian trajectory. In the next section, we provide physical estimates supporting the validity of these assumptions.

We begin with the one-dimensional case of a macroscopic point particle, since it contains the essential mechanism. As in Section \ref{equiv}, let \(M_{1,1}^{\sigma}\) denote the phase-space submanifold of localized particle states, with coordinates \((a,p)\), and let \(\widetilde M_{1,1}^{\sigma}\) denote the corresponding formulation in terms of equivalence classes at position resolution \(\sigma\). On \(M_{1,1}^{\sigma}\), the tangent component of the Schr\"odinger flow gives
\begin{equation}
\label{Newtonian-tangent-flow}
\frac{da}{dt}=\frac{p}{M},
\qquad
\frac{dp}{dt}=-\frac{dV}{da}.
\end{equation}
The same equations hold on the equivalence-class manifold
\(\widetilde M_{1,1}^{\sigma}\), since the variables \((a,p)\) label the
corresponding classes and the induced tangent dynamics is unchanged. Thus,
whenever the state lies on \(\widetilde M_{1,1}^{\sigma}\), or sufficiently
close to the set underlying \(\widetilde M_{1,1}^{\sigma}\), the free
Schr\"odinger velocity has a well-defined Newtonian tangent component.

Mathematically, the process we aim to derive is a random walk with intermittent conditioning. The {\bf (RM)} term produces isotropic random increments in projective state space, while the Schr\"odinger flow contributes Newtonian tangent drift whenever the state lies sufficiently close to the set of states underlying \(\widetilde M_{1,1}^{\sigma}\). When the path reaches the \(\sigma\)-localized sector \(\widetilde M_{1,1}^{\sigma}\), the particle behaves classically, its coordinate \(a\) is recorded as a classical position, and the subsequent walk is conditioned on this recorded value. The observed positions then form a sequence of conditional random variables centered on the Newtonian trajectory. The task is to show that physically reasonable choices of the time step and step variance allow the {\bf (RM)} and Schr\"odinger contributions to be separated while keeping these conditional distributions narrow on the resolution scale \(\sigma\). In doing so, we will work with the representative manifold \(M_{1,1}^\sigma\), keeping in mind that the Newtonian component of the motion can equivalently be obtained on the equivalence-class manifold \(\widetilde M_{1,1}^\sigma\).

\subsection{Alternating drift and {\bf (RM)} diffusion}

Let the total effective Hamiltonian during environmental monitoring be written schematically as
\begin{equation}
\label{combined-Hamiltonian}
\widehat h_{\rm tot}
=
\widehat h+\widehat h_{\rm RM},
\end{equation}
where \(\widehat h\) is the usual Hamiltonian of the particle and \(\widehat h_{\rm RM}\) is the random-matrix Hamiltonian representing the environmental interaction. The process is considered on a coarse-grained time scale on which \(\widehat h_{\rm RM}\) acts during short interaction windows, while \(\widehat h\) gives the deterministic Schr\"odinger evolution between them.

The desired separation is
\begin{equation}
\label{alternating-evolution}
e^{-\frac{i}{\hbar}(\widehat h+\widehat h_{\rm RM})dt}
\approx
e^{-\frac{i}{\hbar}\widehat h_{\rm RM}dt}
e^{-\frac{i}{\hbar}\widehat h dt},
\end{equation}
on the dynamically relevant \(\sigma\)-localized sector. This approximation is justified when the deterministic displacement and spreading produced by \(\widehat h\) during a single interaction window are negligible on the resolution scale. Equivalently, if \(P_\sigma\) denotes the local tangent projection onto the \(\sigma\)-localized tube around \(M_{1,1}^{\sigma}\), one requires
\begin{equation}
\label{commutator-small}
P_\sigma[\widehat h,\widehat h_{\rm RM}]P_\sigma=O(\varepsilon),
\qquad
\varepsilon\ll1.
\end{equation}
Under this condition the evolution may be treated, to the required accuracy, as alternating Newtonian tangent drift and {\bf (RM)} diffusion.

As shown in Section \ref{sec:classical-submanifolds}, for states on
\(M_{1,1}^{\sigma}\), the Schr\"odinger velocity admits the orthogonal
decomposition
\begin{equation}
\label{COM-decomposition}
-\frac{i}{\hbar}\widehat h\psi
=
\frac{da}{dt}\,\partial_a\psi
+
\frac{dp}{dt}\,\partial_p\psi
+
X_\perp,
\end{equation}
where the first two terms are tangent to \(M_{1,1}^{\sigma}\), while
\(X_\perp\) is orthogonal to \(M_{1,1}^{\sigma}\) with respect to the
Fubini--Study metric. The tangent coefficients are given by \eqref{Newtonian-tangent-flow}. For states lying on \(M_{1,1}^{\sigma}\), the free-evolution velocity thus decomposes into a tangent component generating classical drift and an orthogonal component responsible for spreading.

The orthogonal component is small for macroscopic bodies on environmental interaction time scales. For Gaussian representatives one obtains the spreading scale
\begin{equation}
\label{spreading-scale}
\|X_\perp\|\sim \frac{\hbar}{M\sigma^2},
\end{equation}
or equivalently the spreading time
\begin{equation}
\label{spreading-time}
T_{\rm spr}=\frac{M\sigma^2}{\hbar}.
\end{equation}
Thus the two basic smallness conditions are
\begin{equation}
\label{smallness-conditions}
\frac{v\,dt}{\sigma}\ll1,
\qquad
\frac{dt}{T_{\rm spr}}\ll1,
\end{equation}
where \(v\) is the characteristic classical velocity and \(dt\) is the duration of an environmental interaction window. The first condition says that the tangent drift during one interaction is below resolution; the second says that wave-packet spreading during one interaction is negligible.

\subsection{Conditioning on returns to the localized sector}

The {\bf (RM)} steps move the state away from the exact classical submanifold into the surrounding projective state space. The relevant question is therefore not whether the state remains exactly on the set underlying \(\widetilde M_{1}^{\sigma}\), but whether it returns frequently to this localized sector, where a position record can be assigned.

As in the previous section, let
\[
s=\ln(\delta_z/\sigma)
\]
be a transverse coordinate measuring the logarithmic width of the state, with \(\sigma\) a fixed reference length. The condition that the state lies in the localized sector is
\begin{equation}
\label{localized-sector}
s\le 0.
\end{equation}
Under {\bf (RM)} dynamics without drift, the discrete-time process \(s_n\), evaluated at successive {\bf (RM)} steps, is a symmetric random walk. Thus the return problem to \(\widetilde M_{1}^{\sigma}\) becomes a standard hitting problem for a one-dimensional random walk.

Let \(\tau_{\rm ret}\) be the first return time to the sector \(s\le0\). By standard recurrence properties of symmetric one-dimensional random walks, the return probability is nonzero and returns occur with high probability on sufficiently long time scales. More quantitatively, the Sparre Andersen theorem gives, for the probability that the walk has not returned to \(s\le0\) after \(n\) steps,
\begin{equation}
\label{Sparre-Andersen}
\mathbb P(\tau_{\rm ret}>n)
=
\frac{\binom{2n}{n}}{2^{2n}}
\sim
\frac{1}{\sqrt{\pi n}}.
\end{equation}
This estimate shows that frequent returns to the localized sector are typical once the {\bf (RM)} step time is sufficiently small.

Whenever the path reaches \(\widetilde M_1^\sigma\), the coordinate \(a\) is recorded. A particle whose state lies in the set underlying \(\widetilde M_{1}^\sigma\) behaves classically, and the recording process can be described by the usual Newtonian dynamics of the particle together with either the measuring device or its environment; see Section~\ref{sec:experiments}.
 The subsequent evolution is then conditioned on the recorded value. In this way, the process generates a sequence
\[
a_0,a_1,a_2,\ldots
\]
of recorded positions. These are not exact values of an underlying classical trajectory, but conditional random variables associated with repeated returns to the detector-defined localized sector and centered on the Newtonian trajectory.

\subsection{Stroboscopic Newtonian motion}
\label{strobo}

Let \(a_{\rm N}(t)\) denote the Newtonian solution generated by \eqref{Newtonian-tangent-flow}. The recorded positions \(a_k\) define stroboscopic Newtonian motion if their conditional distributions remain sharply concentrated around \(a_{\rm N}(t_k)\) on the resolution scale:
\begin{equation}
\label{stroboscopic-condition}
\mathbb E(a_k\mid a_{k-1})=a_{\rm N}(t_k)+o(\sigma),
\qquad
\operatorname{Var}(a_k\mid a_{k-1})\ll\sigma^2.
\end{equation}
The first condition expresses that the deterministic component of the motion is Newtonian. The second expresses that the {\bf (RM)}-induced stochastic deviations do not smear the recorded position beyond detector resolution.

The {\bf (RM)} contribution to the recorded coordinates may be written schematically as
\begin{equation}
\label{RM-increments}
\Delta a_{\rm RM}\sim \xi_a\sqrt{dt},
\qquad
\Delta p_{\rm RM}\sim \xi_p\sqrt{dt},
\end{equation}
where \(\xi_a\) and \(\xi_p\) are zero-mean random variables whose variances are determined by the diffusion coefficients \(D_a\) and \(D_p\) of the induced diffusion on 
\(M_{1,1}^\sigma\), respectively.
 Thus over a return interval \(\Delta t\), the stochastic spread in position is of order
\begin{equation}
\label{position-spread-rm}
\operatorname{Var}(\Delta a_{\rm RM})\sim D_a\,\Delta t.
\end{equation}
The recorded motion is Newtonian on the resolution scale provided
\begin{equation}
\label{diffusion-resolution-condition}
D_a\,\Delta t\ll\sigma^2,
\end{equation}
where \(\Delta t\) is the typical time between returns to \(\widetilde M_1^\sigma\).

\begin{theorem}[Stroboscopic Newtonian motion]
\label{thm:stroboscopic-newton}
Suppose that the {\bf (RM)} step time and step variances are chosen so that:
\begin{enumerate}
\item the free and {\bf (RM)} contributions separate on the \(\sigma\)-localized sector, i.e.
\[
P_\sigma[\widehat h,\widehat h_{\rm RM}]P_\sigma=O(\varepsilon),
\qquad
\varepsilon\ll1;
\]
\item the deterministic tangent displacement and spreading during one interaction window satisfy
\[
\frac{v\,dt}{\sigma}\ll1,
\qquad
\frac{dt}{T_{\rm spr}}\ll1;
\]
\item the {\bf (RM)}-induced stochastic position spread between returns satisfies
\[
D_a\,\Delta t\ll\sigma^2.
\]
\end{enumerate}
Then the sequence of recorded positions obtained by conditioning on returns to \(\widetilde M_1^\sigma\) is concentrated around the Newtonian trajectory on the resolution scale \(\sigma\). In this sense, the combined Schr\"odinger--{\bf (RM)} process yields stroboscopic Newtonian motion.
\end{theorem}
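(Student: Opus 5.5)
The plan is to treat the combined evolution over one return interval as a composition of many alternating factors, and to control the recorded coordinate by splitting its increment into a deterministic Newtonian part and a stochastic {\bf (RM)} part. First, hypothesis 1 is used to justify the splitting \eqref{alternating-evolution} on the \(\sigma\)-localized tube. By a Trotter-type estimate, the error per window is governed by \(P_\sigma[\widehat h,\widehat h_{\rm RM}]P_\sigma\,dt^2/\hbar^2\). Summing over the \(N=\Delta t/dt\) windows of a return interval, the accumulated error is \(O(\varepsilon\,\Delta t\,dt)\), which is negligible on the scale \(\sigma\). Hence, between consecutive records \(a_{k-1}\) and \(a_k\), the state evolves by alternating free steps \(e^{-i\widehat h dt/\hbar}\) and independent GUE steps.

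Second, consider the free steps while the state remains near \(M_{1,1}^\sigma\). Use the orthogonal decomposition \eqref{COM-decomposition} together with Theorem~\ref{thm:tangent-newton}. The tangent part moves \((a,p)\) along the Hamiltonian vector field \eqref{Newtonian-tangent-flow}, and hypothesis 2 guarantees that each step is a small displacement, \(v\,dt\ll\sigma\). The normal part \(X_\perp\) has size \(\hbar/(M\sigma^2)\) by \eqref{spreading-scale}. By the second condition in hypothesis 2, its contribution per window is \(dt/T_{\rm spr}\ll1\), so it only perturbs the width coordinate \(s\) slightly and does not alter the mean \(a\) at leading order. Composing these tangent steps gives a discrete Euler scheme for \eqref{Newtonian-tangent-flow}. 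Its output at \(t_k\) is \(a_{\rm N}(t_k)+o(\sigma)\), provided \(V\) is smooth on the relevant region.

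Third, consider the GUE steps. By Theorem~\ref{thm:RM-Brownian-on-M} their tangential components are independent, centered, isotropic Gaussian increments of the form \eqref{RM-increments}. Because they have zero mean, they do not shift the conditional expectation, which gives the first condition in \eqref{stroboscopic-condition}. Because they are independent, their variances add, so \(\operatorname{Var}(a_k\mid a_{k-1})\approx D_a\Delta t\) up to the cross terms coming from the drift. Hypothesis 3 then gives \(\operatorname{Var}(a_k\mid a_{k-1})\ll\sigma^2\). The momentum diffusion \(D_p\) feeds into \(a\) only through the drift \(p/M\), contributing a variance of order \(D_p\Delta t^3/M^2\). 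I would either absorb this into the definition of \(D_a\) or note that it is suppressed by the large mass \(M\). Finally, a Chebyshev bound converts the variance estimate into concentration of \(a_k\) around \(a_{\rm N}(t_k)\) at scale \(\sigma\). Induction over \(k\), conditioning on each record in turn, extends the statement to the whole sequence.

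The main obstacle is the interplay between the return time and the excursions. The state is only guaranteed to be near \(M_{1,1}^\sigma\), where the Newtonian decomposition is valid, at the record times. During an excursion with \(s>0\), the tangent drift is not literally given by Theorem~\ref{thm:tangent-newton}. Moreover, by the Sparre Andersen law \eqref{Sparre-Andersen}, \(\Delta t\) is heavy-tailed and has no finite mean. My first approach would be to truncate: choose \(n_0\) with \(1/\sqrt{\pi n_0}\) small and work on the event \(\tau_{\rm ret}\le n_0\), interpreting \(\Delta t\) in hypothesis 3 as \(n_0\,dt\). On this event, the excursion in \(s\) is at most of order \(\sqrt{n_0}\) times the step size. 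I would then argue that the mean position \(\mu_z\) still evolves by Ehrenfest's theorem, \(d\mu_z/dt=\langle p\rangle/M\), independently of the width. This gives Newtonian drift of the mean even off the submanifold, with an error controlled by \(\sigma\) and the smoothness of \(V\).
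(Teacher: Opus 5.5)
Your proposal is correct at the paper's level of rigor and follows the same route as the paper's sketch: the commutator hypothesis justifies alternating free and \textbf{(RM)} steps, and the tangent part of the free flow supplies the Newtonian drift while per-window drift and spreading are negligible. The zero-mean \textbf{(RM)} tangential increments then accumulate a variance of order $D_a\Delta t\ll\sigma^2$ between returns. Your additional points refine rather than change this argument: truncating the heavy-tailed return time (the paper does this implicitly through $T_{\rm ren}$ in Section~\ref{sec:estimates}), the $D_p\Delta t^3/M^2$ contribution, and Ehrenfest control of excursions. One step should be made explicit: conditioning on the $s$-return event leaves the mean of $a$ unbiased only because the $\tau$- and $s$-increments are independent, as stated in Section~\ref{sec:rm}; zero mean alone is not enough.
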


\begin{proof}[Sketch of proof]
Between {\bf (RM)} steps, the tangent component of the Schr\"odinger flow on \(\widetilde M_{1,1}^{\sigma}\) is Newtonian. The first two assumptions imply that, during each environmental interaction window, the deterministic tangent displacement is below resolution and the orthogonal spreading is negligible. The commutator estimate allows the total evolution to be treated as alternating free drift and {\bf (RM)} steps.

The {\bf (RM)} contribution has zero mean in the tangent coordinates and variance controlled by \(D_a\) and \(D_p\). Conditioning on returns to the localized sector produces recorded positions whose conditional mean follows the deterministic Newtonian drift, while their variance is bounded by the accumulated {\bf (RM)} diffusion between returns. The condition \(D_a\Delta t\ll\sigma^2\) ensures that this variance remains below detector resolution. Therefore the recorded positions form a stroboscopic sequence concentrated around the Newtonian trajectory.
\end{proof}

This theorem separates the mathematical and physical parts of the argument. Mathematically, if the time step, step variance, and return-time conditions hold, the conditioned process yields stroboscopic Newtonian motion. The remaining question is whether realistic environmental interactions for macroscopic bodies provide such parameters. This will be addressed in the next section.

\section{Physical estimates}
\label{sec:estimates}

The previous section separated the mathematical question from the physical one. Mathematically, the combined Schr\"odinger--{\bf (RM)} process yields stroboscopic Newtonian motion whenever the drift, spreading, diffusion, and return-time estimates satisfy the inequalities stated in Theorem~\ref{thm:stroboscopic-newton}. We now show that these inequalities can be satisfied for physically reasonable parameters. We first consider macroscopic bodies and then microscopic measurements.

\subsection{Macroscopic bodies}
\label{macro}

We now specialize the preceding discussion to a macroscopic body monitored by
its environment. The relevant interaction events may be taken to be scattering
events with environmental molecules, or with ambient radiation. These events
occur over very short time intervals and repeatedly transfer position
information to the surroundings. The resulting stochastic contribution is
modeled by the \({\bf (RM)}\) term, while the ordinary Schr\"odinger Hamiltonian
provides the deterministic tangent motion.

Consider the center of mass of a macroscopic body of mass \(M\), localized at
position resolution \(\sigma\). For definiteness, take
\begin{equation}
\label{macro-values}
M=10^{-6}\,\mathrm{kg},
\qquad
\sigma=10^{-6}\,\mathrm{m}.
\end{equation}
For air molecules at room temperature, a typical thermal speed is
\(v_{\mathrm{th}}\sim 5\times 10^2\,\mathrm{m/s}\). Taking a molecular
interaction length of order
\[
\ell\sim 10^{-9}\,\mathrm{m},
\]
the duration of a single collision is estimated as
\[
dt\sim \frac{\ell}{v_{\mathrm{th}}}
\sim
\frac{10^{-9}\,\mathrm{m}}{5\times 10^2\,\mathrm{m/s}}
\sim
2\times 10^{-12}\,\mathrm{s}.
\]
Thus environmental molecular collisions naturally lead to interaction windows
of order
\[
\tau\lesssim 10^{-12}\,\mathrm{s}.
\]

The corresponding free spreading time is
\begin{equation}
\label{macro-spreading-time}
T_{\mathrm{spr}}
=
\frac{M\sigma^2}{\hbar}
\sim
\frac{10^{-6}\cdot10^{-12}}{10^{-34}}
\sim
10^{16}\,\mathrm{s}.
\end{equation}
Therefore, during a single environmental interaction window,
\begin{equation}
\label{macro-spreading-ratio}
\frac{\tau}{T_{\mathrm{spr}}}\lesssim 10^{-28}
\end{equation}
is negligible. The orthogonal component of the free Schr\"odinger velocity,
which is responsible for spreading for states on \(M_{1,1}^{\sigma}\),
therefore does not move the state appreciably away from the localized sector
during a single collision.

The tangent drift is also below the resolution scale during such a short
interval. If \(v\) is a typical macroscopic velocity, say \(v\lesssim
1\,\mathrm{m/s}\), then
\begin{equation}
\label{macro-drift-ratio}
\frac{v\,\tau}{\sigma}
\lesssim
\frac{(1\,\mathrm{m/s})(10^{-12}\,\mathrm{s})}{10^{-6}\,\mathrm{m}}
=
10^{-6}.
\end{equation}
Thus the deterministic displacement during a single environmental collision is
\(\sigma\)-invisible. Combining \eqref{macro-spreading-ratio} and
\eqref{macro-drift-ratio}, the free evolution during one interaction window
differs from a scalar action on the dynamically relevant \(\sigma\)-localized
sector only by a small error
\begin{equation}
\label{epsilon-estimate}
\varepsilon
\sim
\frac{v\,dt}{\sigma}
+
\frac{dt}{T_{\mathrm{spr}}}
\lesssim
10^{-6}.
\end{equation}

Let \(P_\sigma\) denote the local tangent projection at the point under
consideration, where the relevant tangent space is that of the set underlying
\(\widetilde M_{1,1}^{\sigma}\) in the Hilbert space. For states in this
localized sector, the above estimate gives, locally and to leading order,
\begin{equation}
\label{free-scalar-action}
\widehat h\psi
=
E_\psi\psi
+
O(\varepsilon),
\end{equation}
where \(E_\psi\) is the scalar by which \(\widehat h\) acts on \(\psi\) to
leading order. Equivalently, over the short interaction interval considered
here, the free evolution acts on the localized sector as an overall phase up to
an error \(O(\varepsilon)\).

Let \(\widehat h_{\mathrm{RM}}\) denote the Hamiltonian in \({\bf (RM)}\). Since
the free evolution is scalar to leading order on the localized sector, its
commutator with the \({\bf (RM)}\) contribution is negligible there. In the
local projected sense just described, this may be written schematically as
\begin{equation}
\label{commutator-estimate-projected}
P_\sigma[\widehat h,\widehat h_{\mathrm{RM}}]P_\sigma
=
O(\varepsilon),
\qquad
\varepsilon\ll1.
\end{equation}
Thus, on the dynamically relevant localized sector, the free Schr\"odinger
evolution and the \({\bf (RM)}\) interaction may be treated, to the required
accuracy, as effectively separable over individual environmental collision
windows. This justifies describing the total evolution as alternating free
Schr\"odinger segments and \({\bf (RM)}\) kicks.

\subsection{Momentum diffusion and stochastic corrections}

The \({\bf (RM)}\) contribution produces stochastic increments in the classical coordinates when the state is close, in the Fubini--Study metric, to the set underlying \(\widetilde M_{1,1}^{\sigma}\). We write the increments over a short environmental kick time \(\tau\) schematically as
\begin{equation}
\label{RM-coordinate-increments}
\Delta a_{\mathrm{RM}}\sim \xi_a\sqrt{\tau},
\qquad
\Delta p_{\mathrm{RM}}\sim \xi_p\sqrt{\tau},
\end{equation}
where \(\xi_a\) and \(\xi_p\) are zero-mean random variables whose variances are determined by the diffusion coefficients \(D_a\) and \(D_p\).

For environmental molecular scattering, the momentum diffusion coefficient may be estimated from the collision rate and the typical momentum transfer per collision. If \(\Gamma\) denotes the collision rate and \(q\) the characteristic momentum transfer in one collision, then
\begin{equation}
\label{Dp-estimate}
D_p\sim \Gamma q^2.
\end{equation}
For air at room temperature and for a millimeter-scale body, representative values are
\[
\Gamma\sim 10^{22}\,\mathrm{s}^{-1},
\qquad
q\sim 2.5\times 10^{-23}\,\mathrm{kg\,m/s}.
\]
Thus
\begin{equation}
\label{Dp-numerical-estimate}
D_p
\sim
\Gamma q^2
\sim
10^{22}\left(2.5\times10^{-23}\right)^2
\sim
6\times10^{-24}\,\mathrm{kg^2m^2/s^3}.
\end{equation}

For a single environmental kick time \(\tau\sim10^{-12}\,\mathrm{s}\), the accumulated stochastic momentum increment is
\begin{equation}
\label{single-kick-momentum}
\Delta p_{\mathrm{RM}}(\tau)
\sim
\sqrt{D_p\tau}
\sim
\sqrt{(6\times10^{-24})(10^{-12})}
\sim
2.5\times10^{-18}\,\mathrm{kg\,m/s}.
\end{equation}
For \(M=10^{-6}\,\mathrm{kg}\) and \(v\sim1\,\mathrm{m/s}\), the classical momentum is
\[
p_{\mathrm{cl}}=Mv\sim10^{-6}\,\mathrm{kg\,m/s}.
\]
Hence, over a single kick,
\begin{equation}
\label{single-kick-momentum-ratio}
\frac{\Delta p_{\mathrm{RM}}(\tau)}{p_{\mathrm{cl}}}
\sim
\frac{2.5\times10^{-18}}{10^{-6}}
\sim
10^{-12}.
\end{equation}
This agrees with the estimate that an individual environmental kick is negligible compared with the macroscopic Newtonian momentum.

The corresponding position diffusion coefficient for the center of mass is of order
\begin{equation}
\label{Da-estimate}
D_a\sim \frac{D_p}{M^2}.
\end{equation}
For \(M=10^{-6}\,\mathrm{kg}\), this gives
\begin{equation}
\label{Da-numerical-estimate}
D_a
\sim
\frac{6\times10^{-24}}{10^{-12}}
\sim
6\times10^{-12}\,\mathrm{m^2/s}.
\end{equation}
Thus, for a macroscopic body, the large value of \(M\) strongly suppresses the induced position diffusion.

Over a renewal interval \(T_{\mathrm{ren}}\), the accumulated stochastic momentum correction has size
\begin{equation}
\label{stochastic-momentum-renewal}
\Delta p_{\mathrm{stoch}}(T_{\mathrm{ren}})
\sim
\sqrt{D_pT_{\mathrm{ren}}}.
\end{equation}
Using the renewal time scale estimated in the next subsection,
\[
T_{\mathrm{ren}}\sim 3\times10^{-4}\,\mathrm{s},
\]
one obtains
\begin{equation}
\label{stochastic-momentum-numerical}
\Delta p_{\mathrm{stoch}}(T_{\mathrm{ren}})
\sim
\sqrt{(6\times10^{-24})(3\times10^{-4})}
\sim
4\times10^{-14}\,\mathrm{kg\,m/s}.
\end{equation}
Therefore
\begin{equation}
\label{momentum-correction-small}
\frac{\Delta p_{\mathrm{stoch}}(T_{\mathrm{ren}})}{p_{\mathrm{cl}}}
=
\frac{\sqrt{D_pT_{\mathrm{ren}}}}{Mv}
\sim
\frac{4\times10^{-14}}{10^{-6}}
\sim
4\times10^{-8}
\ll1.
\end{equation}
Thus even over a full renewal interval, the accumulated stochastic momentum correction remains negligible compared with the macroscopic Newtonian momentum.

Similarly, the position variance accumulated between successive returns to the localized sector is
\begin{equation}
\label{position-variance-between-returns}
\operatorname{Var}(\Delta a_{\mathrm{RM}})
\sim
D_aT_{\mathrm{ren}}.
\end{equation}
With the same representative values,
\begin{equation}
\label{position-variance-numerical}
D_aT_{\mathrm{ren}}
\sim
(6\times10^{-12})(3\times10^{-4})
\sim
2\times10^{-15}\,\mathrm{m^2}.
\end{equation}
Thus the corresponding root-mean-square displacement is
\begin{equation}
\label{position-rms-numerical}
\sqrt{D_aT_{\mathrm{ren}}}
\sim
4\times10^{-8}\,\mathrm{m}.
\end{equation}
For the resolution scale \(\sigma=10^{-6}\,\mathrm{m}\), this satisfies
\begin{equation}
\label{position-diffusion-small}
D_aT_{\mathrm{ren}}\ll\sigma^2,
\end{equation}
or equivalently
\[
\sqrt{D_aT_{\mathrm{ren}}}\ll \sigma.
\]
Therefore the conditional distribution of the recorded position remains narrow on the detector resolution scale.

\subsection{Return to the localized sector}

The remaining question is whether the state returns to the localized sector
frequently enough. As in Section~\ref{sec:classicality}, let
\[
s=\ln(\delta_z/\sigma)
\]
be the logarithmic transverse coordinate measuring spread, with \(\sigma\)
fixed and the localized sector represented by
\[
s\le0.
\]
Under \({\bf (RM)}\) dynamics with negligible drift in \(s\), the discrete
process \(s_n\), evaluated at successive environmental kicks, is a symmetric
random walk.

Let \(\tau_{\mathrm{ret}}\) denote the first return time to \(s\le0\), measured
in the number of kicks. By the Sparre Andersen theorem,
\begin{equation}
\label{return-probability}
\mathbb P(\tau_{\mathrm{ret}}>n)
=
\frac{\binom{2n}{n}}{2^{2n}}
\sim
\frac{1}{\sqrt{\pi n}}.
\end{equation}
Thus the probability of not returning decreases as \(n^{-1/2}\). For a sufficiently
small environmental kick time \(\tau\), this gives frequent returns to the
localized sector on macroscopic observational time scales.

For the representative estimates used here, the environmental kick time was estimated above as
\[
\tau\sim 10^{-12}\,\mathrm{s}.
\]
A return probability of \(0.999968\), corresponding to a four-standard-deviation
level, requires
\[
1-\frac{1}{\sqrt{\pi n}}\approx 0.999968,
\]
and hence
\begin{equation}
\label{return-steps-estimate}
n
\approx
\frac{1}{\pi(1-0.999968)^2}
\approx
3.1\times10^8 .
\end{equation}
The corresponding renewal time is therefore
\begin{equation}
\label{return-time-estimate}
T_{\mathrm{ren}}
\sim
n\tau
\sim
(3.1\times10^8)(10^{-12}\,\mathrm{s})
\sim
3\times10^{-4}\,\mathrm{s}.
\end{equation}
Thus, with overwhelming probability, the state returns to the localized sector
on a sub-millisecond time scale. During such a time interval, the deterministic
Newtonian drift remains well defined, while the stochastic position variance
remains below \(\sigma^2\), as estimated above. Consequently, successive
recorded positions form a narrow stochastic band around the Newtonian trajectory.

\subsection{Macroscopic and microscopic regimes}

The estimates above show that the same Schr\"odinger--\({\bf (RM)}\) dynamics
has two distinct limiting regimes. For macroscopic bodies, the relevant
conditions are
\begin{equation}
\label{macro-classical-regime}
\frac{v\,\tau}{\sigma}\ll1,
\qquad
\frac{\tau}{T_{\mathrm{spr}}}\ll1,
\qquad
P_\sigma[\widehat h,\widehat h_{\mathrm{RM}}]P_\sigma=O(\varepsilon),
\quad
\varepsilon\ll1,
\end{equation}
where \(\tau\) is the environmental kick time. These conditions are combined
with
\[
\frac{\Delta p_{\mathrm{RM}}(\tau)}{p_{\mathrm{cl}}}\ll1,
\qquad
\frac{\Delta p_{\mathrm{stoch}}(T_{\mathrm{ren}})}{p_{\mathrm{cl}}}\ll1,
\qquad
D_aT_{\mathrm{ren}}\ll\sigma^2.
\]
Here \(T_{\mathrm{ren}}\) is the high-probability renewal time for return to the
localized sector, estimated above. These inequalities define a combined
drift--diffusion--resolution regime: the deterministic tangent displacement
during a single environmental kick is below resolution, the orthogonal spreading
is negligible, the free and \({\bf (RM)}\) contributions separate to the
required accuracy, and the stochastic corrections remain small between
successive returns to the localized sector.

\begin{theorem}%[Macroscopic classical regime]
\label{thm:macro-estimates}
For macroscopic masses, ordinary localization resolutions, and realistic
environmental interaction times, there exist physically reasonable \({\bf (RM)}\)
step-size and time-step parameters satisfying the conditions of
Theorem~\ref{thm:stroboscopic-newton}. Consequently, the combined
Schr\"odinger--\({\bf (RM)}\) process keeps the state of a macroscopic body
close to the localized sector associated with
\(\widetilde M_{1,1}^{\sigma}\) and produces stroboscopic Newtonian motion on
the resolution scale \(\sigma\).
\end{theorem}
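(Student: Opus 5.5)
The argument is a verification that the hypotheses of Theorem~\ref{thm:stroboscopic-newton} hold, followed by a direct application of that theorem. I will fix the representative values \eqref{macro-values}, $M=10^{-6}\,\mathrm{kg}$ and $\sigma=10^{-6}\,\mathrm{m}$. The environmental kick time will be the molecular collision duration $\tau\sim\ell/v_{\rm th}\lesssim10^{-12}\,\mathrm{s}$, and the scale of the GUE in \textbf{(RM)} will be chosen through Theorem~\ref{thm:RM-Brownian-on-M}. That theorem says the GUE scale fixes the induced diffusion coefficients on $M_{1,1}^\sigma$, so I can set it to reproduce the physical momentum diffusion $D_p\sim\Gamma q^2$. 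Then $D_a\sim D_p/M^2$ follows. This is the sense in which the \textbf{(RM)} step-size and time-step parameters are "physically reasonable": they are tied to measured collision rates and momentum transfers, not chosen ad hoc.

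Next I check the three hypotheses in order. Hypothesis 2 follows from the drift and spreading estimates \eqref{macro-drift-ratio} and \eqref{macro-spreading-ratio}. The drift ratio is $v\tau/\sigma\lesssim10^{-6}$, and the spreading ratio is $\tau/T_{\rm spr}\lesssim10^{-28}$, using $T_{\rm spr}=M\sigma^2/\hbar$ from \eqref{spreading-time}. Hypothesis 1 follows from these two ratios. They imply that over one window the free Hamiltonian acts on the $\sigma$-localized sector as a scalar up to $O(\varepsilon)$, as in \eqref{free-scalar-action}. A scalar commutes with $\widehat h_{\rm RM}$, so the projected commutator is $O(\varepsilon)$ with $\varepsilon\lesssim10^{-6}$. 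Hypothesis 3 requires an estimate of the return interval $\Delta t$. By the isotropy of Theorem~\ref{thm:RM-isotropy}, the width coordinate $s=\ln(\delta_z/\sigma)$ performs a symmetric, driftless walk. The Sparre Andersen bound \eqref{return-probability} then gives a return with probability $1-(\pi n)^{-1/2}$ after $n$ kicks. Requiring this probability to be close to one fixes $n\sim3\times10^8$, so $T_{\rm ren}=n\tau\sim3\times10^{-4}\,\mathrm{s}$. Substituting gives $D_aT_{\rm ren}\sim2\times10^{-15}\,\mathrm{m}^2\ll\sigma^2=10^{-12}\,\mathrm{m}^2$. Along the way I will also record that $\Delta p_{\rm stoch}(T_{\rm ren})/p_{\rm cl}\sim10^{-8}$, so the Newtonian momentum is not disturbed between renewals.

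With all three hypotheses satisfied, Theorem~\ref{thm:stroboscopic-newton} gives concentration of the recorded positions around $a_{\rm N}(t_k)$. The frequent returns established above show that the state repeatedly re-enters the localized sector of $\widetilde M_{1,1}^\sigma$, which is the closeness claim of the theorem. To show the conclusion is not tied to one numerical choice, I will note that the inequalities get better as $M$ grows. The spreading time $T_{\rm spr}$ scales as $M$, and $D_a$ scales as $M^{-2}$ when $D_p$ is held fixed. Hence the conclusion holds for all sufficiently large $M$ at ordinary resolutions $\sigma$.

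The main obstacle is the return step. Sparre Andersen controls only the first return of the $s$-walk, whereas the theorem concerns a joint process in $(a,p,s)$. The argument therefore needs the $s$-increments to be independent of the tangential increments and free of drift. Independence comes from the orthogonality of the $\tau$- and $s$-coordinates together with isotropy. Absence of drift in $s$ comes from the bound $\tau/T_{\rm spr}\ll1$, which makes the spreading term $X_\perp$ negligible over the return interval. I would first verify that the accumulated spreading drift $T_{\rm ren}/T_{\rm spr}\sim10^{-20}$ is negligible against the diffusive $s$-fluctuations, so that the symmetric-walk hitting estimate applies.
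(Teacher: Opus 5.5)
Your proposal follows the paper's proof essentially step for step, and is correct in the same sense and by the same route: hypothesis~2 from \eqref{macro-drift-ratio} and \eqref{macro-spreading-ratio}, hypothesis~1 from the scalar action \eqref{free-scalar-action} and the projected commutator bound \eqref{commutator-estimate-projected}, hypothesis~3 from the return estimate \eqref{return-probability} with $T_{\rm ren}\sim3\times10^{-4}\,\mathrm{s}$ together with \eqref{position-diffusion-small}, and finally an appeal to Theorem~\ref{thm:stroboscopic-newton}. One caution concerns the step you add beyond the paper: $D_a\sim D_p/M^2$ does not follow from fixing the GUE scale via Theorem~\ref{thm:RM-Brownian-on-M}, because isotropy with respect to the induced metric $\frac{da^2}{4\sigma^2}+\frac{\sigma^2}{\hbar^2}\,dp^2$ on $M_{1,1}^\sigma$ forces $\operatorname{Var}(da)/\operatorname{Var}(dp)=4\sigma^4/\hbar^2$ independently of $M$; the paper does not derive these coefficients from the \textbf{(RM)} scale but simply posits $D_p\sim\Gamma q^2$ and $D_a\sim D_p/M^2$ as environmental center-of-mass estimates, so you should drop that claimed link rather than rely on it.
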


\begin{proof}[Sketch of proof]
The drift and spreading estimates 
\eqref{macro-spreading-ratio} and \eqref{macro-drift-ratio} imply that, during individual environmental
interaction windows, the additional effect of the free evolution is negligible,
apart from the Newtonian tangent drift already accumulated between such windows. The commutator estimate
\eqref{commutator-estimate-projected} justifies treating the evolution as
alternating free segments and \({\bf (RM)}\) kicks. The diffusion and return
estimates \eqref{position-diffusion-small} and \eqref{return-probability} show
that stochastic corrections remain below detector resolution between returns to
the localized sector. The conclusion then follows from
Theorem~\ref{thm:stroboscopic-newton}.
\end{proof}

The microscopic measurement regime is different. If the measurement interval
\(\Delta t_{\mathrm{meas}}\) is short enough that the system Hamiltonian is
negligible compared with the \({\bf (RM)}\) interaction, then
\begin{equation}
\label{micro-measurement-rm}
e^{-\frac{i}{\hbar}(\widehat h+\widehat h_{\mathrm{RM}})\Delta t_{\mathrm{meas}}}
\approx
e^{-\frac{i}{\hbar}\widehat h_{\mathrm{RM}}\Delta t_{\mathrm{meas}}}.
\end{equation}
There is then no appreciable Newtonian displacement during the measurement
interval. The state undergoes \({\bf (RM)}\)-induced isotropic diffusion in
projective state space and is recorded in one of the detector-defined
equivalence classes.

\begin{theorem}%[Microscopic measurement regime]
\label{thm:micro-measurement}
If the measurement interaction time is short enough that the contribution of
\(\widehat h\) is negligible compared with \(\widehat h_{\mathrm{RM}}\), then
the measurement process is governed by isotropic diffusion in projective state
space. The probabilities of the detector-defined outcome classes are the Born
probabilities.
\end{theorem}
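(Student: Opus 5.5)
The plan is to reduce the statement to two results already established: the isotropy of the {\bf (RM)}-induced walk (Theorem~\ref{thm:RM-isotropy}) and the identification of the Born probability of an outcome sector with the squared cosine of the Fubini--Study distance to that sector, together with the distance-only extension argument of Theorem~\ref{thm:normal-to-born}.

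\textbf{Step 1: neglect the system Hamiltonian.} Using the hypothesis and the approximation \eqref{micro-measurement-rm}, write the measurement evolution over \(\Delta t_{\rm meas}\) as a product of independent GUE steps. I would bound the error the same way the commutator estimate \eqref{commutator-small} was used in the macroscopic regime. The deterministic displacement is \(\|\widehat h\|\Delta t_{\rm meas}/\hbar\). The {\bf (RM)} displacement is of order \(\sqrt{D\,\Delta t_{\rm meas}}\) in Fubini--Study distance. The first is negligible relative to the second, so the two evolutions agree up to a small Fubini--Study error.

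\textbf{Step 2: isotropic diffusion.} By Theorem~\ref{thm:RM-isotropy}, the resulting process on the finite-dimensional coarse-grained sector is a homogeneous, isotropic random walk. In the small-step limit it becomes diffusion with respect to the Fubini--Study metric. Unitary invariance then implies that the law of the path from an initial state \(\varphi\) is invariant under the stabilizer of \(\varphi\). Consequently, the probability of terminating in an outcome sector \(\mathbb P(\mathcal H_\alpha)\) can depend only on unitary invariants of the pair \((\varphi,\mathcal H_\alpha)\).

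\textbf{Step 3: identify the probabilities.} For a closed subspace, the only such invariant is the distance \(\rho(\varphi,\mathbb P(\mathcal H_\alpha))\), equivalently \(\|P_\alpha\varphi\|^2\) (with the dimensions fixed). Hence \(P(\alpha)=f(\|P_\alpha\varphi\|^2)\) for some function \(f\).

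To pin down \(f\), I would apply Theorem~\ref{thm:normal-to-born}. On \(M_3^\sigma\), matching with the Brownian normal law forces \(f(x)=x\) for a range of \(x\) covering \([0,1]\). As a consistency check, I would use additivity. The sectors \(\mathcal H_\alpha\) partition \(\mathcal H\), and the detector records exactly one class once the walk enters the localized sector \(s\le0\). So, conditional on recording, \(\sum_\alpha f(\|P_\alpha\varphi\|^2)=1\) for every \(\varphi\), which forces \(f\) to be linear, \(f(x)=x\). This is a Gleason-type functional-equation argument restricted to orthogonal decompositions. The result is \(P(\alpha)=\|P_\alpha\varphi\|^2\), the Born probability.

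Applied to equivalence classes, the same reasoning uses the distance to a class, \eqref{distance-to-class}, in place of the distance to the subspace. This is legitimate because the class distances agree at leading order with the Gaussian-representative distances.

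\textbf{Main obstacle: Step 3.} Isotropy alone gives only dependence on distance, not the specific function \(\cos^2\). Two things need care. First, the conditioning on entering \(s\le0\) must be shown not to bias between classes. I would argue this from the orthogonality of the \(\tau\) and \(s\) coordinates and from isotropy: the \(s\)-walk is independent of which \(\tau\)-class is approached. Second, the additivity/linearity step needs at least three orthogonal outcomes, or else the explicit normal-law matching of Theorem~\ref{thm:normal-to-born}. I would try the Theorem~\ref{thm:normal-to-born} route first and use additivity only as a consistency check.
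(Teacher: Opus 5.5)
Your proposal follows the paper's own proof: you drop \(\widehat h\) via \eqref{micro-measurement-rm}, obtain homogeneous isotropic diffusion from Theorem~\ref{thm:RM-isotropy}, and identify the outcome probabilities as Born probabilities through Theorem~\ref{thm:normal-to-born}, and your stabilizer-invariance and function-\(f\) discussion only spells out the ``depends only on Fubini--Study distance'' hypothesis that the paper invokes implicitly. One caution on your side remark: the additivity condition \(\sum_\alpha f(x_\alpha)=1\) by itself forces \(f\) only to be affine, \(f(x)=cx+(1-c)/n\) for \(n\) outcomes (this family includes the uniform law \(c=0\)), so it does not independently confirm \(f(x)=x\) unless you also impose \(f(0)=0\); keep the normal-law matching as the actual step, as the paper does.
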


\begin{proof}[Sketch of proof]
Under the approximation \eqref{micro-measurement-rm}, the measurement evolution
is generated by \(\widehat h_{\mathrm{RM}}\). By
Theorem~\ref{thm:RM-isotropy}, this produces homogeneous isotropic diffusion in
projective state space. By Theorem~\ref{thm:normal-to-born}, the probabilities of
reaching detector-defined equivalence classes are the Born probabilities.
\end{proof}

Thus microscopic measurement and macroscopic classical motion are different
parameter regimes of the same dynamical model. In the microscopic regime, the
\({\bf (RM)}\) term dominates and produces Born-rule state reduction. In the
macroscopic regime, frequent environmental \({\bf (RM)}\) interactions produce
repeated returns to the localized sector, while the tangent component of the
Schr\"odinger flow supplies Newtonian drift. The distinction is therefore not a
difference in fundamental laws, but a difference in regime. We conclude that the
same linear Schr\"odinger dynamics, supplemented by \({\bf (RM)}\), yields
Born-rule state reduction for microscopic particles and Newtonian trajectories
for macroscopic ones.

\section{Standard quantum experiments and paradoxes}
\label{sec:experiments}

We now indicate how the preceding framework applies to several standard quantum experiments and foundational paradoxes. The purpose of this section is not to introduce new assumptions or prove additional theorems. Rather, we use the two ingredients established above---classical submanifolds of projective state space and {\bf (RM)}-induced stochastic unitary evolution with detector-defined equivalence classes---to reinterpret the usual paradoxes in a single language. In each case, the apparent difficulty comes from trying to assign classical properties to states that do not lie on the appropriate classical submanifold, or from identifying measurement outcomes with exact rays rather than with finite-resolution equivalence classes.

\subsection{Measurement and state reduction}

The measurement problem is the question of how linear Schr\"odinger evolution
can yield definite outcomes governed by the Born rule. In the present
framework, measurement is not represented by an additional projection postulate.
Instead, during interaction with a measuring device, the state undergoes
stochastic but unitary evolution generated by the \({\bf (RM)}\) Hamiltonian.
The resulting path in projective state space eventually enters one of the
detector-defined equivalence classes corresponding to an outcome.

The outcome is definite because, at a given measurement event, the state enters
one detector-defined equivalence class. The stochasticity of the \({\bf (RM)}\)
Hamiltonian determines which class is reached, while the homogeneity and
isotropy of the induced random walk give the Born probabilities, as explained
in Section~\ref{sec:rm}. More precisely, the Born probabilities are the
relative probabilities of reaching the possible outcome classes, conditioned on
the state reaching the relevant localized sector. Thus what is traditionally
called collapse is the approach of the state to an operationally defined
outcome sector in projective state space.

This process should be distinguished from the later recording of the outcome.
Recording occurs after the state has entered an appropriate equivalence class
and proceeds through ordinary macroscopic dynamics of the measuring device and
environment. No nonunitary collapse law is added; the underlying evolution
remains unitary for each realization of the random Hamiltonian.

The use of equivalence classes is essential. A detector does not determine an
exact ray in Hilbert space; it determines a finite-resolution outcome class.
This avoids the difficulty that a random path in a high-dimensional state space
would generally have zero probability of hitting a prescribed ray or exact
lower-dimensional submanifold. The relevant probability is the probability of
entering a detector-defined equivalence class, which represents all states
indistinguishable by the apparatus at the given resolution.

A superposition of alternatives does not represent several simultaneously
realized classical configurations. It represents a single point in projective
state space, generally lying away from the relevant classical submanifold. The
corresponding classical property becomes well defined only when the state
enters the appropriate detector-defined equivalence class. Thus the apparent
ambiguity of a superposition arises from assigning classical attributes to a
state that does not lie in the classical sector.

\subsection{Double-slit experiment}

The double-slit experiment asks how a particle can display interference when unobserved by the slits, yet appear localized when measured. In this framework, the question is reformulated in state-space terms. A localized particle state near emission may lie close to the classical submanifold \(\widetilde M_3^\sigma\). After interaction with the slit screen, with both slits open and no which-slit measurement, the state generally leaves \(\widetilde M_3^\sigma\) and evolves through the full projective state space. 
In fact, the state becomes a superposition of two localized states, representing the particle near each slit. The Fubini--Study distance from such a superposition to \(\widetilde M_3^\sigma\) is not small.
In that regime, there is no classical trajectory in \(\mathbb R^3\), and it is not meaningful to say that the particle passed through one slit or both. Instead, the path of the particle leaves the classical-space submanifold \(\widetilde M_3^\sigma\) and passes ``over" the screen in projective state space.

If no which-slit detector is present, the state arriving at the final screen contains coherent contributions from both slits. During propagation, the corresponding state functions spread and overlap, so their superposition contains interference terms. The interaction with the screen then induces \({\bf (RM)}\) evolution, and the state enters one of the localized equivalence classes corresponding to a spot on the screen. Repeating the experiment samples these classes with Born probabilities, producing the interference pattern.

If a which-slit detector is present, the {\bf (RM)} interaction occurs at the slits. The state then enters one of the slit-defined equivalence classes with Born probabilities, and the subsequent evolution is conditioned on the recorded outcome. The later distribution at the screen is no longer the Born distribution of the coherent two-slit state. After which-slit localization, each particle evolves from a single localized state and behaves classically to the relevant accuracy, so repeated runs produce an approximately normal distribution of spots rather than an interference pattern. Interference disappears because the state has already been reduced to a detector-defined class before reaching the screen.

Thus the particle does not need to be described either as passing through both slits as a classical object or as secretly choosing one slit before measurement. Its path is a path in projective state space. When the state lies near \(\widetilde M_3^\sigma\), particle-like localization is meaningful. When the state moves away from this localized sector, wave-like interference phenomena may appear. Thus the distance from \(\widetilde M_3^\sigma\) distinguishes the localized, particle-like regime from the delocalized, wave-like regime.

As explained in Section~\ref{many-p}, the slit screen can be viewed
geometrically as part of the composite system. In a full description, the
Hilbert space is the tensor product of the particle Hilbert space and the
Hilbert space describing the relevant degrees of freedom of the screen. When
both the particle and the screen are in localized classical states, the
corresponding joint state lies on a product classical configuration-space
sector. In the simplest two-body picture this sector is represented,
schematically, by
$
\widetilde M_3^{\sigma_p}\otimes \widetilde M_3^{\sigma_s},
$
where \(\sigma_p\) and \(\sigma_s\) are the resolution scales associated with
the particle and the screen degrees of freedom. At this level, the geometry is
the same as the geometry of two classical objects in a single copy of classical
space.

The slit screen then selects two spatially separated regions through which the
particle state may pass. When the interaction with the screen produces a
superposition of alternatives associated with the two slits, the joint state no
longer lies on the product classical sector
$
\widetilde M_3^{\sigma_p}\otimes \widetilde M_3^{\sigma_s}.
$
Equivalently, the particle component is no longer represented by a single
localized point of \(\widetilde M_3^{\sigma_p}\). The screen remains
macroscopic and localized, but the particle state has moved away from the
classical space submanifold. This is the state-space meaning of the statement
that, with both slits open and no which-slit measurement, the particle does not
pass through one definite slit as a classical object.

\subsection{Cloud-chamber and bubble-chamber tracks}

A cloud-chamber or bubble-chamber track appears to show a microscopic particle
following a classical trajectory. In the present framework, the track is not a
continuous microscopic path in \(\mathbb R^3\). It is a sequence of localized
records produced by repeated measurement-like interactions with the medium.

This situation is analogous to the macroscopic case discussed above. A
macroscopic body in a natural environment is continuously monitored by
surrounding molecules, radiation, and other degrees of freedom. The resulting
frequent environmental interactions keep its state close to the localized
classical sector and produce a stable Newtonian trajectory. In a cloud chamber
or bubble chamber, the measured particle has much smaller mass, but this is
compensated by the properties of the medium: the medium supplies frequent,
localized, and effectively amplifying interactions, such as ionization or
bubble formation. These interactions repeatedly return the state to
detector-defined position classes.

Between such interactions, the particle state evolves in projective state
space. Each sufficiently strong interaction with the medium defines
position-like equivalence classes corresponding to ionization or
bubble-formation events. The \({\bf (RM)}\) mechanism drives the state into one
of these classes, and the subsequent evolution is conditioned on the recorded
event. Repetition produces a sequence
\[
a_1,a_2,\ldots,a_N
\]
of localized records.

Let \(\ell_{\mathrm{rec}}\) denote the typical distance between successive
record-forming events in the medium, and let \(u\) be the speed of the charged
particle. For a visible track, the relevant scale is microscopic; one may take
representative values
\[
\ell_{\mathrm{rec}}\sim 10^{-6}\text{--}10^{-5}\,\mathrm{m},
\qquad
u\sim 10^7\text{--}10^8\,\mathrm{m/s}.
\]
The time between successive localized records is then
\begin{equation}
\label{chamber-record-time}
T_{\mathrm{rec}}
\sim
\frac{\ell_{\mathrm{rec}}}{u}
\sim
10^{-14}\text{--}10^{-12}\,\mathrm{s}.
\end{equation}
Thus, although the particle is microscopic, the medium provides repeated
localization on extremely short time scales.

The relevant comparison is with the free spreading time
\[
T_{\mathrm{spr}}=\frac{m\sigma^2}{\hbar},
\]
where \(m\) is the particle mass and \(\sigma\) is the localization resolution
of the record. For an electron and a proton one finds, for
\(\sigma=10^{-6}\,\mathrm{m}\),
\[
T_{\mathrm{spr}}^{(e)}
\sim
\frac{(9\times10^{-31})(10^{-12})}{10^{-34}}
\sim
10^{-8}\,\mathrm{s},
\]
and
\[
T_{\mathrm{spr}}^{(p)}
\sim
\frac{(1.7\times10^{-27})(10^{-12})}{10^{-34}}
\sim
10^{-5}\,\mathrm{s}.
\]
Consequently,
\begin{equation}
\label{chamber-spreading-ratio}
\frac{T_{\mathrm{rec}}}{T_{\mathrm{spr}}}
\ll 1
\end{equation}
for both electrons and heavier charged particles. The spreading between
successive record-forming interactions is therefore negligible on the
resolution scale of the track.

Between record-forming interactions, the \({\bf (RM)}\) Hamiltonian is absent,
and the state evolves under the ordinary Schr\"odinger Hamiltonian. The only
separation estimate needed is therefore during the short ionization or
bubble-formation event itself, when the \({\bf (RM)}\) interaction is active.
Let \(\tau_{\mathrm{int}}\) denote the duration of such an event. If
\(\ell_{\mathrm{int}}\) is the microscopic interaction length and \(u\) is the
particle speed, then
\[
\tau_{\mathrm{int}}
\sim
\frac{\ell_{\mathrm{int}}}{u}.
\]
Taking
\[
\ell_{\mathrm{int}}\sim 10^{-10}\text{--}10^{-9}\,\mathrm{m},
\qquad
u\sim 10^7\text{--}10^8\,\mathrm{m/s},
\]
gives
\[
\tau_{\mathrm{int}}
\sim
10^{-18}\text{--}10^{-16}\,\mathrm{s}.
\]
During such a short interaction window,
\[
\varepsilon_{\mathrm{ch}}
\sim
\frac{u\tau_{\mathrm{int}}}{\sigma}
+
\frac{\tau_{\mathrm{int}}}{T_{\mathrm{spr}}}
\sim
\frac{\ell_{\mathrm{int}}}{\sigma}
+
\frac{\tau_{\mathrm{int}}}{T_{\mathrm{spr}}}
\ll1.
\]
For a localization resolution on the micron scale, \(\sigma\sim10^{-6}\,\mathrm m\),
the first term is \(10^{-4}\text{--}10^{-3}\), while the second is negligible
relative to the spreading times estimated above. Thus, during the interaction
window, the free Schr\"odinger evolution acts on the localized sector only up
to a projectively irrelevant phase and a negligible error. In the local
projected sense used above,
\[
P_\sigma[\widehat h,\widehat h_{\mathrm{RM}}]P_\sigma
=
O(\varepsilon_{\mathrm{ch}}).
\]
Thus the chamber dynamics may be described as alternating ordinary
Schr\"odinger evolution between records and short \({\bf (RM)}\)-dominated
kicks during record formation.

The deterministic tangent displacement between records is
\[
\Delta a_{\mathrm{cl}}\sim uT_{\mathrm{rec}}\sim \ell_{\mathrm{rec}}.
\]
Thus the observed track is naturally stroboscopic: each record is displaced
from the preceding one by the classical tangent motion during the short time
between interactions. If the recording resolution is comparable to the spacing between successive
record-forming events,
\[
\sigma \sim \ell_{\mathrm{rec}},
\]
with
\[
\ell_{\mathrm{rec}}\sim 10^{-6}\text{--}10^{-5}\,\mathrm{m},
\]
then each segment of the track is resolved as a localized classical record,
while the sequence of records follows the Newtonian trajectory to within the
spatial resolution of the chamber.

The momentum disturbance caused by a single record-forming interaction is also
small compared with the momentum of a typical charged particle in a chamber.
For a particle with momentum in the range
\[
p_{\mathrm{cl}}\sim 10^{-20}\text{--}10^{-19}\,\mathrm{kg\,m/s},
\]
corresponding to relativistic or semi-relativistic charged-particle tracks,
and for an ionization-scale energy transfer
\[
\Delta E\sim 10^1\text{--}10^3\,\mathrm{eV},
\]
the associated momentum transfer is of order
\[
q\sim \frac{\Delta E}{u}
\lesssim 10^{-25}\text{--}10^{-23}\,\mathrm{kg\,m/s}.
\]
Hence
\begin{equation}
\label{chamber-momentum-ratio}
\frac{q}{p_{\mathrm{cl}}}
\ll1.
\end{equation}
Thus the record-forming interactions localize the particle strongly enough to
produce a track, while perturbing the tangent Newtonian motion only weakly from
one record to the next.

%When the drift and diffusion parameters fall in this regime, the records are strongly correlated and lie close to a classical path. The observed track is therefore a stroboscopic record of repeated localization events, not evidence that the microscopic state possessed a sharply defined classical position at every intermediate time. The apparent classicality of the track arises from the same mechanism as macroscopic classicality, but with the role of large mass partly replaced by the high resolving and amplifying power of the surrounding medium.

The estimates in this subsection are only order-of-magnitude estimates. A
detailed chamber-specific calculation would require the ionization or
bubble-formation rate, the localization scale of each record, the momentum
transfer distribution, and the thermodynamic properties of the medium. The
purpose here is to show that the required regime is physically plausible: even
for microscopic particles, sufficiently frequent localized interactions with an
amplifying medium can return the state to the localized sector often enough
that the recorded sequence is effectively Newtonian.

\subsection{Stern--Gerlach measurement}

In a Stern--Gerlach experiment, the magnetic field correlates spin alternatives
with spatially separated wave packets. If the incoming spin state is
\[
\alpha\ket{+}+\beta\ket{-},
\]
the magnetic field produces a state of the form
\[
\alpha\ket{+}\ket{\phi_+}
+
\beta\ket{-}\ket{\phi_-},
\]
where \(\phi_+\) and \(\phi_-\) are directed toward distinct detector regions.

The magnetic field separates the alternatives, but the outcome is produced only
when the state interacts with the detector. At that stage, the two spatial
packets correspond to disjoint detector-defined equivalence classes. The
\({\bf (RM)}\) Hamiltonian acts on the full entangled position--spin state, and
therefore its random step may include components in both the position and spin
degrees of freedom. However, the equivalence classes relevant to the recorded
outcome are not spin equivalence classes. The apparatus records the localized
position of the particle, or the detector region in which it is found, and is
sensitive to spin only through its prior correlation with position.

Thus the same \((\tau,s)\)-description used above applies. The coordinate
\(\tau\) labels the recorded position, while \(s\) measures localization in the
position variable. Motion in spin directions, or in other directions that do
not change the recorded position or the localization width, lies along the
leaves of the corresponding foliation and does not by itself determine the
recorded outcome. The \({\bf (RM)}\)-induced reduction is therefore still a reduction to one of the position-defined detector classes. The Born rule then follows as before, applied to these detector-defined position outcomes.

Consequently, the Stern--Gerlach apparatus converts spin alternatives into
position-like detector outcomes. Since the spatial packets \(\phi_+\) and
\(\phi_-\) are correlated with the spin states \(\ket{+}\) and \(\ket{-}\), the
probabilities of the two detector records are
\[
P_+=|\alpha|^2,
\qquad
P_-=|\beta|^2.
\]
The preferred outcome basis is therefore determined by the measurement
arrangement, which correlates the spin degree of freedom with macroscopically
distinguishable position records.

\subsection{Macroscopic superpositions and Schr\"odinger's cat}

Schr\"odinger's cat paradox raises the question of how quantum theory can allow a macroscopic system to be in a superposition of classically distinct states, such as ``alive" and ``dead."  In the present framework, such a superposition corresponds to a state with components near different macroscopic classical sectors. Under ordinary conditions, a macroscopic system is continually monitored by its environment. Therefore the macroscopic analysis of Section~\ref{sec:estimates} applies to the cat. The \({\bf (RM)}\) mechanism rapidly drives the state into one of the corresponding detector- or environment-defined equivalence classes, while repeated environmental interactions keep returning the state to localized macroscopic sectors.

The cat is not observed in a superposition because the macroscopic degrees of freedom are continually recorded by the environment. These environmental interactions define robust equivalence classes corresponding to stable macroscopic records. Once one such class is reached, subsequent evolution is conditioned on it, and the record persists.
Thus the paradox arises from applying the linear superposition principle without accounting for the \({\bf (RM)}\)-environmental dynamics and the finite-resolution equivalence classes that define macroscopic records. The framework does not require a fundamental nonunitary collapse of the universal state, nor does it require many simultaneously realized cats. It gives a single recorded macroscopic outcome through stochastic unitary dynamics in state space.

\subsection{Measured system and measuring device as a composite system}

In a complete description of a measurement, the measured particle and the
macroscopic measuring device may be treated as a single composite system. This
composite description must agree with the simpler particle-centered description,
in which the measuring device is treated as the source of the \({\bf (RM)}\)
interaction and the induced diffusion is described as acting effectively on the
particle state.

We now show that the \({\bf (RM)}\) walk on the state space of the composite
system provides such a consistent description. The Brownian-lift argument in
Theorem~\ref{thm:Brownian-lift-RM} applies equally to composite systems. Thus
the Gaussian random walk representing Brownian motion of the classical
particle--device pair lifts to a unitary random walk in the projective state
space of the tensor-product Hilbert space. In a finite-dimensional
coarse-grained sector, this lift is again described by \({\bf (RM)}\).
Therefore the random Hamiltonian in \({\bf (RM)}\) may be applied to the full
particle--device state, not only to the state of the particle considered
separately.

Let \(a\) denote the particle position coordinate and \(A\) the relevant
macroscopic pointer, or center-of-mass, coordinate of the device. On the product
of the localized classical sectors \(M_1^{\sigma_p}\otimes M_1^{\sigma_d}\),
Gaussian representatives have the form
\[
g_{a,\sigma_p}\otimes G_{A,\sigma_d},
\]
where \(\sigma_p\) and \(\sigma_d\) are the resolution scales associated with
the particle and device degrees of freedom. For two such representatives,
\[
\left|
\left\langle
g_{a,\sigma_p}\otimes G_{A,\sigma_d},
g_{b,\sigma_p}\otimes G_{B,\sigma_d}
\right\rangle
\right|^2
=
\exp\!\left[
-\frac{(a-b)^2}{4\sigma_p^2}
-\frac{(A-B)^2}{4\sigma_d^2}
\right].
\]
Hence, to leading order, the induced Fubini--Study metric on the product
classical sector is
\begin{equation}
\label{product-metric-particle-device}
ds^2
=
\frac{da^2}{4\sigma_p^2}
+
\frac{dA^2}{4\sigma_d^2}.
\end{equation}

Introduce the Fubini--Study-normalized coordinates
\[
u=\frac{a}{2\sigma_p},
\qquad
v=\frac{A}{2\sigma_d}.
\]
Then \eqref{product-metric-particle-device} becomes
\[
ds^2=du^2+dv^2.
\]
Homogeneous and isotropic Brownian motion on the product classical sector means
that, over a small time interval \(dt\), the increments satisfy
\[
du\sim N(0,D\,dt),
\qquad
dv\sim N(0,D\,dt),
\qquad
\mathbb E[du\,dv]=0.
\]
Equivalently, in the original Euclidean coordinates,
\[
\operatorname{Var}(da)=4\sigma_p^2D\,dt,
\qquad
\operatorname{Var}(dA)=4\sigma_d^2D\,dt,
\qquad
\mathbb E[da\,dA]=0.
\]
Thus isotropy is imposed in the orthonormal Fubini--Study coordinates
\((u,v)\), not in the Euclidean coordinates \((a,A)\).

The device scale \(\sigma_d\) is fixed operationally by the Brownian uncertainty
of the macroscopic pointer coordinate over the relevant observation time. Let
\(D_P^{(d)}\) denote the momentum-diffusion coefficient for the device. The
corresponding effective displacement scale of the pointer or center-of-mass
coordinate is suppressed by the large device mass; schematically,
\[
D_A^{(d)}
\sim
\frac{D_P^{(d)}}{M_d^2},
\]
where \(D_A^{(d)}\) denotes the induced position-diffusion scale over the
measurement interval. Thus, over the measurement time, the device remains
localized within its own macroscopic equivalence class.

Under these assumptions, the Brownian-lift argument applies exactly as in
Theorem~\ref{thm:Brownian-lift-RM}. The classical product sector
\(M_1^{\sigma_p}\otimes M_1^{\sigma_d}\) is a real submanifold of the
projective state space of the tensor-product Hilbert space. If the unitary lift
of the Gaussian random walk on this sector is required to be homogeneous and
isotropic in the Fubini--Study metric of the composite projective state space,
then, in each finite-dimensional coarse-grained sector, the corresponding
random Hermitian generator has the centered unitarily invariant Gaussian law.

In this formulation the lift is unique up to the overall diffusion scale. The
resolution parameters \(\sigma_p\) and \(\sigma_d\) fix the metric
identification between Euclidean coordinates and Fubini--Study coordinates,
while the scalar diffusion coefficient \(D\) fixes the variance of the
isotropic Brownian motion in the normalized coordinates \((u,v)\). Once these
data are fixed, the homogeneous and isotropic unitary lift is the \({\bf (RM)}\)
walk on the finite-dimensional coarse-grained composite state space.

We now use this to relate the composite \({\bf (RM)}\) walk to the effective
particle-centered description. For the macroscopic device, the induced Brownian
displacement of the pointer coordinate \(A\) during the measurement interval is
below the resolution \(\sigma_d\). Thus the Brownian motion of the device coordinate remains, with overwhelming
probability, within the same resolution cell corresponding to the recorded
macroscopic value of \(A\). In state-space language, this means that the
device state remains in the same macroscopic equivalence class, denoted
\(\{\Psi\}\), up to corrections below the device resolution.

In more detail, let
\[
\Phi_0=\varphi\otimes\Psi
\]
be an initial particle--device state, where \(\Psi\) is a representative of the
macroscopic device equivalence class \(\{\Psi\}\) associated with the recorded
value of the pointer coordinate \(A\).
 Let \(Q\) denote the coarse-grained
projection onto the device sector represented by the equivalence class
\(\{\Psi\}\), and let \(Q^\perp=I-Q\). Since the induced Brownian motion of the
macroscopic device coordinate remains in \(\{\Psi\}\) with overwhelming
probability during the measurement interval, one has, after one {\bf (RM)}
step,
\[
\|(I\otimes Q^\perp)\Phi_1\|^2\ll1.
\]
Thus the component of \(\Phi_1\) in which the device lies outside its
macroscopic equivalence class has negligible weight. Equivalently,
\[
\Phi_1=(I\otimes Q)\Phi_1+O(\varepsilon),
\qquad
\varepsilon\ll1.
\]
At the level of detector-defined equivalence classes this gives
\[
\{\Phi_1\}\simeq \{\varphi_1\}\otimes \{\Psi\},
\]
up to corrections below the device resolution.

Repeating the same argument for the finite sequence of steps involved in the
measurement process, and choosing the step size small enough that the
accumulated weight outside the device class remains negligible, gives
\[
\{\Phi_k\}\simeq \{\varphi_k\}\otimes \{\Psi\}
\]
at each step \(k\) of the process, until the device records a new macroscopic
outcome class. Thus the full composite \({\bf (RM)}\) dynamics preserves the
effective product form in the operational sense relevant to measurement: the
device remains in its macroscopic record class, while the particle undergoes
the effective stochastic evolution leading to one of the detector-defined
position equivalence classes.

Consequently, the complete composite-system description reduces, at the
operational level, to the particle-centered description. The \({\bf (RM)}\)
Hamiltonian acts on the full particle--device state, but the macroscopic device
remains in its classical record class, while the microscopic particle undergoes
effective \({\bf (RM)}\)-induced reduction to one of the detector-defined
position classes. When the particle state lies on, or sufficiently near, its
localized phase-space manifold, its interaction with the device is described by
the Newtonian tangent dynamics derived above. The device then records the
corresponding classical position through ordinary macroscopic dynamics.

\subsection{Wigner's friend}

The Wigner's friend scenario asks how the friend can record a definite outcome
while Wigner may describe the larger laboratory quantum mechanically. In the
present framework, the relevant system is composite: it includes the microscopic
particle, the measuring apparatus, the friend, and the surrounding laboratory.
The analysis of the particle--device system above applies here as well, with
the friend and laboratory included among the macroscopic degrees of freedom.

Before the friend's measurement, the relevant state may be written schematically as
\[
\Phi_0=\varphi\otimes \Psi_A\otimes \Psi_F\otimes \Psi_L,
\]
where \(\varphi\) is the state of the microscopic particle, while
\(\Psi_A,\Psi_F,\Psi_L\) represent the macroscopic states of the apparatus, the
friend, and the laboratory environment. During the measurement, the
\({\bf (RM)}\) dynamics acts on this full composite state. However, as in the
particle--device case, the macroscopic factors remain in their classical
equivalence-class sectors, while the microscopic alternative being measured is
correlated with one of the detector-defined outcome classes.

If the possible outcomes are labeled by \(c\), then after the friend's
measurement the realized state belongs, at the level of equivalence classes, to
one class of the form
\[
\{g_{c,\sigma_p}\}\otimes
\{\Psi_A^{(c)}\}\otimes
\{\Psi_F^{(c)}\}\otimes
\{\Psi_L^{(c)}\}.
\]
Here \(\{g_{c,\sigma_p}\}\) is the particle outcome class and
\(\{\Psi_A^{(c)}\}\), \(\{\Psi_F^{(c)}\}\), and \(\{\Psi_L^{(c)}\}\) are the
corresponding macroscopic record classes of the apparatus, friend, and
laboratory. The friend's accessible information is precisely this macroscopic
record.

This is analogous to an ordinary classical measurement. If a person measures a
classical particle in a laboratory, the outcome is recorded by the measuring
device and by the person. Adding a friend, or later an outside observer, does
not change the already recorded value; it only correlates additional
macroscopic systems with the same record. The present framework gives the same
structure in state space. The difference is that the microscopic particle is
described quantum mechanically until its state enters one of the
detector-defined equivalence classes, whereas the apparatus, friend, and
laboratory remain in the macroscopic classical regime throughout.

When Wigner subsequently interacts with the laboratory, he becomes part of the
same composite measurement process. If \(\Psi_W\) denotes Wigner's initial
macroscopic state, then the relevant state before his observation is
schematically
\[
\Phi_W=
g_{c,\sigma_p}\otimes
\Psi_A^{(c)}\otimes
\Psi_F^{(c)}\otimes
\Psi_L^{(c)}\otimes
\Psi_W,
\]
at the level of the realized equivalence class. The interaction with the
laboratory correlates Wigner with the already established record, giving
\[
\{g_{c,\sigma_p}\}\otimes
\{\Psi_A^{(c)}\}\otimes
\{\Psi_F^{(c)}\}\otimes
\{\Psi_L^{(c)}\}\otimes
\{\Psi_W^{(c)}\}.
\]
Thus Wigner records the same value \(c\). He does not create a new outcome, nor
does his observation erase or alter the macroscopic record already formed
inside the laboratory.

In this sense, the final situation is again a particle--device system, but with
a larger macroscopic device: the apparatus, friend, Wigner, and laboratory environment together form one large macroscopic record system. The measured
particle is the microscopic component, while the rest of the composite system
remains in the classical equivalence-class regime. Thus the friend and Wigner
do not obtain incompatible facts; they become correlated with the same
state-space process at different stages.

The distinction is between the full unitary description of the enlarged system
and the equivalence-class description of accessible macroscopic records.
Definite records are not observer-relative in the sense of being created by
subjective knowledge. They are objective features of the state entering a
detector- or environment-defined sector. The consistency of the accounts is due
to the fact that macroscopic objects, under {\bf (RM)}, remain in the
classical equivalence-class regime and change only through effectively
classical correlations. Once the measured particle has entered a localized
outcome class, the apparatus, friend, Wigner, and laboratory environment are
correlated with that same class as macroscopic record systems. Thus the
agreement between the friend and Wigner is, at that stage, an ordinary
classical correlation between records.

%%%%

\subsection{EPR correlations and nonlocality}

EPR and Bell-type experiments demonstrate that spatially separated systems can
display correlations stronger than any local hidden-variable model allows. In
the present framework, an entangled pair is represented by a single point in
the joint projective state space, not by two independent points in ordinary
space. As in the particle--device case considered above, the \({\bf (RM)}\)
random walk acts on the joint state, producing a state-space trajectory. The
difference is that here both components of the pair are microscopic, so neither
component is fixed in a macroscopic record class before measurement.

For a two-particle state in one spatial dimension, the Hilbert space is
$
L_2(\mathbb R)\otimes L_2(\mathbb R),
$
and the relevant projective state space is the projectivization of this
tensor-product Hilbert space. The classical configuration-space sector
$
\widetilde M_1^{\sigma_1}\otimes \widetilde M_1^{\sigma_2}
$
is represented by products of localized equivalence classes of the form
\[
\{g_{c,\sigma_1}\}\otimes \{g_{d,\sigma_2}\},
\]
corresponding to the classical situation of two particles with positions \(c\)
and \(d\). Here \(\sigma_1\) and \(\sigma_2\) are the corresponding detector
resolutions.

An entangled initial state is a superposition
of alternatives associated with these classes,
\[
\Phi
=
\sum_k \alpha_k\,
g_{c_k,\sigma_1}\otimes g_{d_k,\sigma_2}.
\]
Such a state does not lie on the classical configuration-space submanifold
$
\widetilde M_1^{\sigma_1}\otimes \widetilde M_1^{\sigma_2},
$
except in the special case in which the two particles already have definite
localized positions.

Measurement corresponds to the joint state entering one equivalence class
associated with a pair of detector outcomes,
$
\{g_{c,\sigma_1}\}\otimes \{g_{d,\sigma_2}\}.
$
The Born rule for the joint state is derived from the {\bf (RM)} walk in
the same way as for a single measured system, but now applied to the joint
projective state space. It gives the usual correlations between the recorded
outcomes. Thus, if the measurement process brings the first particle into the
class \(\{g_{c,\sigma_1}\}\), then the joint state enters the corresponding
product class, and the second particle is recorded in the correlated class
\(\{g_{d,\sigma_2}\}\). The correlated outcome is therefore not produced by a
signal or physical influence propagating through classical space. It follows
from the motion of the joint state toward a single detector-defined equivalence
class of the pair.

Bell's theorem constrains local hidden-variable models formulated in spacetime.
The present framework is not such a model. The correlations arise from the
position of the initial state in state space, from the geometry of the classical
configuration-space submanifold
$
\widetilde M_1^{\sigma_1}\otimes \widetilde M_1^{\sigma_2},
$
and from the Born-rule transition probabilities generated by \({\bf (RM)}\).
The evolution toward
$
\widetilde M_1^{\sigma_1}\otimes \widetilde M_1^{\sigma_2}
$
is unitary and local in time. It is also local in state space: infinitesimal
time increments produce infinitesimal displacements in the Fubini--Study
metric. Relativistic no-signaling is preserved because the marginal
probabilities remain those of standard quantum mechanics, so the correlations
cannot be used to transmit information superluminally.

In the single-particle case, wave-like properties were associated with the
particle's state being away from a classical space manifold, such as
\(\widetilde M_1^{\sigma}\). Similarly, for an entangled pair, what appears as
nonlocality in classical space reflects the fact that the joint state is a
single point in the projective state space of
\(L_2(\mathbb R)\otimes L_2(\mathbb R)\), lying outside the classical
configuration-space submanifold
$
\widetilde M_1^{\sigma_1}\otimes \widetilde M_1^{\sigma_2}.
$
Accordingly, EPR correlations and Bell-inequality violations are reinterpreted
as geometric features of state space, its classical submanifolds, and the
stochastic unitary dynamics generated by \({\bf (RM)}\), rather than as evidence
for nonlocal dynamics in classical space.

\subsection{Preferred basis and classical observables}

The preferred-basis problem is the question of why measurements yield outcomes
in particular bases, especially position, rather than in arbitrary superpositions. In the
present framework, the \({\bf (RM)}\) dynamics itself is homogeneous and
isotropic in projective state space and therefore does not select a preferred
basis by hand. The relevant basis is selected by the measurement arrangement,
by the observable-related classical submanifold, and by the equivalence relation
induced by finite detector resolution.

Position has a distinguished role because macroscopic measuring devices
ultimately record positions or configurations of matter. Other observables are
measured by correlating them with position through the design of the apparatus.
For example, a Stern--Gerlach device correlates spin alternatives with spatially
separated detector regions, and a spectrometer correlates momentum values with
positions on a detection screen. Thus the measured observable is determined by
the apparatus through a mapping from the alternatives of that observable to
detector-defined position sectors.

Equivalently, each measurement defines a family of outcome sectors in state
space. For position measurement, these sectors are the localized equivalence
classes associated with a classical space submanifold, such as
\(\widetilde M_1^\sigma\). For other observables, the corresponding submanifold
is obtained from the physical and mathematical transformation relating that
observable to position records. For instance, the manifold of approximate
momentum eigenstates may be obtained from the position manifold by applying the
Fourier transform.

In this sense, a preferred basis is not a separate postulate. It is determined
by the geometry of the relevant outcome sectors, the design of the measuring
device, and the equivalence classes of states indistinguishable at the detector
resolution. An observable becomes classical precisely when the state reaches
one of the corresponding detector-defined equivalence classes, so that a
macroscopic record can be formed.

\subsection{Quantum Zeno effect}

The quantum Zeno effect is usually described as the inhibition of evolution by
frequent measurements, often formulated in terms of repeated projection. In the
present framework, no separate projection postulate is required to account for
the effect. Repeated measurements correspond to repeated \({\bf (RM)}\)-induced
returns of the state to the same detector-defined equivalence class, or to a
small neighborhood of that class. Each return conditions the subsequent
evolution on the recorded class.

As the interval between measurement interactions decreases, the probability
that the state escapes the corresponding operational neighborhood becomes
small. The effect traditionally attributed to repeated projection is therefore
described as the dynamical stabilization of a state-space path by repeated
stochastic unitary interactions and conditioning on records. In this sense, the
process that returns the state to the equivalence class plays the role usually
assigned to projection.

As explained in the next section, recording does not introduce a new dynamical
law. It only fixes the conditional information used to describe subsequent
evolution. The suppression of transitions follows from the \({\bf (RM)}\)
dynamics, the geometry of the relevant equivalence class, and the repeated
conditioning on recorded returns to that class.

\subsection{Recording}
\label{recording}

It is important to distinguish state reduction from the recording of an
outcome. In the present framework, collapse is the dynamical approach of the
state to a detector-defined equivalence class through the \({\bf (RM)}\)
process. For a position measurement, this means that the state reaches a class
such as \(\{g_{c,\sigma}\}\), or the corresponding phase-space class when
momentum information is also relevant. At that point, the particle has a
well-defined position at the resolution of the measuring device.

The subsequent recording of this fact is not an additional collapse process.
Once the state belongs to a localized position equivalence class, the situation
has entered the classical regime relevant to the detector. Namely, as explained
in Section~\ref{many-p}, the Schr\"odinger dynamics of the system becomes
equivalent to its Newtonian dynamics. The particle, measuring device, and
nearby environment can then interact through ordinary macroscopic dynamics. A
pointer may move, an atom may be ionized, a bubble may form, a scintillation
event may occur, or a stable mark may be produced. These are classical
correlations between an already localized particle state and macroscopic
degrees of freedom of the apparatus.

Thus the detector does not create localization merely by recording the result.
Rather, the \({\bf (RM)}\)-induced state-space dynamics brings the state to a
detector-defined equivalence class, and the detector records that this has
happened. The recording process is then governed by the Newtonian dynamics of
the particle--device--environment system, because the relevant state is already
in, or sufficiently close to, the classical sector.

This also explains why recording is stable. The apparatus is macroscopic and is
continually monitored by its environment. Its state therefore remains in a
macroscopic equivalence-class sector, in close analogy with the dynamical
stabilization described in connection with the quantum Zeno effect, and the
recorded outcome becomes encoded in many correlated degrees of freedom. Once
such a record is formed, subsequent evolution is conditioned on it. The record
supplies the initial data for the next stage of the effective stochastic
evolution, but it does not introduce a new dynamical law.

In this sense, recording is a classical amplification and stabilization of an
outcome that has already been selected dynamically in state space. Collapse is
the approach to the relevant equivalence class; recording is the ordinary
macroscopic process by which membership in that class becomes a persistent
classical fact.

\subsection{Irreversibility and the arrow of time}

Although each realization of the evolution in the present framework is unitary,
the effective description of measurement is irreversible. This irreversibility
does not arise from a fundamental nonunitary collapse law. It arises from the
combination of stochastic state-space dynamics, the use of equivalence classes,
and the conditioning of subsequent evolution on recorded outcomes.

First, the {\bf (RM)}-driven motion is a stochastic motion in projective
state space of large, and in the idealized case infinite, dimension. A typical
realized path explores new regions of state space, and the probability of
reconstructing its past from its later position is negligible. In finite
dimensions, recurrence may occur in principle, but in very large dimensions the
recurrence times are overwhelmingly long. In infinite-dimensional state space,
the probability of recurrence to a given small neighborhood is effectively
absent. The evolution is local in time and unitary along each realization, but
the stochastic sequence of independently drawn Hamiltonians gives a direction
to typical state-space trajectories. In this sense, an arrow of time is already
present at the level of realized {\bf (RM)} paths.

Second, the ensemble used in {\bf (RM)} is the Gaussian Unitary Ensemble.
The complex random Hamiltonians in this ensemble generate unitary evolution
while scrambling phases and directions in state space. Although each short step
is unitary, the sequence of independently drawn Hamiltonians is not a reversible
record of its own past. Equivalently, the Hamiltonians in the ensemble are not,
in general, invariant under time reversal. Reversing a realized trajectory would
require the corresponding reversed sequence of Hamiltonians, which is not
supplied by the forward stochastic dynamics. Thus the random-matrix dynamics
produces irreversibility at the level of typical paths, and already at the
level of individual steps of the walk, even though each step remains unitary.

Third, equivalence classes introduce a finite-resolution coarse-graining.
A detector does not distinguish all vectors in Hilbert space. It distinguishes
only classes of states that agree within the resolution of the apparatus. In
passing from exact rays to detector-defined equivalence classes, microscopic
information about phases and about distinctions between states inside a class
is no longer part of the operational description. The exact state may continue
to evolve unitarily, but the corresponding description in terms of classical
records has discarded information that is not recoverable from the record.

The role of records was described in the preceding section. Recording does not
cause collapse and does not introduce a new dynamical law. Once the state has
entered a position equivalence class, the subsequent formation of a record is
an ordinary macroscopic process governed by Newtonian dynamics. However, after
such a record is formed, the effective description of later evolution is
conditioned on that record. The recorded class becomes the starting point for
the next stage of the stochastic evolution.

This conditioning gives the operational arrow of time. The past is represented
by a sequence of recorded equivalence classes, while the future remains
probabilistic. Later probabilities are computed relative to the records already
formed; they are not computed from a superposition of all counterfactual
unrecorded alternatives. Repeated environmental monitoring amplifies this
asymmetry, since macroscopic records become correlated with many degrees of
freedom and are stabilized by continual return to classical equivalence-class
sectors.

Thus the arrow of time in the framework has a layered origin. The underlying
dynamics remains unitary for each realization, but typical stochastic
state-space paths are not operationally reversible; equivalence classes discard
microscopic distinctions inaccessible to the detector; and records condition
all subsequent effective evolution. The result is an irreversible classical
history emerging from stochastic yet unitary dynamics in projective state
space.

\subsection{Summary}

The standard paradoxes of quantum mechanics are reformulated in this framework
as questions about the relation between state-space dynamics, classical
submanifolds, detector-defined equivalence classes, and records. Superposition
and entanglement occur when the state evolves away from the relevant classical
submanifold rather than being confined to it. Measurement occurs when the
\({\bf (RM)}\) dynamics drives the state into a detector-defined equivalence
class. The preferred basis is fixed not by an additional postulate, but by the
measurement arrangement, the relevant observable-related submanifold, and the
finite-resolution equivalence relation defined by the apparatus. EPR
correlations arise from the geometry of the joint state space rather than from
signals or physical influences propagating in ordinary space.

Classical motion occurs when frequent environmental interactions keep the state
near the localized phase-space submanifold, so that the tangent component of
Schr\"odinger evolution reproduces Newtonian dynamics. Recording is not a
separate collapse process; once the state has entered a localized equivalence
class, the relevant dynamics is effectively classical, and ordinary macroscopic
dynamics correlates this localized state with a stable detector or
environmental record. Repeated monitoring then stabilizes the record, maintains
Newtonian behavior, and gives the effective irreversibility of the observed
classical history.

Thus the same mechanism---unitary Schr\"odinger evolution supplemented by
{\bf (RM)}-induced stochastic dynamics and finite-resolution equivalence
classes---accounts for state reduction, Born probabilities, classical
trajectories, interference, preferred outcome sectors, macroscopic definiteness,
record formation, Zeno stabilization, irreversibility, and nonlocal
correlations without adding nonunitary collapse or observer-dependent rules.

%%%

\section{Relation to existing approaches}
\label{sec:related-approaches}

The framework developed here overlaps with several existing approaches to the
quantum-to-classical transition, but it differs from them in the role assigned
to state-space geometry and to the effective unitary measurement dynamics
formulated on state space. The dynamics is not formulated primarily on classical
configuration space, nor on a reduced density matrix alone, but on projective
Hilbert space \(\mathbb{CP}^{L_2}\) with its Fubini--Study metric. Classical
configuration space and phase space appear as submanifolds, 
and measurement outcomes are represented by finite-resolution equivalence
classes. The conjecture \({\bf (RM)}\) then supplies an effective isotropic
stochastic dynamics on this same state space. This section compares this
structure with several standard approaches.

\subsection{Decoherence and open-system dynamics}

In standard decoherence theory, a system \(S\) coupled to an environment \(E\)
is described by a joint density operator \(\rho_{SE}(t)\) on the tensor-product
Hilbert space \(\mathcal H_S\otimes \mathcal H_E\). The joint system evolves
unitarily,
\[
\rho_{SE}(t)
=
U(t)\rho_{SE}(0)U(t)^\dagger,
\]
where \(U(t)\) is the unitary evolution operator on
\(\mathcal H_S\otimes \mathcal H_E\). The reduced state of the system alone is
obtained by tracing over the environmental degrees of freedom:
\[
\rho_S(t)=\operatorname{Tr}_E\rho_{SE}(t).
\]
Here \(\operatorname{Tr}_E\) denotes the partial trace over
\(\mathcal H_E\).
For suitable system--environment interactions, the reduced density matrix
becomes approximately diagonal in a preferred pointer basis,
\[
\rho_S(t)
\approx
\sum_\alpha p_\alpha |\alpha\rangle\langle \alpha|,
\]
with off-diagonal terms suppressed. This explains the practical disappearance
of interference between macroscopically distinct alternatives and the stability
of localized or pointer-like states \cite{Zurek2003}.

The present framework is compatible with this mechanism but addresses a
different mathematical problem. Decoherence produces an improper mixture in the
reduced density matrix; it does not by itself specify which individual outcome
occurs in a single run. In the present framework, the state follows a stochastic
unitary path in projective state space,
\[
\varphi(t+dt)
=
\exp\!\left(-\frac{i}{\hbar}\widehat h_{\mathrm{RM}}dt\right)\varphi(t),
\]
and the outcome is the detector-defined equivalence class reached by this path.
Thus the object carrying outcome information in the present framework is not
the reduced density matrix alone, but the state-space path together with the
equivalence-class structure determined by detector resolution.
In particular, decoherence helps explain the stability and robustness of
macroscopic records, while the present framework distinguishes this recording
process from the state-space reduction that selects one equivalence class.

At the level of scales, the two descriptions should agree. For example, in
collisional decoherence the reduced density matrix in the position
representation often has the form
\[
\rho(x,x',t)
=
\rho(x,x',0)
\exp[-\Lambda t (x-x')^2],
\]
in a suitable short-distance approximation. The coefficient \(\Lambda\) is
determined by environmental scattering rates and momentum transfers. In the present framework, the same environmental data enter the \({\bf (RM)}\)
diffusion coefficients. For example, Section~\ref{sec:classicality} estimated
the coefficients
\[
D_p\sim \Gamma q^2,
\qquad
D_a\sim \frac{D_p}{M^2},
\]
where \(\Gamma\) is a collision rate and \(q\) is a typical momentum transfer.
Thus decoherence theory and {\bf (RM)} should agree at the level of
environmental diffusion scales, even though they answer different questions:
decoherence explains suppression of interference in \(\rho_S\), while
{\bf (RM)} supplies a stochastic unitary mechanism for reaching one
detector-defined equivalence class.

\subsection{Caldeira--Leggett-type microscopic bath models}

A representative microscopic model is the Caldeira--Leggett model, in which a
distinguished coordinate \(x\) is linearly coupled to a bath of harmonic
oscillators \cite{CaldeiraLeggett1985}. The Hamiltonian has the schematic form
\[
H
=
\frac{p^2}{2M}+V(x)
+
\sum_j
\left(
\frac{p_j^2}{2m_j}
+
\frac{1}{2}m_j\omega_j^2 q_j^2
\right)
-
x\sum_j c_jq_j
+
x^2\sum_j\frac{c_j^2}{2m_j\omega_j^2}.
\]
After tracing out the bath under suitable approximations, one obtains a master
equation for the reduced density matrix of the system. In the high-temperature
Markovian limit, this has the typical structure
\[
\frac{d\rho}{dt}
=
-\frac{i}{\hbar}[H_S,\rho]
-
\frac{i\gamma}{\hbar}[x,\{p,\rho\}]
-
\frac{D}{\hbar^2}[x,[x,\rho]]
+\cdots ,
\]
where \(\gamma\) is a damping coefficient and \(D\) is a diffusion coefficient.
The double commutator suppresses spatial coherences and gives decoherence in
the position basis.

The role of {\bf (RM)} is different. It is not proposed as a rival microscopic
bath Hamiltonian. Rather, it is an effective conjecture about the coarse-grained
action of many short and complicated environmental interactions on the state
itself. In Caldeira--Leggett-type models, the stochastic or dissipative
structure appears after reduction to the system density matrix. In the present
framework, the effective randomness is placed at the level of the Hamiltonian
generating unitary motion in projective state space.
The comparison is therefore between a microscopic bath model that yields
reduced open-system dynamics and an effective state-space diffusion model. The
two should be compatible when the {\bf (RM)} diffusion coefficients are chosen
to match environmental diffusion scales derived from microscopic models.

This is analogous to the relation between Einstein's theory of Brownian motion,
the related Langevin equation, and microscopic molecular mechanics. Einstein's
assumptions determine the diffusion equation without tracking every molecular
collision, while the Langevin equation gives an effective stochastic
description of the same underlying microscopic process. Similarly, {\bf (RM)}
identifies the structural assumptions -- small independent random unitary steps,
homogeneity, and isotropy in projective state space, needed to derive the Born
rule and the macroscopic stochastic process, without requiring a complete
microscopic derivation of every interaction event.

\subsection{Continuous measurement and quantum trajectories}

Continuous-measurement theory describes monitored quantum systems by stochastic
master equations or stochastic Schr\"odinger equations
\cite{WisemanMilburn,JacobsSteck}. For example, continuous measurement of an
observable \(A\) may be described by a conditioned stochastic equation of the
schematic form
\[
d\rho
=
-\frac{i}{\hbar}[H,\rho]\,dt
-
k[A,[A,\rho]]\,dt
+
\sqrt{2k}
\left(
A\rho+\rho A-2\langle A\rangle\rho
\right)dW_t,
\]
where \(dW_t\) is a Wiener increment and \(k\) is the measurement strength.
Equivalently, one may write a stochastic Schr\"odinger equation for the
conditioned pure state. Such conditioned equations are generally nonlinear in
the normalized state and nonunitary at the level of the conditioned system
evolution. They successfully describe measurement records, feedback, and
quantum trajectories.

The present framework has a similar conditioning structure, but it has a
different geometric and dynamical origin and preserves unitary evolution. In continuous-measurement theory, the measured
observable \(A\) is specified in advance, and the stochastic equation is
constructed to describe conditioning on the corresponding measurement record.
In the present framework, the measured classical variables arise from the
geometry of submanifolds such as \(M_3^\sigma\) and \(M_{3,3}^\sigma\), while
the stochasticity is generated by random Hamiltonians:
\[
d\varphi
=
-\frac{i}{\hbar}\widehat h_{\mathrm{RM}}\varphi\,dt,
\qquad
\widehat h_{\mathrm{RM}}\in \mathrm{GUE}.
\]
For each realization of \(\widehat h_{\mathrm{RM}}\), the evolution remains
unitary. The conditioning occurs only when the recording interaction
assigns the state to a detector-defined equivalence class.

Thus the formal role of \(dW_t\) in continuous measurement is replaced here by
isotropic random unitary motion in projective state space. The outcome sectors
are not merely exact eigenspaces of a preassigned operator; they are
finite-resolution equivalence classes determined by the apparatus and often
associated with localized classical submanifolds. This is why the same
stochastic process gives both classical Brownian errors on \(M_3^\sigma\) and
Born-rule probabilities in \(\mathbb{CP}^{L_2}\), without introducing
fundamental nonlinear or nonunitary dynamics.

\subsection{Ehrenfest theorem and semiclassical limits}

The Ehrenfest theorem \cite{Ehrenfest1927} gives
\[
\frac{d}{dt}\langle \widehat x\rangle
=
\frac{1}{m}\langle \widehat p\rangle,
\qquad
\frac{d}{dt}\langle \widehat p\rangle
=
-\langle \nabla V(\widehat x)\rangle ,
\]
where \(\widehat x\) and \(\widehat p\) are the position and momentum
operators, \(m\) is the particle mass, \(V\) is the potential, and
\(\langle\cdot\rangle\) denotes expectation value in the quantum state.
If the state remains sufficiently localized and \(V\) is sufficiently regular,
then
\[
\langle \nabla V(\widehat x)\rangle
\approx
\nabla V(\langle \widehat x\rangle),
\]
and the expectation values approximately satisfy Newton's equations. This is
an important consistency condition, but it does not explain why localization
persists, why a single outcome occurs, or why the Born rule governs outcome
statistics.

The present construction is different. Newtonian dynamics of a particle is obtained by
restricting the action
\[
S[\varphi]
=
\int
\overline{\varphi}
\left(
i\hbar\partial_t-\widehat h
\right)
\varphi\,d^3x\,dt
\]
to \(M_{3,3}^{\sigma}\). The non-potential terms reduce to the classical form,
and the potential term converges to \(V(a,t)\) under continuity of \(V\).
If \(V\) is differentiable, the restricted action yields
\[
\dot a=\frac{p}{m},
\qquad
\dot p=-\nabla V(a).
\]
Equivalently, the tangent component of the Schr\"odinger vector field on
\(M_{3,3}^{\sigma}\) is the Newtonian vector field. In this restricted
dynamics, the commutator form of the Heisenberg equations reduces to the
corresponding Poisson-bracket form of classical Hamiltonian dynamics.
 The result is therefore a
geometric reduction of the dynamics, not merely an approximation for
expectation values.

The persistence of localization is then supplied by the {\bf (RM)}/environmental
mechanism.This is the role of the drift--diffusion--resolution conditions
\[
\frac{v\,dt}{\sigma}\ll1,
\qquad
\frac{dt}{T_{\mathrm{spr}}}\ll1,
\qquad
D_a\Delta t\ll \sigma^2,
\]
discussed in Section~\ref{sec:classicality}. Thus the Ehrenfest theorem is consistent with the framework, but it is not the
source of the explanation.

\subsection{Objective-collapse models}

Objective-collapse models modify the Schr\"odinger equation by adding
stochastic nonlinear terms. In continuous spontaneous localization (CSL) models, for example, the state
vector satisfies an equation of the general form
\[
d|\psi_t\rangle
=
\left[
-\frac{i}{\hbar}H\,dt
+
\sqrt{\lambda}\int d^3x\,
\bigl(N(x)-\langle N(x)\rangle_t\bigr)dW_t(x)
-
\frac{\lambda}{2}\int d^3x\,
\bigl(N(x)-\langle N(x)\rangle_t\bigr)^2dt
\right]
|\psi_t\rangle ,
\]
where \(\lambda\) is a collapse rate, \(N(x)\) is a smeared mass-density
operator, and \(dW_t(x)\) is a classical noise field
\cite{Bassi2003,Bassi2013}.
The nonlinear terms drive localization and produce single outcomes, but the
dynamics is no longer strictly linear unitary Schr\"odinger evolution.

The present framework also produces single outcomes, but the mechanism is
different. The state evolves by
\[
|\psi_{t+dt}\rangle
=
\exp\!\left[
-\frac{i}{\hbar}
\widehat h_{\mathrm{RM}}dt
\right]
|\psi_t\rangle
\]
during an {\bf (RM)} step. For each realization of \(\widehat h_{\mathrm{RM}}\), the
evolution is unitary. Reduction means that the path in projective
state space enters a detector-defined equivalence class; it is not a
fundamental nonunitary transformation of the state vector.

This distinction changes the empirical interpretation of the parameters.
Collapse models predict effects associated with fundamental noise, such as
spontaneous heating or radiation, and are strongly constrained by
interferometry, heating bounds, and radiation bounds. In the present framework,
there is no universal collapse noise acting independently of measurement or
environmental interaction. The empirical content instead lies in the effective
random-matrix character of complex interactions and in the resulting diffusion
scales. Thus the parameters of {\bf (RM)} are environmental and contextual,
rather than universal collapse constants.

\subsection{Coarse-graining and finite-resolution measurements}

Several approaches emphasize that classical behavior may emerge when
measurements are coarse-grained
\cite{KoflerBrukner2007,JeongLimKim2014,Mukherjee2019,NaikPanigrahi2024}.
In such approaches, limited resolution prevents observation of fine quantum
interference structure, and classical-looking statistics may result.
Mathematically, this often means replacing sharp projectors by coarse-grained
projectors or POVM elements,
\[
P_\Delta
=
\int_\Delta |x\rangle\langle x|\,dx,
\qquad
P(\Delta)=\|P_\Delta\psi\|^2.
\]

The present framework incorporates finite resolution more structurally. A
detector outcome is an equivalence class or finite-resolution sector in
projective state space. For a position region \(\Delta\), the corresponding
probability is
\[
P(\Delta)
=
\int_\Delta |\psi(x)|^2\,dx
=
\cos^2\rho\bigl(\psi,\mathbb P(\mathcal H_\Delta)\bigr),
\]
where
\[
\rho\bigl(\psi,\mathbb P(\mathcal H_\Delta)\bigr)
=
\inf_{\chi\in\mathcal H_\Delta,\ \|\chi\|=1}
\rho(\psi,\chi),
\]
and \(\mathcal H_\Delta\) is the subspace of states supported, to the detector
resolution, in the region \(\Delta\). Thus finite resolution is not only an
observational limitation; it determines the outcome sectors to which the {\bf (RM)}
random walk can return with nonzero probability.

This also resolves a dimensionality issue. In a high- or infinite-dimensional
state space, a random path would generally have zero probability of hitting a
prescribed exact ray. But an actual outcome is not an exact ray. It is an
equivalence class or finite-resolution neighborhood. The relevant event is
entrance into that sector, and its probability is computed by the projection
rule above.

\subsection{Center-of-mass classicalization}

Another route to classical behavior derives the classical motion of the center
of mass of a large quantum system through many-particle concentration or
central-limit-type arguments \cite{Caticha2017,Cui2023}. For a system of \(N\)
particles with center-of-mass coordinate
\[
R=\frac{1}{M}\sum_{i=1}^N m_i x_i,
\]
one typically finds that the center-of-mass distribution becomes sharply peaked
as \(N\) grows. Under broad assumptions, the width of the center-of-mass
distribution decreases like
\[
\Delta R\sim \frac{1}{\sqrt{N}},
\]
or by an analogous mass-weighted concentration estimate. This makes the center
of mass increasingly classical.

This is complementary to, but distinct from, the present approach. In our
construction, classicality is not derived primarily from the large-\(N\)
concentration of a configuration-space distribution. Instead, the main object is
the localized phase-space submanifold
and Newtonian dynamics is the tangent component of Schr\"odinger evolution on
this submanifold. Gaussian representatives are convenient and physically
natural, especially for macroscopic center-of-mass states, but the geometric
derivation of the classical action does not require exact Gaussianity. It
extends to equivalence classes of sufficiently localized states.

The central-limit picture nevertheless supports the physical plausibility of
Gaussian representatives in macroscopic regimes. If a macroscopic center of mass
is influenced by many weakly correlated microscopic degrees of freedom, then
smooth approximately Gaussian localization is expected. 
Thus center-of-mass concentration helps explain why Gaussian representatives
are typical, while the present framework explains why Gaussianity is not
structurally essential for the transition to classicality or for measurement
dynamics.

\subsection{Bohmian mechanics}

Bohmian mechanics \cite{Bohm1952I,Bohm1952II} supplements the wave function
\[
\Psi=\Psi(q_1,\ldots,q_N,t)
\]
with actual particle positions \(Q_1(t),\ldots,Q_N(t)\). These positions are
guided by
\[
\dot Q_k(t)
=
\frac{\hbar}{m_k}
\operatorname{Im}
\left.
\frac{\nabla_k\Psi(q_1,\ldots,q_N,t)}
{\Psi(q_1,\ldots,q_N,t)}
\right|_{q_j=Q_j(t)} .
\]
This gives definite trajectories in configuration space, but requires
additional hidden variables and a guiding equation that is nonlocal on
configuration space for entangled states.

The present framework introduces no additional particle positions. The
fundamental object is the state as a point in projective Hilbert space.
Classical positions appear when the state lies in, or is recorded as belonging
to, localized equivalence classes associated with a classical space submanifold.
Thus the classical trajectory is not a hidden path \(Q(t)\) in
\(\mathbb R^3\), but a sequence of records produced by returns of the
state-space path to the classical sector.

\subsection{Everettian approaches}

Everettian approaches \cite{Everett1957,DeWittGraham1973} preserve unitary evolution by interpreting measurement as
branching into multiple realized alternatives,
\[
\sum_\alpha c_\alpha |\alpha\rangle |A_0\rangle
\longrightarrow
\sum_\alpha c_\alpha |\alpha\rangle |A_\alpha\rangle .
\]
The present framework also preserves unitary evolution for each realization of
the dynamics, but it does not interpret all terms in the superposition as
simultaneously realized classical worlds. The \({\bf (RM)}\) Hamiltonian
selects a stochastic state-space path, and an observed outcome is the
detector-defined equivalence class reached by that path. Thus outcome
definiteness is obtained through stochastic unitary dynamics and conditioning,
rather than through branching.

\subsection{Summary}

The comparison can be summarized as follows. Decoherence explains the
suppression of off-diagonal terms in reduced density matrices.
Caldeira--Leggett-type models derive open-system coefficients from microscopic
bath Hamiltonians. Continuous-measurement theory gives stochastic conditioned
equations once the measured observable is specified. Collapse models add
fundamental nonunitary noise. Coarse-graining approaches emphasize finite
resolution, and center-of-mass classicalization explains the concentration of
large systems. Bohmian mechanics introduces additional particle positions guided
by the wave function, while Everettian approaches interpret measurement in
terms of branching into multiple realized alternatives.

The present framework combines different geometric and dynamical ingredients.
Classical space and phase space are submanifolds of projective Hilbert space.
The Schr\"odinger action restricted to the phase-space submanifold becomes the
classical action. Measurement outcomes are finite-resolution equivalence
classes. Finally, \({\bf (RM)}\) supplies isotropic stochastic unitary dynamics
on projective state space. This combination yields, within one framework,
normal distributions for classical measurement errors, Born probabilities for
microscopic measurements, and stroboscopic Newtonian motion for macroscopic
bodies.

\section{Conclusion}
\label{sec:conclusion}

We have proposed a unified geometric and stochastic framework for understanding
measurement, state reduction, and the quantum-to-classical transition. The
geometric part of the framework identifies classical configuration space and
classical phase space with localized submanifolds of projective quantum state
space. The additional dynamical ingredient is the conjecture {\bf (RM)}:
complex interactions with measuring devices and environments are effectively
described, after coarse-graining, by random-matrix Hamiltonians generating
isotropic stochastic unitary motion on projective state space.

The geometric construction explains how classical mechanics is contained in
quantum mechanics. The manifolds \(M_3^\sigma\) and \(M_{3,3}^\sigma\) carry
the Euclidean geometry of the classical configuration space and phase space of
a particle through the metric induced by the Fubini--Study metric. When the
Schr\"odinger action is restricted to \(M_{3,3}^\sigma\), it becomes the
classical action, and the tangent component of the Schr\"odinger vector field
reproduces Newtonian dynamics. Thus Newtonian motion appears not as an
independent postulate, but as the tangent dynamics of quantum states constrained
to the classical phase-space submanifold.

The conjecture {\bf (RM)} explains how measurement and state reduction can
arise within unitary dynamics. It produces homogeneous and isotropic stochastic
motion in projective state space. When restricted to the classical submanifold
\(M_3^\sigma\simeq \mathbb R^3\), this motion becomes ordinary Brownian-type
diffusion of a particle and yields the normal distribution of classical
measurement errors. In the full projective state space, the same isotropy makes
transition probabilities depend only on Fubini--Study distance. Since the
distance-dependent transition law is fixed on \(M_3^\sigma\), where it agrees
with the Born probability for localized states, it extends uniquely to the Born
rule on the full projective state space. The construction extends to arbitrary
systems of particles.

%%%

%%%

Equivalence classes of detector-indistinguishable states play an essential
role. A measurement outcome is not an exact ray in Hilbert space, but a
finite-resolution outcome sector determined by the measuring device. The
probability of an outcome is the probability that, at the recording time, the
{\bf (RM)}-driven state is assigned to the corresponding sector. This avoids
the difficulty of requiring a random path in an infinite-dimensional space to
hit an exact state or exact lower-dimensional submanifold. It also clarifies
why wave-function tails do not prevent single outcomes: what is recorded is
membership in an equivalence class, not pointwise disappearance of the wave
function elsewhere.

The same mechanism yields different physical regimes. For microscopic
measurements, the {\bf (RM)} interaction can dominate over the free
Hamiltonian during the short measurement interval. The state then undergoes
essentially isotropic diffusion in projective state space during the
measurement interval and is recorded in one of the detector-defined outcome
classes, with Born probabilities. For macroscopic bodies, the free
Schr\"odinger flow supplies Newtonian tangent drift, while frequent
environmental {\bf (RM)} interactions continually return the state to a
narrow neighborhood of \(M_{3,3}^\sigma\). Under the
drift--diffusion--resolution conditions derived above, the resulting recorded
positions form a stroboscopic Newtonian trajectory.

The framework therefore gives a single account of several phenomena that are
often treated separately: classical space geometry, Newtonian dynamics, normal
distributions in classical measurement, Born probabilities in quantum
measurement, single outcomes, and the stability of macroscopic classical
behavior. It also gives a common interpretation of standard experiments and
paradoxes. Interference and entanglement occur when the state evolves outside
the relevant classical submanifold. Measurement occurs when the recording
interaction assigns the state to a detector-defined outcome class. Macroscopic
definiteness results from repeated environmental stabilization of classical
sectors. EPR correlations arise from the geometry of joint state space rather
than from signals propagating in classical space.

%%%
The status of {\bf (RM)} is stronger than that of an arbitrary
random-matrix ansatz. Classical measurement errors can be modeled by Brownian
motion arising from a Newtonian bath. At the same time, Newtonian dynamics is
the restriction of Schr\"odinger dynamics to the classical phase-space
submanifold. It follows that the Brownian bath dynamics admits a unitary lift
to projective Hilbert space. According to
Theorem~\ref{thm:Brownian-lift-RM}, in the Gaussian random-walk approximation,
translation invariance of the total step distribution on \(M_3^\sigma\) fixes
the corresponding unitary lift, which is generated by the GUE random
Hamiltonians of {\bf (RM)}. In this sense, {\bf (RM)} is the natural unitary lift of classical Brownian
measurement dynamics, and is unique under the assumptions of
Theorem~\ref{thm:Brownian-lift-RM}.

%%%

This result is also supported by the widespread role of random Hamiltonians and
random matrices in the description of fluctuations in complex quantum systems.
Moreover, the choice of the GUE ensemble is imposed upon us in this framework.
The Gaussian Orthogonal Ensemble would not be sufficient for the present
purpose: GOE is invariant only under real orthogonal transformations and
therefore selects a preferred real structure in Hilbert space. It does not give
the full unitary-invariant distribution of random Hamiltonian directions needed
for isotropic diffusion in complex projective Hilbert space with the
Fubini--Study geometry, as required in the framework.

It is not claimed here that the random-matrix form of the interaction Hamiltonian has already been derived from a complete microscopic model of every measuring apparatus or environment. Rather, the point is that the remaining physical assumption is narrower: complex environmental and measurement interactions must realize, after coarse-graining, the unitary lift of classical Brownian measurement dynamics described by Theorem~\ref{thm:Brownian-lift-RM}. Under this assumption, the GUE form of \({\bf (RM)}\) is fixed rather than chosen independently. The resulting consequences are mathematically precise and physically nontrivial. They include unitary evolution for each realization, single Born-distributed outcomes, and recovery of Newtonian motion for macroscopic systems.
%%%

Several directions remain open. First, one should derive the {\bf (RM)}
form from more microscopic models of environmental interaction, or at least
identify broad dynamical conditions under which GUE statistics emerge. Second,
the parameter estimates should be refined and tested for concrete experimental
systems, including mesoscopic interferometers, optomechanical devices, and
continuous-measurement setups. Third, the extension to relativistic fields and
quantum field theory should be developed systematically, with classical field
configurations and source dynamics represented as appropriate submanifolds of
field-state space. Finally, possible deviations from ideal {\bf (RM)}
behavior may provide additional experimental tests of the proposed mechanism.

The main conclusion is that the classical world need not be added to quantum
mechanics from outside. Classical space, classical phase space, and Newtonian
dynamics arise inside projective quantum state space, while measurement
outcomes and Born probabilities arise from stochastic unitary motion between
detector-defined equivalence classes. If the random-matrix conjecture {\bf (RM)} is
physically correct, then quantum mechanics contains within its own geometry and
unitary dynamics the mechanisms needed for state reduction, classicality, and
the quantum-to-classical transition.

%Numerical simulations of the {\bf (RM)}-induced diffusion, its Brownian restriction to the classical submanifold, and the resulting Born-rule and macroscopic classical regimes will be presented in a separate companion work.

%%%

\section*{Declaration of interest statement}

The author declares no competing interests.

\end{document}